\documentclass[
    aps,
    prl,
    reprint,
    superscriptaddress,
    nofootinbib
    ]{revtex4-2}

\usepackage{cmap}
\usepackage[utf8]{inputenc}
\usepackage[english]{babel}
\usepackage[T1]{fontenc}

\usepackage{amsmath}
\usepackage{mathtools}

\usepackage{amssymb}
\usepackage{bm}

\usepackage{braket}

\usepackage{amsfonts}
\usepackage{dsfont}

\usepackage{algorithmicx}
\usepackage{algpseudocode}
\usepackage{enumitem}
\usepackage{booktabs}
\usepackage{array}

\usepackage[most]{tcolorbox}

\usepackage{keyval}

\usepackage{graphicx}

\usepackage{xcolor}
\definecolor{blueviolet}{rgb}{0.2, 0.2, 0.6}
\definecolor{webgreen}{rgb}{0,.5,0}
\definecolor{webbrown}{rgb}{.6,0,0}

\usepackage[pdftex,
	bookmarks=false,
	colorlinks=true,
	urlcolor=webbrown,
	linkcolor=blueviolet,
	citecolor=webgreen,
	pdfstartpage=1,
	pdfstartview={FitH},
	bookmarksopen=false,
    linktocpage=true,
	]{hyperref}

\usepackage{xurl}
\hypersetup{breaklinks=true}

\algrenewcommand\algorithmicrequire{\textbf{Input:}}
\algrenewcommand\algorithmicensure{\textbf{Output:}}
\algrenewcommand\algorithmicreturn{\textbf{return}}
\algrenewcommand\alglinenumber[1]{\scriptsize\color{black!55}#1}

\newtcolorbox{modernalgobox}[1][]{
    enhanced,
    breakable,
    colback=blueviolet!2,
    colframe=blueviolet!60!black,
    boxrule=0.55pt,
    arc=1.2mm,
    left=7pt,right=7pt,top=6pt,bottom=6pt,
    borderline west={1.4pt}{0pt}{blueviolet!80!black},
    fonttitle=\bfseries,
    coltitle=black,
    attach boxed title to top left={xshift=2mm,yshift*=-\tcboxedtitleheight/2},
    boxed title style={
        colback=blueviolet!12,
        colframe=blueviolet!60!black,
        arc=0.8mm,
        boxrule=0.55pt,
        left=4pt,right=4pt,top=2pt,bottom=2pt
    },
    #1
}

\usepackage{titletoc}
\contentsuse{section}{app}
\contentsuse{subsection}{app}
\contentsuse{subsubsection}{app}

\usepackage{amsthm}
\theoremstyle{plain}
\newtheorem{theorem}{Theorem}[section]
\newtheorem{lemma}[theorem]{Lemma}
\newtheorem{proposition}[theorem]{Proposition}
\newtheorem{corollary}[theorem]{Corollary}

\theoremstyle{definition}
\newtheorem{definition}[theorem]{Definition}
\newtheorem{assumption}[theorem]{Assumption}
\newtheorem{observation}[theorem]{Observation}
\newtheorem{protocol}[theorem]{Protocol}

\theoremstyle{remark}
\newtheorem{remark}[theorem]{Remark}

\DeclareMathOperator{\Tr}{Tr}
\DeclareMathOperator*{\argmax}{arg\,max}

\begin{document}

\title{Learning Mid-circuit Measurement Backaction from Three Repeated Measurements}

\author{Chia-Tung Chu}
	\affiliation{Pritzker School of Molecular Engineering, The University of Chicago, Chicago, IL 60637, USA}
    \affiliation{Chicago Quantum Institute, Chicago, IL 60637, USA}
\author{Su-un Lee}
    \affiliation{Pritzker School of Molecular Engineering, The University of Chicago, Chicago, IL 60637, USA}
    \affiliation{Chicago Quantum Institute, Chicago, IL 60637, USA}
\author{Han Zheng}
    \affiliation{Pritzker School of Molecular Engineering, The University of Chicago, Chicago, IL 60637, USA}
    \affiliation{Chicago Quantum Institute, Chicago, IL 60637, USA}
\author{Senrui Chen}
    \affiliation{Pritzker School of Molecular Engineering, The University of Chicago, Chicago, IL 60637, USA}
    \affiliation{Institute for Quantum Information and Matter, California Institute of Technology, Pasadena, California 91125, USA}
\author{Bibek~Pokharel}
\affiliation{IBM Quantum, T. J. Watson Research Center, Yorktown Heights, 10598, USA}
\author{Alireza~Seif}
\affiliation{IBM Quantum, T. J. Watson Research Center, Yorktown Heights, 10598, USA}
\author{Liang Jiang}
	\affiliation{Pritzker School of Molecular Engineering, The University of Chicago, Chicago, IL 60637, USA}
    \affiliation{Chicago Quantum Institute, Chicago, IL 60637, USA}

\begin{abstract}
    Accurate modeling of mid-circuit measurements (MCMs) is essential for dynamic-circuit operations such as syndrome extraction, measurement-based reset, and the separation of state-preparation and measurement (SPAM) error.
    Unlike terminal measurement, a noisy MCM both produces a classical outcome and alters the incoming quantum state, thereby influencing subsequent circuit operations.
    This makes conventional confusion-matrix or fidelity-level characterization insufficient.
    Here we introduce an efficient, self-consistent protocol for learning a single-qubit Z-twirled MCM instrument, retaining the readout–backaction correlations and excitation–decay asymmetry that are erased in Pauli-error descriptions.
    Remarkably, readout bit strings from only three repeated MCMs on a maximally mixed input determine all learnable parameters of the reduced instrument, up to a single unidentifiable gauge degree of freedom.
    Physicality constraints convert this non-identifiability into narrow, gauge-aware error intervals.
    Implemented on IBM superconducting processors, the learned instrument improves Pauli-observable prediction by ${\sim}100\times$ over a conventional confusion-matrix model and reveals a $T_1$-decay dominated backaction.
    Our protocol provides a compact characterization layer for SPAM error separation, reset optimization, and noise-aware quantum error correction.
\end{abstract}
\maketitle

\begingroup
\renewcommand\thefootnote{}
\footnotetext{Contact author: \href{mailto:ctchu@uchicago.edu}{ctchu@uchicago.edu}}
\addtocounter{footnote}{-1}
\endgroup

\emph{Introduction}---Mid-circuit measurements (MCMs), which are measurements performed during circuit execution, are indispensable components in quantum computation.
They provide the syndrome information and enable measurement-based reset needed for quantum error correction (QEC)~\cite{acharyaQuantumErrorCorrection2025,sundaresanDemonstratingMultiroundSubsystem2023,bluvsteinFaulttolerantNeutralatomArchitecture2026,bluvsteinArchitecturalMechanismsUniversal2025}.
They are also crucial in adaptive quantum circuits~\cite{baumerEfficientLongRangeEntanglement2024,buhrmanStatePreparationShallow2024,piroliApproximatingManybodyQuantum2024,caoMeasurementdrivenQuantumAdvantages2026}, where intermediate measurements can greatly reduce the circuit depth required to prepare long-range correlated or resource states relative to purely unitary circuits without MCMs.
These significant roles have motivated experimental demonstrations of MCMs across multiple platforms~\cite{risteFeedbackControlSolidstate2012,risteInitializationMeasurementSuperconducting2012,corcolesExploitingDynamicQuantum2021,guptaProbabilisticErrorCancellation2024,ivashkovHighFidelityMultiqubitGeneralized2024,lundConstantdepthQuantumImaginary2026}.

The reliable use of MCMs requires accurate characterization of their noise.
Such characterizations not only inform hardware diagnosis~\cite{qasimkhanSeparateEfficientCharacterization2026}, but also enable the co-design of QEC codes and decoders~\cite{bonillaataidesXZZXSurfaceCode2021,sundaresanDemonstratingMultiroundSubsystem2023,tuckettUltrahighErrorThreshold2018, tuckettFaulttolerantThresholdsSurface2020, higgottImprovedDecodingCircuit2023a} and strategies for quantum error mitigation~\cite{caiQuantumErrorMitigation2023,goviaBoundingSystematicError2025,vandenbergProbabilisticErrorCancellation2023}.
However, conventional methods for characterizing terminal measurements, i.e., measurements performed at the end of a quantum circuit, do not faithfully model MCM noise.
These methods typically construct a confusion matrix, which records the probabilities of reporting each measurement outcome conditioned on the ideal input state~\cite{maciejewskiMitigationReadoutNoise2020,nationScalableMitigationMeasurement2021,bravyiMitigatingMeasurementErrors2021,laflammeAlgorithmicCoolingResolving2022,yuEfficientSeparateQuantification2025,qasimkhanSeparateEfficientCharacterization2026}.
Meanwhile, unlike terminal measurements, MCMs are followed by subsequent circuit operations, so their post-measurement state is equally important and cannot be neglected.
Consequently, realistic MCMs display both readout and backaction errors, including measurement-induced excitation or decay, which cannot be captured by a confusion matrix alone.

\begin{figure*}[!htp]
    \centering
    \includegraphics[width=1.00\textwidth]{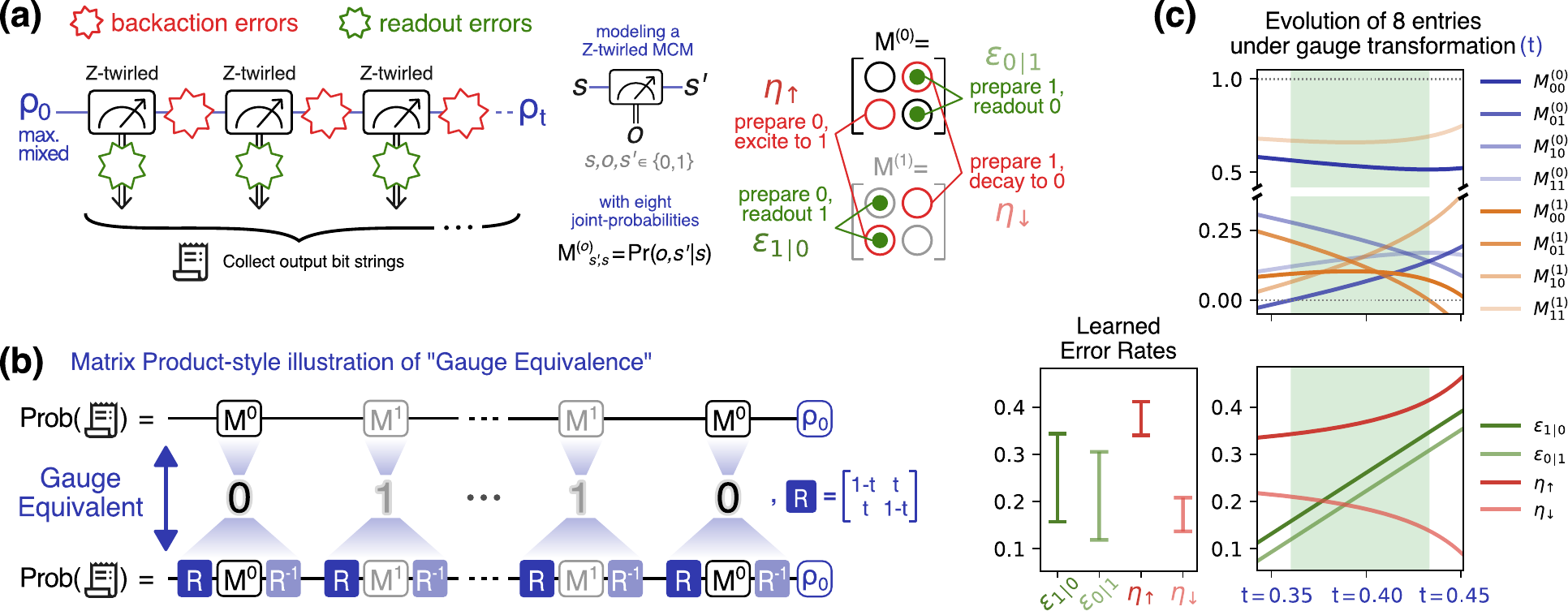}
    \caption{
        \textbf{Learning mid-circuit measurements (MCMs): overview and gauge-aware reporting.}
        \textbf{(a)}~Protocol overview: starting from a maximally mixed single-qubit input, repeated $Z$-twirled MCMs generate outcome bit strings used for learning; green/red markers indicate readout/backaction error channels.
        For a single $Z$-twirled MCM, the effective model is a pair of $2\times 2$ nonnegative matrices $\{M^{(0)},M^{(1)}\}$, containing the 8 joint-probabilities $\Pr(o,s'\! \mid \!s)$ for obtaining the outcome $o$ and post-measurement state $s'$, conditioned on the pre-measurement state $s$.
        The two readout and two backaction error rates are obtained by summing selected entries of $\{M^{(0)},M^{(1)}\}$.
        \textbf{(b)}~Gauge-equivalent matrix-pair representations: the similarity transform $M \mapsto R(t)^{-1} M R(t)$ preserves all readout-string probabilities but redistributes readout/backaction error rates.
        \textbf{(c)}~Simulated example with 70\% population-instrument fidelity (defined in the Supplemental Material): evolution of the eight entries of $\{M^{(0)},M^{(1)}\}$ under gauge sweep.
        Green shading marks the physically allowed gauge band in which all matrix entries remain within the $[0,1]$ interval; the lower panel shows the corresponding derived readout/backaction error rates versus $t$.
        The four gauge-dependent error rates are reported as intervals over the physically allowed gauge range, quantifying unavoidable systematic uncertainty from gauge non-identifiability.
    }
    \label{fig:overview}
\end{figure*}

Therefore, a faithful MCM characterization must capture the correlated statistics of the reported classical outcome and the resulting post-measurement quantum state. Several approaches have been developed to characterize MCM noise, each with its own trade-offs between efficiency and the level of detail captured in the noise model, and can be broadly categorized into two classes.
Firstly, there are methods based on gate set tomography (GST), which reconstruct a self-consistent description of noisy state preparations, gates, and measurements, and can be adapted to quantum instruments such as MCMs~\cite{rudingerCharacterizingMidcircuitMeasurements2022,strickerCharacterizingQuantumInstruments2022,nielsenGateSetTomography2021,jayakumarUniversalFrameworkSimultaneous2024}.
Recently, Wysocki~\emph{et al.}~\cite{wysockiDetailedInterpretableCharacterization2026} proposed and experimentally demonstrated a GST-based method that can exhaustively capture all learnable noise parameters associated with MCMs, which was implemented on a transmon qubit device.
However, GST-based methods typically require a large number of circuits (e.g., 128 GST circuits per qubit in Ref.~\cite{wysockiDetailedInterpretableCharacterization2026}), which can be resource-intensive and may not be practical for larger systems.

On the other hand, there are alternative methods that utilize Pauli twirling or randomized benchmarking to reduce the generic noise model of MCMs to a more compact Pauli-error or fidelity-level model~\cite{gaeblerSuppressionMidcircuitMeasurement2021,harperEfficientLearningQuantum2020,chenLearnabilityPauliNoise2023,chenEfficientSelfConsistentLearning2026,zhangGeneralizedCycleBenchmarking2025,hinesPauliNoiseLearning2025,hothemMeasuringErrorRates2025,goviaRandomizedBenchmarkingSuite2023,shirizlyRandomizedBenchmarkingProtocol2025,bealeRandomizedCompilingSubsystem2023}.
Due to this reduction, these methods are highly efficient to learn, but  they cannot determine the post-measurement state conditioned on the classical outcome, nor retain the distinction between measurement-induced excitation and decay, which are crucial for applications such as repeated syndrome-extraction protocols that reuse the measured qubit.

In this work, we develop and experimentally demonstrate an efficient MCM characterization protocol.
Compared with GST-style reconstructions that require many circuits and nonlinear fitting, the present protocol uses a single length-3 family of repeated-MCM circuits, yields closed-form gauge invariants, and retains readout–backaction correlations that are erased by Pauli-level models.
To this end, we use Pauli $Z$-twirling, rather than full Pauli twirling, to remove coherence-sensitive information while preserving the joint statistics of classical outcomes and post-measurement computational-basis states.
The resulting MCM noise model can be learned self-consistently from the outputs of three rounds of MCMs, and we can identify all learnable parameters associated with the readout and backaction error rates, except for one remaining gauge degree of freedom that keeps all output probabilities invariant.
Furthermore, completely positive (CP) and trace-preserving (TP) constraints significantly limit the range of this gauge parameter, resulting in tight relative precision for all learned error rates.

We demonstrate our protocol on IBM superconducting processors and validate that it successfully captures readout and backaction errors using a circuit depth of only three, bypassing the overhead of full GST. Notably, on independent bit-flip-interleaved circuits, our protocol predicts Pauli observables one to two orders of magnitude more accurately than a conventional confusion-matrix model. Due to its accurate modeling of MCM noise and minimal circuit depth, we believe our protocol is highly practical and readily applicable across diverse quantum platforms that support MCM operations, including neutral-atom and trapped-ion systems.

\emph{Setup and learning protocol}---A general two-outcome MCM instrument assigns a completely positive state-transition map to each classical outcome, where the sum over outcomes remains trace preserving.
We learn its $Z$-twirled reduction shown in Fig.~\ref{fig:overview}(a), where averaging over conjugation by $\{I,Z\}$ eliminates coherences from the noise model while preserving transitions between computational-basis states and readout/post-measurement correlations.
When compiled onto IBM hardware, the $Z$-twirling layers are effectively implemented by independently inserting a Pauli-$Z$ gate immediately before each MCM with probability $1/2$, equivalent to a dephasing channel.
We therefore describe this reduced MCM instrument by its pre-measurement computational-basis state $s\in\{0,1\}$, classical outcome $o\in\{0,1\}$, and post-measurement computational-basis state $s'\in\{0,1\}$.
The effective model is a joint conditional distribution $\Pr(o,s'\! \mid \!s) \equiv p_s^{(o,s')}$ of classical outcome and post-measurement state conditioned on the pre-measurement state, arranged as a pair of $2\times 2$ nonnegative matrices:
\begin{equation}
\label{eq:mcm-matrices}
    M^{(0)} = \begin{pmatrix}
        p_0^{(0,0)} & p_1^{(0,0)} \\
        p_0^{(0,1)} & p_1^{(0,1)}
    \end{pmatrix},\quad
    M^{(1)} = \begin{pmatrix}
        p_0^{(1,0)} & p_1^{(1,0)} \\
        p_0^{(1,1)} & p_1^{(1,1)}
    \end{pmatrix},
\end{equation}
where the TP constraint requires each column of $(M^{(0)}+M^{(1)})$ to sum to one: $\sum_{o,s'} M^{(o)}_{s',s}=1$, under the assumption that the noise model is stationary, Markovian, and leakage-free.

Following this setup, all relevant error rates can be straightforwardly obtained by summing matrix elements. First, the readout error rates $\epsilon_{1|0} = \sum_{s'} M^{(1)}_{s',0}$ and $\epsilon_{0|1} = \sum_{s'}M^{(0)}_{s',1}$ represent the false-$1$ and false-$0$ readout probabilities for pre-measurement states $\ket{0}$ and $\ket{1}$, respectively. Second, the backaction error rates $\eta_{\uparrow} = \sum_o M^{(o)}_{1,0}$ and $\eta_{\downarrow} = \sum_o M^{(o)}_{0,1}$ correspond to the MCM-induced excitation ($0 \to 1$) and decay ($1 \to 0$) probabilities, respectively.

Now, if we apply $L$ copies of this $Z$-twirled MCM to a single qubit initialized in a particular population distribution $\bm{\pi}=(\pi_0,\pi_1)^{\mathsf{T}}$ over computational basis states, the probability of sampling a particular readout bit string $w=w_1\cdots w_L\in\{0,1\}^L$ over all bit strings of length $L$ can be expressed in a compact matrix-product form:
\begin{equation}
\label{eq:string-prob}
    p(w) = \bm{1}^{\mathsf{T}}\, M^{(w_L)} \cdots M^{(w_1)}\, \bm{\pi},
\end{equation}
where $\bm{1}=(1,1)^{\mathsf{T}}$ marginalizes over the final post-measurement basis states.

Next, we show that sampling the readout-string probabilities up to length $3$ is enough to estimate the traces and determinants of $M^{(0)}$ and $M^{(1)}$ from a second-order recurrence relation.
Specifically, we start from the Cayley--Hamilton identity for any $2\times2$ real matrix $A$, $A^2 - \Tr(A)\,A + \det(A)\,I = 0$.
For notational convenience, we define the probability of measuring $n$-repeated zeros and ones by $S_n \coloneqq p(0^n)$ and $T_n \coloneqq p(1^n)$, where digits in the argument of $p(\cdot)$ denote readout bit strings.
Then we obtain
\begin{equation}
\label{eq:recurrence}
\begin{aligned}
    S_{n+2} &= \Tr(M^{(0)})\,S_{n+1} - \det(M^{(0)})\, S_n, \\
    T_{n+2} &= \Tr(M^{(1)})\,T_{n+1} - \det(M^{(1)})\, T_n.
\end{aligned}
\end{equation}
Setting $n=0,1$ and letting $S_0=T_0=1$ give us the closed-form estimator pairs
\begin{equation}
\label{eq:tr-det-m0-main}
\begin{aligned}
    \Tr(M^{(0)}) &=\frac{S_1S_2-S_3}{S_1^2-S_2},
    &
    \det(M^{(0)}) &=\frac{S_2^2-S_1S_3}{S_1^2-S_2}, \\
    \Tr(M^{(1)}) &=\frac{T_1T_2-T_3}{T_1^2-T_2},
    &
    \det(M^{(1)}) &=\frac{T_2^2-T_1T_3}{T_1^2-T_2}.
\end{aligned}
\end{equation}
outside the singular cases $S_1^2=S_2$ and $T_1^2=T_2$.

Ultimately, extracting the defined error rates requires determining the eight entries of the matrix pair $(M^{(0)},M^{(1)})$. As shown above, these can be inferred by sampling the probability distribution of bit strings up to length $3$, provided the initial input state $\bm{\pi}$ to the sampling circuit is known.
Experimentally, however, the only state we can reliably prepare without relying on an externally calibrated input is the maximally mixed state, $\bm{\pi}=\tfrac12\bm{1}$, achieved by applying a completely depolarizing channel to the initial state.
While this makes the protocol robust, it introduces an inevitable degree of freedom in the parametrization of the matrix pair.
This is clear from parameter counting: the $8$ matrix entries are restricted by $2$ independent trace-preserving constraints, alongside $5$ independent constraints extracted from the sampled probability distribution---specifically, the scalars $\{p(0),\,\Tr(M^{(0)}),\,\det(M^{(0)}),\,\Tr(M^{(1)}),\,\det(M^{(1)})\}$.
This yields exactly $8 - 2 - 5 = 1$ unidentifiable degree of freedom.
To better understand the physical nature of this remaining parameter freedom, we next examine it in terms of a gauge transformation parametrized by a continuous variable $t$.

\emph{Gauge structure and physical constraints}---We can explicitly construct this transformation by defining the doubly-stochastic gauge matrix
\begin{equation}
\label{eq:gauge-R}
      R(t) = \begin{pmatrix} 1-t & t \\ t & 1-t \end{pmatrix},\quad \det R(t) = 1-2t \neq 0,
\end{equation}
for $t\in\mathbb{R}\setminus\{\tfrac{1}{2}\}$.
Since $R(t)\bm{1}=\bm{1}$ and $\bm{1}^{\mathsf{T}}R(t)^{-1}=\bm{1}^{\mathsf{T}}$, applying a simultaneous similarity transformation $M^{(o)}\mapsto \widetilde{M}^{(o)} \coloneqq R(t)^{-1}M^{(o)}R(t)$ to the matrix-product model in Eq.~\eqref{eq:string-prob} leaves every bit string's sampling probability unchanged:
\begin{equation}
\label{eq:gauge-invariance}
    \bm{1}^{\mathsf{T}} \widetilde{M}^{(w_L)}\!\cdots\widetilde{M}^{(w_1)} \tfrac12\bm{1}
    = \bm{1}^{\mathsf{T}} M^{(w_L)}\!\cdots M^{(w_1)} \tfrac12\bm{1}.
\end{equation}
As illustrated in Fig.~\ref{fig:overview}(b), the sampled readout-string probability distribution cannot distinguish between matrix parametrizations related by this similarity transform. Conversely, as proved in the Supplemental Material, any two parameterizations of $\{M^{(0)}, M^{(1)}\}$ yielding identical sampling distributions from $\bm{\pi}=\tfrac{1}{2}\bm{1}$ must be related by a unique $R(t)$. The variable $t$ therefore acts as a residual gauge, explicitly absorbing the single unidentifiable degree of freedom left by our five Cayley--Hamilton estimators.

Because gauge-dependent quantities such as the readout ($\epsilon_{1|0}, \epsilon_{0|1}$) and backaction ($\eta_{\uparrow}, \eta_{\downarrow}$) error rates are evaluated using the entries of the transformed matrices $\widetilde{M}^{(o)}(t)$, choosing an arbitrary $t$ would assign them arbitrary values. We strictly bound this ambiguity by enforcing the physical constraints of the reduced MCM instrument: every entry of $\widetilde{M}^{(0)}(t)$ and $\widetilde{M}^{(1)}(t)$ must remain in $[0,1]$. Similar physicality constraints are also used to bound unlearnable degrees of freedom in Clifford-gate Pauli-noise learning~\cite{chenLearnabilityPauliNoise2023} and estimate stochastic Pauli-error rates in randomized MCM layers~\cite{hinesPauliNoiseLearning2025}.
These entrywise conditions restrict $t$ to a computable \emph{physically allowed gauge set} $\mathcal{T}_\mathrm{phys}\subset\mathbb{R}\setminus\{\tfrac{1}{2}\}$, which forms a finite union of intervals. Consequently, any gauge-dependent derived quantity $q(t)$ is reported as a \emph{gauge band}
\begin{equation}
\label{eq:gauge-band}
      [q]_{\mathcal{T}_\mathrm{phys}} \coloneqq [\min_{t\in\mathcal{T}_\mathrm{phys}} q(t),\; \max_{t\in\mathcal{T}_\mathrm{phys}} q(t)],
\end{equation}
quantifying the unavoidable systematic uncertainty intrinsic to the experiment [Fig.~\ref{fig:overview}(c)].

\begin{figure*}[h!t]
    \centering
    \includegraphics[width=1.0\textwidth]{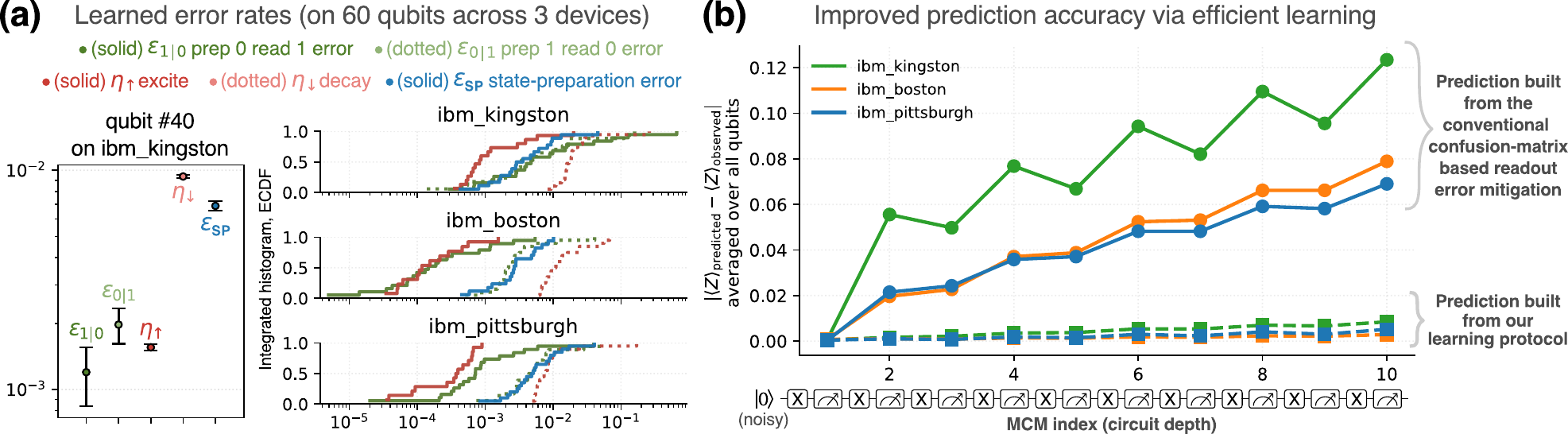}
    \caption{
        \textbf{Experimental results on IBM superconducting processors.}
        \textbf{(a)}~Five types of learned error rates on 60 selected qubits across \texttt{ibm\_boston}, \texttt{ibm\_kingston}, and \texttt{ibm\_pittsburgh} (20 qubits each).
        \textbf{Left panel:} a particular example of error rates derived from the learned instrument associated with qubit \texttt{\#40} on \texttt{ibm\_kingston}.
        Gauge bands are shown as capped bars for each error rate, while the circle markers denote the gauge midpoints, which are then used to build the learned predictors in~(b).
        \textbf{Right panel:} Integrated histogram (empirical cumulative distribution function, ECDF) of learned error rates (reported as the midpoints of their associated gauge bands) across all 20 qubits on each device.
        For all 3 devices, we see the decay rates $\eta_{\downarrow}$ ($1\!\to\!0$) exceed excitation rates $\eta_{\uparrow}$ ($0\!\to\!1$) by one to two orders of magnitude, consistent with spontaneous-emission ($T_1$)-dominated physics.
        \textbf{(b)}~Absolute deviation of the predicted Pauli-$Z$ observable from experiment in independent $X$-interleaved validation circuits not used for learning (averaged over all 20 qubits on each device), comparing \textit{(i)} the full learned-instrument predictor (dashed) and \textit{(ii)} a confusion-matrix-based predictor with no MCM state-update dynamics (solid).
        The full-instrument predictor remains accurate at larger circuit depths, while the confusion-matrix-based predictor deviates by one to two orders of magnitude more as depth increases.
    }
    \label{fig:results}
\end{figure*}

\emph{Experimental results}---We apply the protocol to IBM Heron-generation superconducting processors (\texttt{ibm\_kingston}, Heron~r2; \texttt{ibm\_pittsburgh} and \texttt{ibm\_boston}, Heron~r3; 156 transmon qubits each).
For each device, we select 20 representative qubits distributed across the chip to sample a range of MCM noise magnitudes.
For each selected qubit, we instantiate our learning protocol with a depth-$3$ MCM learning circuit preceded by a completely depolarizing channel.
This channel is implemented via uniform single-qubit Pauli input randomization---prepending one of $\{I,X,Y,Z\}$ gates to the initial noisy $\ket{0}$ state---and executing each of the four Pauli variants with $N=10^6$ shots.
All circuits are transpiled at optimization level~0 with no dynamical decoupling or error suppression, so the learned parameters reflect bare MCM noise.
After sampling the probability distribution of all bit strings up to length-$3$, we extract the five gauge invariants using Eq.~\eqref{eq:recurrence}, reconstruct $(M^{(0)},M^{(1)})$ up to gauge as functions of $t$, and report all gauge-dependent error rates as the gauge bands defined in Eq.~\eqref{eq:gauge-band}.
Further experimental details, including qubit selection, circuit execution, validation predictors, and diagnostic checks, are provided in the Supplemental Material.

To assess whether the learned instrument accurately captures MCM-induced state transitions, we compare it against validation circuits not used during learning, each consisting of $L_{\mathrm{val}}=10$ repeated $[X\text{-gate}\to\text{MCM}]$ blocks.
In these validation circuits, each inserted $X$ is modeled as a separately characterized noisy bit flip on the effective population dynamics.
For each qubit and each MCM index, we compute the predicted marginal probability of outcome $0$, convert it to the Pauli-$Z$ observable $z_\ell = 2p_\ell(0)-1$, and compare it with the observed $z_\ell$ from the validation circuits.
Fig.~\ref{fig:results}(b) reports the absolute deviation of the predicted Pauli-$Z$ observable, averaged over the 20 selected qubits on each device, namely $|z^{\mathrm{pred}}_\ell-z^{\mathrm{obs}}_\ell|$ at each MCM index $\ell$.
Both predictors use the same inferred state-preparation error $\epsilon_{\mathrm{SP}}$, defined as the population of $\ket{1}$ in the noisy hardware preparation of $\ket{0}$, and the same separately characterized bit flip induced by the interleaved $X$ gate.
The solid-line baseline is a confusion-matrix-based predictor obtained by collapsing the learned readout and backaction rates into effective readout errors while removing MCM-induced state-update dynamics; the dashed-line predictor uses the full learned instrument.

On the independent $X$-interleaved validation circuits, the absolute deviation of the predicted Pauli-$Z$ observable is one to two orders of magnitude smaller for the full learned instrument than for the confusion-matrix-based predictor.
Both predictors initially reproduce the same observable at the first-indexed MCM.
However, the confusion-matrix-based predictor loses track of the cumulative disturbance in longer MCM sequences because it has no mechanism for propagating MCM-induced backaction through later circuit steps.

The learned MCM matrices also reveal a clear physical structure in Fig.~\ref{fig:results}(a).
Across qubits and devices, backaction decay rates $\eta_{\downarrow}$ ($1\!\to\!0$) tend to exceed excitation rates $\eta_{\uparrow}$ ($0\!\to\!1$) by one to two orders of magnitude, consistent with spontaneous-emission ($T_1$-decay)-dominated physics within the reduced model.
Moreover, the gauge bands for these error rates are typically narrow ($<10\%$ relative width), indicating that the physical constraints leave less ambiguity than the observed decay-excitation separation for all qubits.
This asymmetry is useful beyond the validation circuits because it reveals hardware-diagnostic information that would be averaged out in a Pauli error model with a single Pauli-X error rate.

If we append more MCMs into the sampling circuit ($L>3$) during learning stage, the Cayley--Hamilton recurrence~\eqref{eq:recurrence} allows us to perform a self-consistency check of the model.
In fact, we verify that the empirical residuals $|\hat{S}_{n+2} - \Tr(M^{(0)})\hat{S}_{n+1} + \det(M^{(0)})\hat{S}_n|$ remain at the shot-noise floor for $n\leq L-2$, confirming the assumptions of time-homogeneity and Markovianity for the MCM noise in our protocol.

\emph{Applications and discussion}---Once $(M^{(0)},M^{(1)})$ is learned, the same instrument can be used for several subsequent characterization and control tasks.
First, the learned instrument supports gauge-aware SPAM error separation.
By preceding a single MCM with a Pauli-axis dephasing map and an appropriate single-qubit rotation, one can estimate the Bloch-vector components along different axes of the noisily prepared initial state.
Second, it supports quantitative comparison of reset strategies.
For a measurement-based reset block consisting of one MCM followed by a conditional $X$, the effective state-transition map can be expressed as
\begin{equation}
\label{eq:reset-rmb-main}
(R_{\mathrm{MB}})_{s',s}=M^{(0)}_{s',s}+M^{(1)}_{1-s',s},
\end{equation}
so its $k$-step fidelity follows directly from powers of $R_{\mathrm{MB}}$.
This deterministic reset can be compared with all-zero heralded post-selection, which can yield higher conditional fidelity at the cost of success probability and retry overhead~\cite{johnsonHeraldedStatePreparation2012,risteInitializationMeasurementSuperconducting2012}.

These results demonstrate that accurate MCM characterization must retain state-transition information beyond mere readout error in classical outcomes.
By capturing the cumulative disturbance induced by MCMs, our reduced-instrument model successfully resolves the nonunital decay--excitation imbalance consistent with relaxation-dominated superconducting hardware.
More broadly, the same self-consistent strategy can be applied to characterizing other nonunital processes, including relaxation and autonomous cooling mechanisms~\cite{boykinAlgorithmicCoolingScalable2002,aamirThermallyDrivenQuantum2025}, where excitation-decay asymmetry can qualitatively alter noisy-circuit behavior, error-correction performance, and simulation complexity~\cite{shtankoComplexityLocalQuantum2025,feffermanEffectNonunitalNoise2024}.

\begin{acknowledgments}
    \emph{Acknowledgments}---The authors thank Argyris Giannisis Manes, Oles Shtanko, Abhinav Deshpande, and Maika Takita for helpful discussions. S.C. acknowledges funding provided by the Institute for Quantum Information and Matter, an NSF Physics Frontiers Center (NSF Grant PHY-2317110).
\end{acknowledgments}

\bibliography{references.bib}

@article{acharyaQuantumErrorCorrection2025,
  title = {Quantum Error Correction below the Surface Code Threshold},
  year = 2025,
  month = feb,
  journal = {Nature},
  volume = {638},
  number = {8052},
  pages = {920--926},
  publisher = {Nature Publishing Group},
  issn = {1476-4687},
  doi = {10.1038/s41586-024-08449-y},
  url = {https://www.nature.com/articles/s41586-024-08449-y},
  copyright = {2024 The Author(s)},
  author = {Acharya, Rajeev and Abanin, Dmitry A. and Aghababaie-Beni, Laleh and others}
}

@article{sundaresanDemonstratingMultiroundSubsystem2023,
  title = {Demonstrating Multi-Round Subsystem Quantum Error Correction Using Matching and Maximum Likelihood Decoders},
  year = 2023,
  month = may,
  journal = {Nature Communications},
  volume = {14},
  number = {1},
  pages = {2852},
  publisher = {Nature Publishing Group},
  issn = {2041-1723},
  doi = {10.1038/s41467-023-38247-5},
  url = {https://www.nature.com/articles/s41467-023-38247-5},
  copyright = {2023 The Author(s)},
  author = {Sundaresan, Neereja and Yoder, Theodore J. and Kim, Youngseok and others}
}

@article{bluvsteinFaulttolerantNeutralatomArchitecture2026,
  title = {A Fault-Tolerant Neutral-Atom Architecture for Universal Quantum Computation},
  year = 2026,
  month = jan,
  journal = {Nature},
  volume = {649},
  number = {8095},
  pages = {39--46},
  publisher = {Nature Publishing Group},
  issn = {1476-4687},
  doi = {10.1038/s41586-025-09848-5},
  url = {https://www.nature.com/articles/s41586-025-09848-5},
  copyright = {2025 The Author(s)},
  author = {Bluvstein, Dolev and Geim, Alexandra A. and Li, Sophie H. and others}
}

@misc{bluvsteinArchitecturalMechanismsUniversal2025,
  title = {Architectural Mechanisms of a Universal Fault-Tolerant Quantum Computer},
  year = 2025,
  month = jun,
  number = {arXiv:2506.20661},
  eprint = {2506.20661},
  primaryclass = {quant-ph},
  publisher = {arXiv},
  doi = {10.48550/arXiv.2506.20661},
  url = {http://arxiv.org/abs/2506.20661},
  archiveprefix = {arXiv},
  author = {Bluvstein, Dolev and Geim, Alexandra A. and Li, Sophie H. and others}
}

@article{baumerEfficientLongRangeEntanglement2024,
  title = {Efficient {{Long-Range Entanglement Using Dynamic Circuits}}},
  year = 2024,
  month = aug,
  journal = {PRX Quantum},
  volume = {5},
  number = {3},
  pages = {030339},
  publisher = {American Physical Society},
  doi = {10.1103/PRXQuantum.5.030339},
  url = {https://link.aps.org/doi/10.1103/PRXQuantum.5.030339},
  author = {B\"aumer, Elisa and Tripathi, Vinay and Wang, Derek S. and others}
}

@article{buhrmanStatePreparationShallow2024,
  title = {State Preparation by Shallow Circuits Using Feed Forward},
  year = 2024,
  month = dec,
  journal = {Quantum},
  volume = {8},
  pages = {1552},
  publisher = {Verein zur F\"orderung des Open Access Publizierens in den Quantenwissenschaften},
  doi = {10.22331/q-2024-12-09-1552},
  url = {https://quantum-journal.org/papers/q-2024-12-09-1552/},
  author = {Buhrman, Harry and Folkertsma, Marten and Loff, Bruno and others}
}

@article{piroliApproximatingManybodyQuantum2024,
  title = {Approximating Many-Body Quantum States with Quantum Circuits and Measurements},
  author = {Piroli, Lorenzo and Styliaris, Georgios and Cirac, J. Ignacio},
  year = 2024,
  month = dec,
  journal = {Physical Review Letters},
  volume = {133},
  number = {23},
  pages = {230401},
  publisher = {American Physical Society},
  doi = {10.1103/PhysRevLett.133.230401},
  url = {https://link.aps.org/doi/10.1103/PhysRevLett.133.230401}
}

@article{caoMeasurementdrivenQuantumAdvantages2026,
  title = {Measurement-Driven Quantum Advantages in Shallow Circuits},
  author = {Cao, Chenfeng and Eisert, Jens},
  year = 2026,
  month = feb,
  journal = {Physical Review Letters},
  volume = {136},
  number = {8},
  pages = {080601},
  publisher = {American Physical Society},
  doi = {10.1103/4b99-xmqn},
  url = {https://link.aps.org/doi/10.1103/4b99-xmqn}
}

@article{risteFeedbackControlSolidstate2012,
  title = {Feedback Control of a Solid-State Qubit Using High-Fidelity Projective Measurement},
  year = 2012,
  month = dec,
  journal = {Physical Review Letters},
  volume = {109},
  number = {24},
  pages = {240502},
  publisher = {American Physical Society},
  doi = {10.1103/PhysRevLett.109.240502},
  url = {https://link.aps.org/doi/10.1103/PhysRevLett.109.240502},
  author = {Rist\`e, D. and Bultink, C. C. and Lehnert, K. W. and others}
}

@article{risteInitializationMeasurementSuperconducting2012,
  title = {Initialization by Measurement of a Superconducting Quantum Bit Circuit},
  year = 2012,
  month = aug,
  journal = {Physical Review Letters},
  volume = {109},
  number = {5},
  pages = {050507},
  publisher = {American Physical Society},
  doi = {10.1103/PhysRevLett.109.050507},
  url = {https://link.aps.org/doi/10.1103/PhysRevLett.109.050507},
  author = {Rist\`e, D. and van Leeuwen, J. G. and Ku, H.-S. and others}
}

@article{corcolesExploitingDynamicQuantum2021,
  title = {Exploiting {{Dynamic Quantum Circuits}} in a {{Quantum Algorithm}} with {{Superconducting Qubits}}},
  year = 2021,
  month = aug,
  journal = {Physical Review Letters},
  volume = {127},
  number = {10},
  pages = {100501},
  publisher = {American Physical Society},
  doi = {10.1103/PhysRevLett.127.100501},
  url = {https://link.aps.org/doi/10.1103/PhysRevLett.127.100501},
  author = {Corcoles, A. D. and Takita, Maika and Inoue, Ken and others}
}

@article{guptaProbabilisticErrorCancellation2024,
  title = {Probabilistic Error Cancellation for Dynamic Quantum Circuits},
  year = 2024,
  month = jun,
  journal = {Physical Review A},
  volume = {109},
  number = {6},
  pages = {062617},
  publisher = {American Physical Society},
  doi = {10.1103/PhysRevA.109.062617},
  url = {https://link.aps.org/doi/10.1103/PhysRevA.109.062617},
  author = {Gupta, Riddhi S. and van den Berg, Ewout and Takita, Maika and others}
}

@article{ivashkovHighFidelityMultiqubitGeneralized2024,
  title = {High-{{Fidelity}}, {{Multiqubit Generalized Measurements}} with {{Dynamic Circuits}}},
  author = {Ivashkov, Petr},
  year = 2024,
  journal = {PRX Quantum},
  volume = {5},
  number = {3},
  doi = {10.1103/PRXQuantum.5.030315}
}

@misc{lundConstantdepthQuantumImaginary2026,
  title = {Constant-Depth Quantum Imaginary Time Evolution Using Dynamic Fan-out Circuits},
  year = 2026,
  month = mar,
  number = {arXiv:2603.05156},
  eprint = {2603.05156},
  primaryclass = {quant-ph},
  publisher = {arXiv},
  doi = {10.48550/arXiv.2603.05156},
  url = {http://arxiv.org/abs/2603.05156},
  archiveprefix = {arXiv},
  author = {Lund, Albert and Magnusson, Erika and Dobrautz, Werner and others}
}

@article{qasimkhanSeparateEfficientCharacterization2026,
  title = {Separate and Efficient Characterization of State-Preparation and Measurement Errors Using Single-Qubit Operations},
  author = {Qasim Khan, Muhammad and Norris, Leigh M and Viola, Lorenza},
  year = 2026,
  month = may,
  journal = {Quantum Science and Technology},
  volume = {11},
  number = {2},
  pages = {025051},
  publisher = {IOP Publishing},
  issn = {2058-9565},
  doi = {10.1088/2058-9565/ae65d7},
  url = {https://doi.org/10.1088/2058-9565/ae65d7}
}

@article{bonillaataidesXZZXSurfaceCode2021,
  title = {The {{XZZX}} Surface Code},
  year = 2021,
  month = apr,
  journal = {Nature Communications},
  volume = {12},
  number = {1},
  pages = {2172},
  publisher = {Nature Publishing Group},
  issn = {2041-1723},
  doi = {10.1038/s41467-021-22274-1},
  url = {https://www.nature.com/articles/s41467-021-22274-1},
  copyright = {2021 The Author(s)},
  author = {Bonilla Ataides, J. Pablo and Tuckett, David K. and Bartlett, Stephen D. and others}
}

@article{tuckettUltrahighErrorThreshold2018,
  title = {Ultrahigh Error Threshold for Surface Codes with Biased Noise},
  author = {Tuckett, David K. and Bartlett, Stephen D. and Flammia, Steven T.},
  year = 2018,
  month = jan,
  journal = {Physical Review Letters},
  volume = {120},
  number = {5},
  pages = {050505},
  publisher = {American Physical Society},
  doi = {10.1103/PhysRevLett.120.050505},
  url = {https://link.aps.org/doi/10.1103/PhysRevLett.120.050505}
}

@article{tuckettFaulttolerantThresholdsSurface2020,
  title = {Fault-Tolerant Thresholds for the Surface Code in Excess of \$5\%\$ under Biased Noise},
  year = 2020,
  month = mar,
  journal = {Physical Review Letters},
  volume = {124},
  number = {13},
  pages = {130501},
  publisher = {American Physical Society},
  doi = {10.1103/PhysRevLett.124.130501},
  url = {https://link.aps.org/doi/10.1103/PhysRevLett.124.130501},
  author = {Tuckett, David K. and Bartlett, Stephen D. and Flammia, Steven T. and others}
}

@article{higgottImprovedDecodingCircuit2023a,
  title = {Improved Decoding of Circuit Noise and Fragile Boundaries of Tailored Surface Codes},
  year = 2023,
  month = jul,
  journal = {Physical Review X},
  volume = {13},
  number = {3},
  pages = {031007},
  publisher = {American Physical Society},
  doi = {10.1103/PhysRevX.13.031007},
  url = {https://link.aps.org/doi/10.1103/PhysRevX.13.031007},
  author = {Higgott, Oscar and Bohdanowicz, Thomas C. and Kubica, Aleksander and others}
}

@article{caiQuantumErrorMitigation2023,
  title = {Quantum Error Mitigation},
  year = 2023,
  month = dec,
  journal = {Reviews of Modern Physics},
  volume = {95},
  number = {4},
  eprint = {2210.00921},
  pages = {045005},
  doi = {10.1103/RevModPhys.95.045005},
  url = {https://link.aps.org/doi/10.1103/RevModPhys.95.045005},
  archiveprefix = {arXiv},
  author = {Cai, Zhenyu and Babbush, Ryan and Benjamin, Simon C. and others}
}

@article{goviaBoundingSystematicError2025,
  title = {Bounding the {{Systematic Error}} in {{Quantum Error Mitigation}} Due to {{Model Violation}}},
  year = 2025,
  month = mar,
  journal = {PRX Quantum},
  volume = {6},
  number = {1},
  pages = {010354},
  publisher = {American Physical Society},
  doi = {10.1103/PRXQuantum.6.010354},
  url = {https://link.aps.org/doi/10.1103/PRXQuantum.6.010354},
  author = {Govia, L.C.G. and Majumder, S. and Barron, S.V. and others}
}

@article{vandenbergProbabilisticErrorCancellation2023,
  title = {Probabilistic Error Cancellation with Sparse {{Pauli}}--{{Lindblad}} Models on Noisy Quantum Processors},
  year = 2023,
  month = aug,
  journal = {Nature Physics},
  volume = {19},
  number = {8},
  pages = {1116--1121},
  issn = {1745-2473, 1745-2481},
  doi = {10.1038/s41567-023-02042-2},
  url = {https://www.nature.com/articles/s41567-023-02042-2},
  author = {van den Berg, Ewout and Minev, Zlatko K. and Kandala, Abhinav and others}
}

@article{maciejewskiMitigationReadoutNoise2020,
  title = {Mitigation of Readout Noise in Near-Term Quantum Devices by Classical Post-Processing Based on Detector Tomography},
  author = {Maciejewski, Filip B. and Zimbor{\'a}s, Zolt{\'a}n and Oszmaniec, Micha{\l}},
  year = 2020,
  month = apr,
  journal = {Quantum},
  volume = {4},
  pages = {257},
  publisher = {Verein zur F\"orderung des Open Access Publizierens in den Quantenwissenschaften},
  doi = {10.22331/q-2020-04-24-257},
  url = {https://quantum-journal.org/papers/q-2020-04-24-257/}
}

@article{nationScalableMitigationMeasurement2021,
  title = {Scalable {{Mitigation}} of {{Measurement Errors}} on {{Quantum Computers}}},
  author = {Nation, Paul D.},
  year = 2021,
  journal = {PRX Quantum},
  volume = {2},
  number = {4},
  doi = {10.1103/PRXQuantum.2.040326}
}

@article{bravyiMitigatingMeasurementErrors2021,
  title = {Mitigating Measurement Errors in Multiqubit Experiments},
  author = {Bravyi, Sergey},
  year = 2021,
  journal = {Physical Review A},
  volume = {103},
  number = {4},
  doi = {10.1103/PhysRevA.103.042605}
}

@article{laflammeAlgorithmicCoolingResolving2022,
  title = {Algorithmic Cooling for Resolving State Preparation and Measurement Errors in Quantum Computing},
  author = {Laflamme, Raymond and Lin, Junan and Mor, Tal},
  year = 2022,
  month = jul,
  journal = {Physical Review A},
  volume = {106},
  number = {1},
  eprint = {2203.08114},
  pages = {012439},
  doi = {10.1103/PhysRevA.106.012439},
  url = {https://link.aps.org/doi/10.1103/PhysRevA.106.012439},
  archiveprefix = {arXiv}
}

@article{yuEfficientSeparateQuantification2025,
  title = {Efficient Separate Quantification of State Preparation Errors and Measurement Errors on Quantum Computers and Their Mitigation},
  author = {Yu, Hongye and Wei, Tzu-Chieh},
  year = 2025,
  month = may,
  journal = {Quantum},
  volume = {9},
  eprint = {2310.18881},
  primaryclass = {quant-ph},
  pages = {1724},
  issn = {2521-327X},
  doi = {10.22331/q-2025-05-05-1724},
  url = {http://arxiv.org/abs/2310.18881},
  archiveprefix = {arXiv}
}

@article{rudingerCharacterizingMidcircuitMeasurements2022,
  title = {Characterizing {{Midcircuit Measurements}} on a {{Superconducting Qubit Using Gate Set Tomography}}},
  year = 2022,
  month = jan,
  journal = {Physical Review Applied},
  volume = {17},
  number = {1},
  pages = {014014},
  publisher = {American Physical Society},
  doi = {10.1103/PhysRevApplied.17.014014},
  url = {https://link.aps.org/doi/10.1103/PhysRevApplied.17.014014},
  author = {Rudinger, Kenneth and Ribeill, Guilhem J. and Govia, Luke C.G. and others}
}

@article{strickerCharacterizingQuantumInstruments2022,
  title = {Characterizing Quantum Instruments: From Nondemolition Measurements to Quantum Error Correction},
  shorttitle = {Characterizing Quantum Instruments},
  year = 2022,
  month = aug,
  journal = {PRX Quantum},
  volume = {3},
  number = {3},
  pages = {030318},
  publisher = {American Physical Society},
  doi = {10.1103/PRXQuantum.3.030318},
  url = {https://link.aps.org/doi/10.1103/PRXQuantum.3.030318},
  author = {Stricker, Roman and Vodola, Davide and Erhard, Alexander and others}
}

@article{nielsenGateSetTomography2021,
  title = {Gate Set Tomography},
  year = 2021,
  month = oct,
  journal = {Quantum},
  volume = {5},
  pages = {557},
  publisher = {Verein zur F\"orderung des Open Access Publizierens in den Quantenwissenschaften},
  doi = {10.22331/q-2021-10-05-557},
  url = {https://quantum-journal.org/papers/q-2021-10-05-557/},
  author = {Nielsen, Erik and Gamble, John King and Rudinger, Kenneth and others}
}

@article{jayakumarUniversalFrameworkSimultaneous2024,
  title = {Universal Framework for Simultaneous Tomography of Quantum States and {{SPAM}} Noise},
  year = 2024,
  month = jul,
  journal = {Quantum},
  volume = {8},
  eprint = {2308.15648},
  primaryclass = {quant-ph},
  pages = {1426},
  issn = {2521-327X},
  doi = {10.22331/q-2024-07-30-1426},
  url = {http://arxiv.org/abs/2308.15648},
  archiveprefix = {arXiv},
  author = {Jayakumar, Abhijith and Chessa, Stefano and Coffrin, Carleton and others}
}

@misc{wysockiDetailedInterpretableCharacterization2026,
  title = {Detailed, Interpretable Characterization of Mid-Circuit Measurement on a Transmon Qubit},
  year = 2026,
  month = feb,
  number = {arXiv:2602.03938},
  eprint = {2602.03938},
  primaryclass = {quant-ph},
  publisher = {arXiv},
  doi = {10.48550/arXiv.2602.03938},
  url = {http://arxiv.org/abs/2602.03938},
  archiveprefix = {arXiv},
  author = {Wysocki, Piper C. and Burkhart, Luke D. and Morocco, Madeline H. and others}
}

@article{gaeblerSuppressionMidcircuitMeasurement2021,
  title = {Suppression of Midcircuit Measurement Crosstalk Errors with Micromotion},
  year = 2021,
  month = dec,
  journal = {Physical Review A},
  volume = {104},
  number = {6},
  pages = {062440},
  publisher = {American Physical Society},
  doi = {10.1103/PhysRevA.104.062440},
  url = {https://link.aps.org/doi/10.1103/PhysRevA.104.062440},
  author = {Gaebler, J. P. and Baldwin, C. H. and Moses, S. A. and others}
}

@article{harperEfficientLearningQuantum2020,
  title = {Efficient Learning of Quantum Noise},
  author = {Harper, Robin and Flammia, Steven T. and Wallman, Joel J.},
  year = 2020,
  month = dec,
  journal = {Nature Physics},
  volume = {16},
  number = {12},
  pages = {1184--1188},
  publisher = {Nature Publishing Group},
  issn = {1745-2481},
  doi = {10.1038/s41567-020-0992-8},
  url = {https://www.nature.com/articles/s41567-020-0992-8},
  copyright = {2020 The Author(s), under exclusive licence to Springer Nature Limited}
}

@article{chenLearnabilityPauliNoise2023,
  title = {The Learnability of {{Pauli}} Noise},
  year = 2023,
  month = jan,
  journal = {Nature Communications},
  volume = {14},
  number = {1},
  pages = {52},
  publisher = {Nature Publishing Group},
  issn = {2041-1723},
  doi = {10.1038/s41467-022-35759-4},
  url = {https://www.nature.com/articles/s41467-022-35759-4},
  copyright = {2023 The Author(s)},
  author = {Chen, Senrui and Liu, Yunchao and Otten, Matthew and others}
}

@article{chenEfficientSelfConsistentLearning2026,
  title = {Efficient {{Self-Consistent Learning}} of {{Gate Set Pauli Noise}}},
  year = 2026,
  month = jan,
  journal = {PRX Quantum},
  volume = {7},
  number = {1},
  pages = {010305},
  publisher = {American Physical Society},
  doi = {10.1103/1pnv-t9px},
  url = {https://link.aps.org/doi/10.1103/1pnv-t9px},
  author = {Chen, Senrui and Zhang, Zhihan and Jiang, Liang and others}
}

@article{zhangGeneralizedCycleBenchmarking2025,
  title = {Generalized {{Cycle Benchmarking Algorithm}} for {{Characterizing Midcircuit Measurements}}},
  year = 2025,
  month = jan,
  journal = {PRX Quantum},
  volume = {6},
  number = {1},
  pages = {010310},
  publisher = {American Physical Society},
  doi = {10.1103/PRXQuantum.6.010310},
  url = {https://link.aps.org/doi/10.1103/PRXQuantum.6.010310},
  author = {Zhang, Zhihan and Chen, Senrui and Liu, Yunchao and others}
}

@article{hinesPauliNoiseLearning2025,
  title = {Pauli {{Noise Learning}} for {{Mid-Circuit Measurements}}},
  author = {Hines, Jordan and Proctor, Timothy},
  year = 2025,
  month = jan,
  journal = {Physical Review Letters},
  volume = {134},
  number = {2},
  pages = {020602},
  publisher = {American Physical Society},
  doi = {10.1103/PhysRevLett.134.020602},
  url = {https://link.aps.org/doi/10.1103/PhysRevLett.134.020602}
}

@article{hothemMeasuringErrorRates2025,
  title = {Measuring Error Rates of Mid-Circuit Measurements},
  year = 2025,
  month = jul,
  journal = {Nature Communications},
  volume = {16},
  number = {1},
  pages = {5761},
  publisher = {Nature Publishing Group},
  issn = {2041-1723},
  doi = {10.1038/s41467-025-60923-x},
  url = {https://www.nature.com/articles/s41467-025-60923-x},
  copyright = {2025 The Author(s)},
  author = {Hothem, Daniel and Hines, Jordan and Baldwin, Charles and others}
}

@article{goviaRandomizedBenchmarkingSuite2023,
  title = {A Randomized Benchmarking Suite for Mid-Circuit Measurements},
  year = 2023,
  month = dec,
  journal = {New Journal of Physics},
  volume = {25},
  number = {12},
  eprint = {2207.04836},
  primaryclass = {quant-ph},
  pages = {123016},
  issn = {1367-2630},
  doi = {10.1088/1367-2630/ad0e19},
  url = {http://arxiv.org/abs/2207.04836},
  archiveprefix = {arXiv},
  author = {Govia, L. C. G. and Jurcevic, P. and Wood, C. J. and others}
}

@article{shirizlyRandomizedBenchmarkingProtocol2025,
  title = {Randomized Benchmarking Protocol for Dynamic Circuits},
  author = {Shirizly, Liran and Govia, Luke C. G. and McKay, David C.},
  year = 2025,
  month = jan,
  journal = {Physical Review A},
  volume = {111},
  number = {1},
  pages = {012611},
  publisher = {American Physical Society},
  doi = {10.1103/PhysRevA.111.012611},
  url = {https://link.aps.org/doi/10.1103/PhysRevA.111.012611}
}

@misc{bealeRandomizedCompilingSubsystem2023,
  title = {Randomized Compiling for Subsystem Measurements},
  author = {Beale, Stefanie J. and Wallman, Joel J.},
  year = 2023,
  month = apr,
  number = {arXiv:2304.06599},
  eprint = {2304.06599},
  primaryclass = {quant-ph},
  publisher = {arXiv},
  doi = {10.48550/arXiv.2304.06599},
  url = {http://arxiv.org/abs/2304.06599},
  archiveprefix = {arXiv}
}

@article{johnsonHeraldedStatePreparation2012,
  title = {Heralded State Preparation in a Superconducting Qubit},
  year = 2012,
  month = aug,
  journal = {Physical Review Letters},
  volume = {109},
  number = {5},
  pages = {050506},
  publisher = {American Physical Society},
  doi = {10.1103/PhysRevLett.109.050506},
  url = {https://link.aps.org/doi/10.1103/PhysRevLett.109.050506},
  author = {Johnson, J. E. and Macklin, C. and Slichter, D. H. and others}
}

@article{boykinAlgorithmicCoolingScalable2002,
  title = {Algorithmic {{Cooling}} and {{Scalable NMR Quantum Computers}}},
  year = 2002,
  month = mar,
  journal = {Proceedings of the National Academy of Sciences},
  volume = {99},
  number = {6},
  eprint = {quant-ph/0106093},
  pages = {3388--3393},
  issn = {0027-8424, 1091-6490},
  doi = {10.1073/pnas.241641898},
  url = {http://arxiv.org/abs/quant-ph/0106093},
  archiveprefix = {arXiv},
  author = {Boykin, P. Oscar and Mor, Tal and Roychowdhury, Vwani and others}
}

@article{aamirThermallyDrivenQuantum2025,
  title = {Thermally Driven Quantum Refrigerator Autonomously Resets a Superconducting Qubit},
  year = 2025,
  month = feb,
  journal = {Nature Physics},
  volume = {21},
  number = {2},
  pages = {318--323},
  publisher = {Nature Publishing Group},
  issn = {1745-2481},
  doi = {10.1038/s41567-024-02708-5},
  url = {https://www.nature.com/articles/s41567-024-02708-5},
  copyright = {2025 The Author(s)},
  author = {Aamir, Mohammed Ali and Jamet Suria, Paul and Mar\'in Guzm\'an, Jos\'e Antonio and others}
}

@article{shtankoComplexityLocalQuantum2025,
  title = {Complexity of Local Quantum Circuits under Nonunital Noise},
  author = {Shtanko, Oles and Sharma, Kunal},
  year = 2025,
  month = sep,
  journal = {PRX Quantum},
  volume = {6},
  number = {3},
  pages = {030347},
  publisher = {American Physical Society},
  doi = {10.1103/xwz2-wfr4},
  url = {https://link.aps.org/doi/10.1103/xwz2-wfr4}
}

@article{feffermanEffectNonunitalNoise2024,
  title = {Effect of {{Nonunital Noise}} on {{Random-Circuit Sampling}}},
  year = 2024,
  month = jul,
  journal = {PRX Quantum},
  volume = {5},
  number = {3},
  pages = {030317},
  publisher = {American Physical Society},
  doi = {10.1103/PRXQuantum.5.030317},
  url = {https://link.aps.org/doi/10.1103/PRXQuantum.5.030317},
  author = {Fefferman, Bill and Ghosh, Soumik and Gullans, Michael and others}
}

@misc{chenEnhancingQuantumNoise2025,
  title = {Enhancing Quantum Noise Characterization via Extra Energy Levels},
  year = 2025,
  month = jul,
  number = {arXiv:2506.09131},
  eprint = {2506.09131},
  primaryclass = {quant-ph},
  publisher = {arXiv},
  doi = {10.48550/arXiv.2506.09131},
  url = {http://arxiv.org/abs/2506.09131},
  archiveprefix = {arXiv},
  author = {Chen, Senrui and Hashim, Akel and Goss, Noah and others}
}

@article{bravyiHighthresholdLowoverheadFaulttolerant2024,
  title = {High-Threshold and Low-Overhead Fault-Tolerant Quantum Memory},
  year = 2024,
  month = mar,
  journal = {Nature},
  volume = {627},
  number = {8005},
  pages = {778--782},
  publisher = {Nature Publishing Group},
  issn = {1476-4687},
  doi = {10.1038/s41586-024-07107-7},
  url = {https://www.nature.com/articles/s41586-024-07107-7},
  copyright = {2024 The Author(s)},
  author = {Bravyi, Sergey and Cross, Andrew W. and Gambetta, Jay M. and others}
}

\clearpage
\widetext
\begingroup
\makeatletter
\let\addcontentsline\@gobblethree
\section*{Supplemental Material}
\makeatother
\endgroup

\appendix
\renewcommand{\appendixname}{SM Sec.}
\renewcommand{\thesubsection}{\Alph{section}.\arabic{subsection}}
\renewcommand{\thesubsubsection}{\Alph{section}.\arabic{subsection}.\alph{subsubsection}}
\setcounter{secnumdepth}{3}

\makeatletter
\renewcommand{\theequation}{\Alph{section}.\arabic{equation}}
\@addtoreset{equation}{section}
\makeatother

\renewcommand{\thefigure}{S\arabic{figure}}
\renewcommand{\thetable}{S\arabic{table}}
\setcounter{figure}{0}
\setcounter{table}{0}

\makeatletter
\let\supp@origaddcontentsline\addcontentsline
\def\supp@addcontentsline#1#2#3{
    \def\supp@level{#2}
    \def\supp@section{section}
    \ifx\supp@level\supp@section
        \addtocontents{supp}{\protect\contentsline{section}{#3}{\thepage}{\@currentHref}}
    \fi
}
\let\addcontentsline\supp@addcontentsline
\makeatother

\begingroup
\makeatletter
\let\addcontentsline\@gobblethree
\setcounter{tocdepth}{1}
\@starttoc{supp}
\makeatother
\endgroup
\vspace{1em}

\section{Model, conventions, and forward probabilities}
\label{app:model}
\label{app:prelim}
\label{app:notation-table}
\begingroup
\small
\setlength{\tabcolsep}{4pt}
\begin{center}
\refstepcounter{table}\label{tab:notation-glossary}
Table~\thetable: Notation table and glossary.

\vspace{0.4em}
\begin{tabular}{@{}llll@{}}
    \toprule
    Symbol                                    & Type          & Meaning & First use \\
    \midrule

    $\Lambda=\{\Lambda_0,\Lambda_1\}$         & instrument    &
    \parbox[t]{0.46\columnwidth}{\raggedright
        Two-outcome single-qubit quantum instrument; each $\Lambda_o$ is CP and $\Lambda_0+\Lambda_1$ is TP.
    }                                         &
    \parbox[t]{0.16\columnwidth}{\raggedright
        Def.~\ref{def:z-twirled-mcm-instrument}
    }                                                                               \\

    $\mathcal{T}_Z$                           & map           &
    \parbox[t]{0.46\columnwidth}{\raggedright
        $Z$-twirl of an instrument (applied outcome-wise), Eq.~\eqref{eq:z-twirl-instrument}.
    }                                         &
    \parbox[t]{0.16\columnwidth}{\raggedright
        Def.~\ref{def:z-twirled-mcm-instrument}
    }                                                                               \\

    $p_s^{(o,s')}$                            & scalar        &
    \parbox[t]{0.46\columnwidth}{\raggedright
    Joint probability $p_s^{(o,s')}=\Pr(o,\ s'\mid s)$ in the $Z$-twirled MCM model.
    }                                         &
    \parbox[t]{0.16\columnwidth}{\raggedright
        Def.~\ref{def:z-twirled-mcm-instrument}
    }                                                                               \\

    $M^{(o)}$                                 & matrix        &
    \parbox[t]{0.46\columnwidth}{\raggedright
    $2\times 2$ nonnegative matrix in the MCM matrix pair, with entries $M^{(o)}_{s',s}=p_s^{(o,s')}$; columns satisfy
    $\sum_{o,s'} M^{(o)}_{s',s}=1$.
    }                                         &
    \parbox[t]{0.16\columnwidth}{\raggedright
        Def.~\ref{def:z-twirled-mcm-instrument}
    }                                                                               \\

    $A_{o\mid s}$                             & scalar        &
    \parbox[t]{0.46\columnwidth}{\raggedright
    Readout-confusion probability $\Pr(o\mid s)=A_{o\mid s}=\sum_{s'} M^{(o)}_{s',s}$.
    }                                         &
    \parbox[t]{0.16\columnwidth}{\raggedright
        Def.~\ref{def:readout-backaction-error-rates}
    }                                                                               \\

    $B_{s'\mid s}$                            & scalar        &
    \parbox[t]{0.46\columnwidth}{\raggedright
    Unconditional backaction (population-transfer) probability
    $\Pr(s'\mid s)=B_{s'\mid s}=\sum_o M^{(o)}_{s',s}$.
    }                                         &
    \parbox[t]{0.16\columnwidth}{\raggedright
        Def.~\ref{def:readout-backaction-error-rates}
    }                                                                               \\

    $\epsilon_{1\mid 0},\,\epsilon_{0\mid 1}$ & scalars       &
    \parbox[t]{0.46\columnwidth}{\raggedright
    Asymmetric readout error rates $\epsilon_{1\mid 0}=A_{1\mid 0}$ and $\epsilon_{0\mid 1}=A_{0\mid 1}$.
    }                                         &
    \parbox[t]{0.16\columnwidth}{\raggedright
        Def.~\ref{def:readout-backaction-error-rates}
    }                                                                               \\

    $\eta_{\uparrow},\,\eta_{\downarrow}$     & scalars       &
    \parbox[t]{0.46\columnwidth}{\raggedright
    Backaction excitation/decay probabilities $\eta_{\uparrow}=B_{1\mid 0}$ and $\eta_{\downarrow}=B_{0\mid 1}$.
    }                                         &
    \parbox[t]{0.16\columnwidth}{\raggedright
        Def.~\ref{def:readout-backaction-error-rates}
    }                                                                               \\

    \shortstack[l]{$w,\ p(w),$\\$M^{(w)}$}   & mixed &
    \parbox[t]{0.46\columnwidth}{\raggedright
    Readout bit string $w=w_1\cdots w_L\in\{0,1\}^L$, its probability $p(w)$, and the ordered product $M^{(w)}=M^{(w_L)}\cdots M^{(w_1)}$ used in the forward model.
    }                                         &
    \parbox[t]{0.16\columnwidth}{\raggedright
        SM Sec.~\ref{app:prob-dict}
    }                                                                               \\

    $N,\ N(w),\ \widehat{p}(w)$               & scalars       &
    \parbox[t]{0.46\columnwidth}{\raggedright
        Total shots $N$, count $N(w)$, and empirical frequency $\widehat{p}(w)=N(w)/N$.
    }                                         &
    \parbox[t]{0.16\columnwidth}{\raggedright
        SM Sec.~\ref{app:prob-dict}
    }                                                                               \\

    $\bm{\pi}$                                & vector        &
    \parbox[t]{0.46\columnwidth}{\raggedright
        Initial computational-basis population vector $\bm{\pi}=(\pi_0,\pi_1)^{\mathsf{T}}$ with $\pi_0+\pi_1=1$.
    }                                         &
    \parbox[t]{0.16\columnwidth}{\raggedright
        Lem.~\ref{lem:matrix-product-string-probabilities}
    }                                                                               \\

    \shortstack[l]{$0^n,\,1^n,$\\$S_n,\,T_n$} & mixed &
    \parbox[t]{0.46\columnwidth}{\raggedright
        Length-$n$ all-zero/all-one strings ($0^0=1^0=\emptyset$) and the associated all-zero/all-one readout-string probabilities $S_n=p(0^n)$ and $T_n=p(1^n)$.
    }                                         &
    \parbox[t]{0.16\columnwidth}{\raggedright
        Eq.~\eqref{eq:SnTn-def}
    }                                                                               \\

    $R(t),\,t$                                & matrix/scalar &
    \parbox[t]{0.46\columnwidth}{\raggedright
    Gauge matrix $R(t)=\begin{psmallmatrix}1-t&t\\ t&1-t\end{psmallmatrix}$ (invertible iff $t\neq\tfrac12$) generating the similarity action $M^{(o)}\mapsto R(t)^{-1}M^{(o)}R(t)$.
    }                                         &
    \parbox[t]{0.16\columnwidth}{\raggedright
        SM Sec.~\ref{app:gauge}
    }                                                                               \\

    $\mathcal{T}_\mathrm{phys}$               & set           &
    \parbox[t]{0.46\columnwidth}{\raggedright
        Physically allowed gauge set consisting of all $t\neq\tfrac12$ such that the gauged representative remains entrywise within $[0,1]$.
    }                                         &
    \parbox[t]{0.16\columnwidth}{\raggedright
        SM Sec.~\ref{app:physicality}
    }                                                                               \\

    $t_{\mathrm{rep}}$                        & scalar        &
    \parbox[t]{0.46\columnwidth}{\raggedright
        Representative gauge choice selected from $\mathcal{T}_\mathrm{phys}$ for reporting/plotting.
    }                                         &
    \parbox[t]{0.16\columnwidth}{\raggedright
        Def.~\ref{def:max-margin-gauge-selection}
    }                                                                               \\

    $[q]_{\mathcal{T}_\mathrm{phys}}$         & interval      &
    \parbox[t]{0.46\columnwidth}{\raggedright
        Gauge band of a derived quantity $q$ over $t\in\mathcal{T}_\mathrm{phys}$, Eq.~\eqref{eq:q-band-def}.
    }                                         &
    \parbox[t]{0.16\columnwidth}{\raggedright
        SM Sec.~\ref{app:physicality}
    }                                                                               \\
\bottomrule
\end{tabular}
\end{center}
\endgroup

\subsection{Notation, readout-string distributions, and randomization/twirling conventions}
\label{app:prob-dict}

We represent a recorded sequence of mid-circuit measurement (MCM) readouts as a \emph{bit string}
\begin{equation}
    \label{eq:string-def}
    w \;=\; w_1 w_2 \cdots w_L \in \{0,1\}^L,
\end{equation}
where $L$ denotes the number of MCM readouts in the sequence.
Our convention is \emph{chronological left-to-right ordering}.
$w_1$ denotes the earliest readout in time (the first MCM applied in the circuit), and $w_L$ denotes the latest.
For $1\le \ell\le L$, we write the first $\ell$ readouts as
\begin{equation}
    \label{eq:prefix-def}
    w_{1:\ell} \;\coloneqq\; w_1 w_2 \cdots w_\ell.
\end{equation}
For concatenation, if $u\in\{0,1\}^\ell$ and $o\in\{0,1\}$, we write $u o \in \{0,1\}^{\ell+1}$ for appending $o$ to the \emph{right} end of $u$.

A \emph{readout-string distribution} is the collection of readout-string probabilities $p(w)$ for strings $w$ of one or more lengths, where for each fixed length $L$ the restriction $p(\cdot)\big|_{\{0,1\}^L}$ is a normalized distribution.
\begin{equation}
    \label{eq:string-normalization}
    \sum_{w\in\{0,1\}^L} p(w) \;=\; 1.
\end{equation}
In experiments, $p(w)$ is estimated from $N$ repeated shots of the same length-$L$ circuit.
If $N(w)$ is the observed count of the string $w$, then the empirical frequency is
\begin{equation}
    \label{eq:empirical-frequency}
    \widehat{p}(w) \;\coloneqq\; \frac{N(w)}{N},
    \qquad \sum_{w\in\{0,1\}^L} \widehat{p}(w)=1.
\end{equation}

When a single dataset contains readout-string probabilities for multiple lengths (e.g.\ $L=1,2,\dots$), we will implicitly interpret $p(w)$ and $\widehat{p}(w)$ as referring to the experiment with length $L$.
When probabilities for a shorter initial readout string are required, they are defined by summing over later readouts within a fixed length-$L$ distribution.
Concretely, for any initial readout string $u\in\{0,1\}^\ell$ and any $L\ge \ell$,
\begin{equation}
    \label{eq:prefix-marginal}
    p(u)
    \;\coloneqq\;
    \sum_{v\in\{0,1\}^{L-\ell}} p(u v),
    \qquad
    \widehat{p}(u)
    \;\coloneqq\;
    \sum_{v\in\{0,1\}^{L-\ell}} \widehat{p}(u v),
\end{equation}
where $u v$ denotes concatenation.

For the learned $Z$-twirled model with matrices $M^{(0)},M^{(1)}$ (Definition~\ref{def:z-twirled-mcm-instrument}), define the string-ordered product
\begin{equation}
    \label{eq:mw-def}
    M^{(w)} \;\coloneqq\; M^{(w_L)} M^{(w_{L-1})}\cdots M^{(w_1)} \in \mathbb{R}^{2\times 2}.
\end{equation}
The reversal relative to the left-to-right string order is purely notational.
The bit $w_1$ acts first on a population column vector, and matrix products compose right-to-left.

A two-outcome instrument is written $\Lambda=\{\Lambda_0,\Lambda_1\}$, where each $\Lambda_o$ is completely positive (CP) and $\Lambda_0+\Lambda_1$ is trace preserving (TP).
In our $Z$-twirled setting, the induced dynamics on computational-basis populations is classical.
We use \emph{twirling} for channel or instrument averaging by conjugation; when a random unitary is applied before the MCM chain and the resulting data are averaged, we call this \emph{input randomization}.

\paragraph{Input randomization and channel twirling.}\label{app:randomization:input-vs-channel}

Uniform input randomization maps an input state or operator to $\mathbb{E}_\omega[U_\omega A U_\omega^\dagger]$ for sampled unitaries $U_\omega$.
A channel twirl instead averages a noisy channel by conjugation,
$\mathcal{E}\mapsto \mathbb{E}_\omega[\mathcal{U}_\omega^\dagger\circ \mathcal{E}\circ \mathcal{U}_\omega]$,
where $\mathcal{U}_\omega(\cdot)=U_\omega(\cdot)U_\omega^\dagger$.
These are conceptually distinct operations; in this manuscript, the learned MCM model uses a $Z$-twirled instrument, while the learning circuits use single-qubit Pauli input randomization to prepare a maximally mixed effective input.

\begin{definition}[Complete depolarizing channel from Pauli input randomization]
    \label{def:pauli-input-randomization-depolarizing-map}
    Uniform randomization over the single-qubit Paulis $\{I,X,Y,Z\}$ implements the complete depolarizing channel
    \begin{equation}
        \label{eq:pauli-randomization-depolarizing-def}
        \Delta_{\mathrm{dep}}(A)\;\coloneqq\;\frac{1}{4}\sum_{P\in\{I,X,Y,Z\}} P\,A\,P,
    \end{equation}
    defined for any single-qubit operator $A$.
\end{definition}

\begin{lemma}[Pauli input randomization outputs the maximally mixed state]
    \label{lem:pauli-randomization-maximally-mixed}
    For any single-qubit operator $A$,
    \begin{equation}
        \label{eq:pauli-randomization-maxmix}
        \Delta_{\mathrm{dep}}(A)\;=\;\frac{\Tr(A)}{2}\,I.
    \end{equation}
    In particular, for any single-qubit density operator $\rho$, $\Delta_{\mathrm{dep}}(\rho)=I/2$.
\end{lemma}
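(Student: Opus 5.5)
The plan is to prove the identity by linearity, reducing it to a check on the Pauli basis. The map $A\mapsto \Delta_{\mathrm{dep}}(A)$ in Eq.~\eqref{eq:pauli-randomization-depolarizing-def} is linear, and so is $A\mapsto \tfrac{\Tr(A)}{2}I$. The four Paulis $\{I,X,Y,Z\}$ form a basis of the $2\times 2$ complex matrices. It therefore suffices to verify Eq.~\eqref{eq:pauli-randomization-maxmix} for $A=Q\in\{I,X,Y,Z\}$.

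For each basis element we use the commutation structure of the Pauli group. Each Pauli is Hermitian and unitary, and any two distinct non-identity Paulis anticommute. Hence $PQP=\pm Q$, with the plus sign exactly when $P$ and $Q$ commute.

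For $Q=I$, every term gives $PIP=P^2=I$, so the average is $I=\tfrac{\Tr(I)}{2}I$.

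For $Q\neq I$, we count signs over the four choices of $P$:
\begin{itemize}
\item $P=I$ and $P=Q$ commute with $Q$ and contribute $+Q$ each;
\item the two remaining non-identity Paulis anticommute with $Q$ and contribute $-Q$ each.
\end{itemize}
The sum is therefore $0=\tfrac{\Tr(Q)}{2}I$, since non-identity Paulis are traceless. This settles all four basis cases, and linearity gives Eq.~\eqref{eq:pauli-randomization-maxmix} for arbitrary $A$.

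The density-operator statement follows immediately: $\Tr(\rho)=1$ gives $\Delta_{\mathrm{dep}}(\rho)=I/2$.

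There is no real obstacle here. The only step needing care is the sign bookkeeping in the $Q\neq I$ case. If a more basis-free argument is preferred, one could instead observe that $\Delta_{\mathrm{dep}}(A)$ commutes with every Pauli: conjugating by a Pauli permutes the terms of the sum up to global phases that cancel. By Schur's lemma, $\Delta_{\mathrm{dep}}(A)$ is then a multiple of $I$, and the multiple is fixed by trace preservation.
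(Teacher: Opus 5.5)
Your proof is correct and is essentially the paper's argument: expand $A$ in the Pauli basis, note that the identity component is fixed by every conjugation, and observe that each traceless Pauli commutes with exactly two of the four conjugating Paulis and anticommutes with the other two, so its average vanishes. Your Schur's-lemma alternative is also valid, but the paper does not need it.
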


\begin{proof}
    Expand $A$ in the Pauli basis as $A=a_0 I+a_x X+a_y Y+a_z Z$ with $a_0=\Tr(A)/2$.
    For any fixed traceless Pauli $P_j\in\{X,Y,Z\}$, conjugation by $P\in\{I,X,Y,Z\}$ yields $P P_j P=\pm P_j$, and exactly two of the four single-qubit Paulis commute with $P_j$ while the other two anticommute, so the average $\frac14\sum_{P\in\{I,X,Y,Z\}} P P_j P$ vanishes.
    The identity component is invariant under conjugation, hence the randomization keeps only $a_0 I$, giving \eqref{eq:pauli-randomization-maxmix}.
\end{proof}

\paragraph{Pauli-axis dephasing map for Bloch-component isolation.}\label{app:randomization:axis-dephasing}

A recurring operation in SM Sec.~\ref{app:spam} is a \emph{Pauli-axis dephasing map} that removes two Bloch components of an input state, leaving only the component aligned with a chosen Pauli axis.
It is implemented by averaging over a random choice of applying either $I$ or that Pauli.

\begin{definition}[Pauli-axis dephasing map]
    \label{def:pauli-axis-dephasing-map}
    Fix $a\in\{x,y,z\}$ and write $(P_x,P_y,P_z)=(X,Y,Z)$.
    Define the CPTP map
    \begin{equation}
        \label{eq:axis-dephasing-def}
        \mathcal{D}_a(A)\;\coloneqq\;\frac{1}{2}\bigl(A+P_a A P_a\bigr),
    \end{equation}
    for any single-qubit operator $A$.
\end{definition}

\begin{lemma}[Pauli-axis dephasing keeps exactly one Bloch component]
    \label{lem:axis-dephasing-keeps-one-component}
    Write any single-qubit operator as $A=a_0 I+a_x X+a_y Y+a_z Z$ with $a_0=\Tr(A)/2$.
    Then
    \begin{equation}
        \label{eq:axis-dephasing-projection}
        \mathcal{D}_x(A)=a_0 I+a_x X,\qquad
        \mathcal{D}_y(A)=a_0 I+a_y Y,\qquad
        \mathcal{D}_z(A)=a_0 I+a_z Z.
    \end{equation}
    In particular, for $\rho=\frac12(I+r_x X+r_y Y+r_z Z)$,
    \begin{equation}
        \label{eq:axis-dephasing-bloch}
        \mathcal{D}_a(\rho)=\frac12\bigl(I+r_a P_a\bigr),\qquad a\in\{x,y,z\}.
    \end{equation}
\end{lemma}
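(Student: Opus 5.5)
We expand $A$ in the Pauli basis, $A=a_0 I+a_x X+a_y Y+a_z Z$, and use linearity of $\mathcal{D}_a$ to reduce the statement to computing $\mathcal{D}_a(P)$ for each $P\in\{I,X,Y,Z\}$. This mirrors the proof of Lemma~\ref{lem:pauli-randomization-maximally-mixed}, except that the average is now over the two-element set $\{I,P_a\}$ instead of all four Paulis.

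The key step is the commutation relation of single-qubit Paulis. Every Pauli squares to the identity. Distinct non-identity Paulis anticommute, so $P_a P_b P_a=-P_b$ for $b\neq a$. The identity and $P_a$ itself commute with $P_a$, so $P_a I P_a=I$ and $P_a P_a P_a=P_a$. Substituting into Eq.~\eqref{eq:axis-dephasing-def} gives $\mathcal{D}_a(I)=I$ and $\mathcal{D}_a(P_a)=P_a$. For $b\neq a$ it gives $\mathcal{D}_a(P_b)=\tfrac12(P_b-P_b)=0$. Summing with the coefficients $a_0,a_x,a_y,a_z$ yields the three identities in Eq.~\eqref{eq:axis-dephasing-projection}.

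For the density-operator statement, we specialize to $a_0=1/2$ and $a_b=r_b/2$. Substituting gives $\mathcal{D}_a(\rho)=\tfrac12(I+r_aP_a)$, which is Eq.~\eqref{eq:axis-dephasing-bloch}.

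There is no genuine obstacle here. The only point that needs care is the anticommutation bookkeeping, which follows from $P_aP_b=i\epsilon_{abc}P_c$ for $a\neq b$. The argument is a routine calculation and should take a few lines.
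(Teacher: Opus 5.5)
Your proposal is correct and matches the paper's proof: expand $A$ in the Pauli basis, use $P_aIP_a=I$, $P_aP_aP_a=P_a$ and $P_aP_bP_a=-P_b$ for $b\neq a$, so the two-term average in $\mathcal{D}_a$ cancels the orthogonal components, then set $a_0=\tfrac12$ and $a_b=r_b/2$ for the density-operator case. The paper states the same sign rule compactly as a single conjugation formula, but the argument is identical.
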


\begin{proof}
    For $b\in\{x,y,z\}$, Pauli conjugation acts as $P_a P_b P_a = (-1)^{\mathds{1}[a\neq b]} P_b$ and fixes $I$.
    Applying \eqref{eq:axis-dephasing-def} to the Pauli expansion of $A$ therefore cancels the two components orthogonal to $a$ and keeps only the $I$ and $P_a$ components, yielding \eqref{eq:axis-dephasing-projection} and \eqref{eq:axis-dephasing-bloch}.
\end{proof}

\noindent The map $\mathcal{D}_a$ is idempotent, $\mathcal{D}_a\circ\mathcal{D}_a=\mathcal{D}_a$, with image $\mathrm{span}\{I,P_a\}$.

\subsection{Z-twirled MCM instrument and readout/backaction metrics}
\begin{definition}[Z-twirled single-qubit MCM instrument]
    \label{def:z-twirled-mcm-instrument}
    Let $\mathcal{H}\cong \mathbb{C}^2$ with computational basis $\{\ket{0},\ket{1}\}$ and Pauli operator $Z=\ket{0}\!\bra{0}-\ket{1}\!\bra{1}$.
    A \emph{two-outcome single-qubit instrument} is a pair $\Lambda=\{\Lambda_0,\Lambda_1\}$ of linear maps acting on single-qubit operators such that each $\Lambda_o$ is completely positive (CP) and the sum $\Lambda_0+\Lambda_1$ is trace preserving (TP).

    The \emph{$Z$-twirl} of $\Lambda$ is the instrument $\mathcal{T}_Z(\Lambda)$ defined outcome-wise by
    \begin{equation}
        \label{eq:z-twirl-instrument}
        (\mathcal{T}_Z(\Lambda))_o(\rho)
        \;\coloneqq\;
        \frac{1}{2}\sum_{b\in\{0,1\}} Z^b\,\Lambda_o\!\bigl(Z^b \rho Z^b\bigr)\,Z^b,
        \qquad o\in\{0,1\},
    \end{equation}
    where $Z^0\equiv I$ and $Z^1\equiv Z$.

    The associated \emph{MCM matrix pair} $M^{(0)},M^{(1)}\in\mathbb{R}^{2\times 2}$ is defined by
    \begin{equation}
        \label{eq:mcm-matrix-def}
        M^{(o)}_{s',s}
        \;\coloneqq\;
        \bra{s'}\,(\mathcal{T}_Z(\Lambda))_o\!\bigl(\ket{s}\!\bra{s}\bigr)\,\ket{s'},
        \qquad s,s',o\in\{0,1\}.
    \end{equation}
    Equivalently, the entry $M^{(o)}_{s',s}$ is the joint probability
    \begin{equation}
        \label{eq:joint-prob-def}
        M^{(o)}_{s',s} \;=\; p_s^{(o,s')}
        \;\coloneqq\;
        \Pr(\text{readout }o,\ \text{post-measurement state }s' \mid \text{pre-measurement state }s),
    \end{equation}
    in the $Z$-twirled MCM model.

    A \emph{$Z$-twirled single-qubit MCM instrument model} is any pair $(M^{(0)},M^{(1)})$ satisfying
    \begin{equation}
        \label{eq:z-twirled-mcm-constraints}
        M^{(o)}_{s',s}\ge 0,
        \qquad
        \sum_{o\in\{0,1\}}\sum_{s'\in\{0,1\}} M^{(o)}_{s',s}=1
        \quad\text{for each } s\in\{0,1\}.
    \end{equation}
    The constraint \eqref{eq:z-twirled-mcm-constraints} is necessary for physical realizability, and Proposition~\ref{prop:mcm-equivalence-learning} below shows it is also sufficient (via an explicit canonical Kraus realization).
\end{definition}

\begin{proposition}[Equivalence in learning: $(M^{(0)},M^{(1)})$ vs $Z$-twirled POVM + canonical Kraus operators]
    \label{prop:mcm-equivalence-learning}
    Let $\Lambda=\{\Lambda_0,\Lambda_1\}$ be any two-outcome single-qubit instrument on $\mathcal{H}$, and let $M^{(0)},M^{(1)}$ be defined from its $Z$-twirl as in Definition~\ref{def:z-twirled-mcm-instrument}.
    Then the matrices satisfy the physical constraints \eqref{eq:z-twirled-mcm-constraints}.

    Conversely, given any pair of matrices $(M^{(0)},M^{(1)})$ satisfying \eqref{eq:z-twirled-mcm-constraints}, define Kraus operators
    \begin{equation}
        \label{eq:canonical-kraus}
        K_{o,s',s}\;\coloneqq\;\sqrt{M^{(o)}_{s',s}}\;\ket{s'}\!\bra{s},
        \qquad s,s',o\in\{0,1\},
    \end{equation}
    and completely positive maps (one per outcome)
    \begin{equation}
        \label{eq:canonical-instrument}
        \Lambda^{\mathrm{can}}_o(\rho)
        \;\coloneqq\;
        \sum_{s,s'\in\{0,1\}} K_{o,s',s}\,\rho\,K_{o,s',s}^\dagger,
        \qquad o\in\{0,1\}.
    \end{equation}
    Then $\Lambda^{\mathrm{can}}=\{\Lambda^{\mathrm{can}}_0,\Lambda^{\mathrm{can}}_1\}$ is a valid two-outcome instrument, it is $Z$-incoherent (it discards coherences and maps populations classically), and it reproduces the prescribed matrices on computational-basis inputs.
    \begin{equation}
        \label{eq:canonical-reproduces-m}
        \Lambda^{\mathrm{can}}_o\!\bigl(\ket{s}\!\bra{s}\bigr)
        \;=\;
        \sum_{s'\in\{0,1\}} M^{(o)}_{s',s}\,\ket{s'}\!\bra{s'}.
    \end{equation}
    The associated (diagonal) two-outcome POVM effects are
    \begin{equation}
        \label{eq:z-twirled-povm-from-m}
        F_o^{(Z)}
        \;\coloneqq\;
        \sum_{s,s'\in\{0,1\}} K_{o,s',s}^\dagger K_{o,s',s}
        \;=\;
        \sum_{s\in\{0,1\}}\Bigl(\sum_{s'\in\{0,1\}} M^{(o)}_{s',s}\Bigr)\ket{s}\!\bra{s},
        \qquad o\in\{0,1\}.
    \end{equation}
    In particular, the data $(M^{(0)},M^{(1)})$ determines (and is determined by) the diagonal $Z$-twirled POVM $\{F_0^{(Z)},F_1^{(Z)}\}$ together with the canonical $Z$-incoherent Kraus operators $\{K_{o,s',s}\}_{o,s',s\in\{0,1\}}$ in \eqref{eq:canonical-kraus}.
\end{proposition}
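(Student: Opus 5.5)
The proof has two directions, and I will handle them separately.

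\emph{Forward direction.} Because the $Z$-twirl is a convex combination of conjugations by unitaries, each $(\mathcal{T}_Z(\Lambda))_o$ is CP: it is an average of compositions of CP maps. Hence $(\mathcal{T}_Z(\Lambda))_o(\ket{s}\!\bra{s})$ is a positive semidefinite operator. This makes every diagonal entry $M^{(o)}_{s',s}=\bra{s'}(\cdot)\ket{s'}$ nonnegative. For trace preservation, I will note that conjugation by $Z^b$ preserves trace and that $Z^b\ket{s}\!\bra{s}Z^b=\ket{s}\!\bra{s}$. Then $\sum_o \Tr[(\mathcal{T}_Z(\Lambda))_o(\ket{s}\!\bra{s})]=\Tr[(\Lambda_0+\Lambda_1)(\ket{s}\!\bra{s})]=1$. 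Writing the trace as $\sum_{s'}$ of the diagonal entries gives the column-sum constraint in \eqref{eq:z-twirled-mcm-constraints}.

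\emph{Converse direction.} Each $K_{o,s',s}$ is well defined because $M^{(o)}_{s',s}\ge0$. Each $\Lambda^{\mathrm{can}}_o$ is in Kraus form, so it is CP. For the TP property, I will compute
\[
\sum_{o,s,s'}K^\dagger_{o,s',s}K_{o,s',s}=\sum_{o,s,s'}M^{(o)}_{s',s}\ket{s}\!\bra{s}=\sum_s\ket{s}\!\bra{s}=I,
\]
using $\braket{s'|s'}=1$ and the column-sum constraint. The same computation restricted to fixed $o$ yields the POVM formula \eqref{eq:z-twirled-povm-from-m}.

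Next I will evaluate $K_{o,s',s}\,\rho\,K^\dagger_{o,s',s}=M^{(o)}_{s',s}\bra{s}\rho\ket{s}\,\ket{s'}\!\bra{s'}$. This shows that $\Lambda^{\mathrm{can}}_o(\rho)$ depends only on the diagonal of $\rho$ and always outputs a diagonal operator. That is exactly the $Z$-incoherence claim, and taking $\rho=\ket{s}\!\bra{s}$ gives \eqref{eq:canonical-reproduces-m}. As a consistency check, I will also verify that $\mathcal{T}_Z(\Lambda^{\mathrm{can}})=\Lambda^{\mathrm{can}}$, so the matrices defined via \eqref{eq:mcm-matrix-def} from $\Lambda^{\mathrm{can}}$ are the prescribed ones. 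This holds because $Z^b\rho Z^b$ has the same diagonal as $\rho$, and $Z^b D Z^b=D$ for diagonal $D$.

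\emph{Bijectivity claim.} Finally I must establish the "determines and is determined by" statement. In one direction, $M^{(o)}_{s',s}=\Tr(K^\dagger_{o,s',s}K_{o,s',s})$, so the Kraus set recovers $M$. In the other, $M$ defines the Kraus set explicitly, and the POVM consists of the column sums of $M$.

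The only potentially subtle step is the forward nonnegativity. That needs the CP property of the twirl, which is standard but has to be stated explicitly. The rest is routine bookkeeping.
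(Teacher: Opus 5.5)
Your proposal is correct and follows essentially the same route as the paper's proof. Both arguments use positivity of the twirled map on $\ket{s}\!\bra{s}$ for nonnegativity, trace preservation for the column sums, the Kraus-completeness computation, and the explicit evaluation $K_{o,s',s}\rho K_{o,s',s}^\dagger=M^{(o)}_{s',s}\bra{s}\rho\ket{s}\,\ket{s'}\!\bra{s'}$ for $Z$-incoherence and reproduction. Two of your choices are small tightenings the paper leaves implicit: you observe that nonnegativity and the trace identity need only positive semidefiniteness and $\Tr=\sum_{s'}\bra{s'}\cdot\ket{s'}$ rather than diagonality, and you check that $\mathcal{T}_Z(\Lambda^{\mathrm{can}})=\Lambda^{\mathrm{can}}$, which closes the round trip through the twirl-based definition of $M^{(o)}$.
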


\begin{proof}
    \emph{Step 1.
    From an instrument to the constraints on $M^{(0)},M^{(1)}$.}
    Fix $o\in\{0,1\}$.
    The map $(\mathcal{T}_Z(\Lambda))_o$ in \eqref{eq:z-twirl-instrument} is an average of CP maps, hence CP.
    For a basis projector $\ket{s}\!\bra{s}$ we have $Z^b \ket{s}\!\bra{s} Z^b=\ket{s}\!\bra{s}$, so
    \begin{equation}
        \label{eq:z-twirl-on-basis-projector}
        (\mathcal{T}_Z(\Lambda))_o\!\bigl(\ket{s}\!\bra{s}\bigr)
        \;=\;
        \frac{1}{2}\Bigl(\Lambda_o(\ket{s}\!\bra{s}) + Z\,\Lambda_o(\ket{s}\!\bra{s})\,Z\Bigr).
    \end{equation}
    The right-hand side commutes with $Z$, hence is diagonal in the computational basis.
    Because it is also positive semidefinite, its diagonal entries are nonnegative.
    \[
        M^{(o)}_{s',s}
        =\bra{s'}(\mathcal{T}_Z(\Lambda))_o(\ket{s}\!\bra{s})\ket{s'}
        \ge 0.
    \]

    Next, we show the column-stochastic normalization.
    Since $\Lambda$ is an instrument, the sum $\Lambda_0+\Lambda_1$ is TP; conjugation by $Z^b$ is TP; hence averaging preserves TP.
    Therefore $\sum_o (\mathcal{T}_Z(\Lambda))_o$ is TP.
    For each $s$,
    \begin{align}
        1
        \;=\;
        \Tr (\ket{s}\!\bra{s})
        \;=\;
        \Tr \!\left(\sum_{o\in\{0,1\}} (\mathcal{T}_Z(\Lambda))_o(\ket{s}\!\bra{s})\right)
        \;=\;
        \sum_{o\in\{0,1\}} \Tr \!\left((\mathcal{T}_Z(\Lambda))_o(\ket{s}\!\bra{s})\right)
        \;=\;
        \sum_{o\in\{0,1\}}\sum_{s'\in\{0,1\}} M^{(o)}_{s',s},
    \end{align}
    where the last equality uses diagonality of \eqref{eq:z-twirl-on-basis-projector}.
    This proves \eqref{eq:z-twirled-mcm-constraints}.

    \emph{Step 2.
    From $(M^{(0)},M^{(1)})$ to a canonical $Z$-incoherent instrument.}
    Assume $(M^{(0)},M^{(1)})$ satisfies \eqref{eq:z-twirled-mcm-constraints}.
    Define $K_{o,s',s}$ as in \eqref{eq:canonical-kraus} and $\Lambda_o^{\mathrm{can}}$ as in \eqref{eq:canonical-instrument}.
    Each $\Lambda_o^{\mathrm{can}}$ is CP by construction.

    To check trace preservation of the total map, compute the Kraus completeness relation.
    \begin{align}
        \sum_{o\in\{0,1\}} \sum_{s,s'\in\{0,1\}} K_{o,s',s}^\dagger K_{o,s',s}
         & =
        \sum_{o,s,s'} M^{(o)}_{s',s}\,\ket{s}\!\bra{s}
        \\
         & =
        \sum_{s\in\{0,1\}}\Bigl(\sum_{o\in\{0,1\}}\sum_{s'\in\{0,1\}} M^{(o)}_{s',s}\Bigr)\ket{s}\!\bra{s}
        \;=\;
        \sum_{s\in\{0,1\}} \ket{s}\!\bra{s}
        \;=\;
        I,
    \end{align}
    using \eqref{eq:z-twirled-mcm-constraints}.
    Hence $\sum_o \Lambda_o^{\mathrm{can}}$ is TP, so $\Lambda^{\mathrm{can}}$ is a valid instrument.

    \emph{Step 3.
    Reproduction of $M$ on basis inputs and $Z$-incoherence.}
    For a basis projector $\ket{s}\!\bra{s}$,
    \begin{align}
        \Lambda_o^{\mathrm{can}}(\ket{s}\!\bra{s})
         & =
        \sum_{t,t'\in\{0,1\}} K_{o,t',t}\,\ket{s}\!\bra{s}\,K_{o,t',t}^\dagger
        \\
         & =
        \sum_{t'\in\{0,1\}} K_{o,t',s}\,\ket{s}\!\bra{s}\,K_{o,t',s}^\dagger
        \;=\;
        \sum_{t'\in\{0,1\}} M^{(o)}_{t',s}\,\ket{t'}\!\bra{t'},
    \end{align}
    which is exactly \eqref{eq:canonical-reproduces-m} (rename $t'\mapsto s'$).

    More generally, for an arbitrary operator $\rho=\sum_{u,v}\rho_{uv}\ket{u}\!\bra{v}$,
    \begin{align}
        \Lambda_o^{\mathrm{can}}(\rho)
         & =
        \sum_{s,s'} M^{(o)}_{s',s}\,\ket{s'}\!\bra{s}\,
        \left(\sum_{u,v}\rho_{uv}\ket{u}\!\bra{v}\right)
        \ket{s}\!\bra{s'}
        \\
         & =
        \sum_{s,s'} M^{(o)}_{s',s}\,\rho_{ss}\,\ket{s'}\!\bra{s'}.
        \label{eq:canonical-action-general-rho}
    \end{align}
    Equation \eqref{eq:canonical-action-general-rho} shows that (i) $\Lambda_o^{\mathrm{can}}(\rho)$ is always diagonal in the computational basis, and (ii) it depends only on the diagonal entries of $\rho$; hence coherences never feed into populations (i.e., the instrument is $Z$-incoherent).

    \emph{Step 4.
    POVM effects.}
    The POVM effect associated with outcome $o$ is the operator $F_o^{(Z)}\coloneqq \sum_{s,s'} K_{o,s',s}^\dagger K_{o,s',s}$, yielding \eqref{eq:z-twirled-povm-from-m}.
    Summing \eqref{eq:z-twirled-povm-from-m} over $o$ and using \eqref{eq:z-twirled-mcm-constraints} gives $\sum_o F_o^{(Z)}=I$, so $\{F_0^{(Z)},F_1^{(Z)}\}$ is a valid (diagonal) POVM.

    The final “equivalence in learning” statement follows because \eqref{eq:canonical-kraus} and \eqref{eq:z-twirled-povm-from-m} are explicit deterministic maps from $(M^{(0)},M^{(1)})$ to $\{F_o^{(Z)}\}$ and $\{K_{o,s',s}\}$, while $M^{(o)}_{s',s}$ is recovered from \eqref{eq:canonical-kraus} via $M^{(o)}_{s',s}=\|K_{o,s',s}\ket{s}\|^2$.
\end{proof}

\paragraph{Derived error metrics.}
\begin{definition}[Readout and backaction error rates]
    \label{def:readout-backaction-error-rates}
    Consider the $Z$-twirled single-qubit MCM model of Definition~\ref{def:z-twirled-mcm-instrument}, i.e. a pair of nonnegative matrices $(M^{(0)},M^{(1)})$ with entries $M^{(o)}_{s',s}=\Pr(o,\;s'\mid s)$ and column normalization $\sum_{o,s'} M^{(o)}_{s',s}=1$ for each $s\in\{0,1\}$.

    \emph{Readout-confusion probabilities.}
    Define the classical readout confusion matrix $A\in\mathbb{R}^{2\times 2}$ by
    \begin{equation}
        \label{eq:assignment-matrix-def}
        A_{o\mid s}
        \;\coloneqq\;
        \Pr(o\mid s)
        \;=\;
        \sum_{s'\in\{0,1\}} M^{(o)}_{s',s},
        \qquad o,s\in\{0,1\}.
    \end{equation}
    The \emph{asymmetric readout error rates} are the two off-diagonal readout-confusion probabilities
    \begin{equation}
        \label{eq:readout-error-rates-def}
        \epsilon_{1\mid 0}
        \;\coloneqq\;
        A_{1\mid 0}
        \;=\;
        \sum_{s'\in\{0,1\}} M^{(1)}_{s',0},
        \qquad
        \epsilon_{0\mid 1}
        \;\coloneqq\;
        A_{0\mid 1}
        \;=\;
        \sum_{s'\in\{0,1\}} M^{(0)}_{s',1}.
    \end{equation}
    Equivalently, $\epsilon_{1\mid 0}=\Pr(\text{readout }1\mid \text{pre-state }0)$ and $\epsilon_{0\mid 1}=\Pr(\text{readout }0\mid \text{pre-state }1)$, while the correct-readout probabilities are $A_{0\mid 0}=1-\epsilon_{1\mid 0}$ and $A_{1\mid 1}=1-\epsilon_{0\mid 1}$.

    \emph{Backaction (population-transfer) probabilities.}
    Define the unconditional post-measurement population-transfer matrix $B\in\mathbb{R}^{2\times 2}$ by
    \begin{equation}
        \label{eq:backaction-matrix-def}
        B_{s'\mid s}
        \;\coloneqq\;
        \Pr(s'\mid s)
        \;=\;
        \sum_{o\in\{0,1\}} M^{(o)}_{s',s},
        \qquad s,s'\in\{0,1\}.
    \end{equation}
    The \emph{asymmetric backaction error rates} are the two off-diagonal transition probabilities
    \begin{equation}
        \label{eq:backaction-error-rates-def}
        \eta_{\uparrow}
        \;\coloneqq\;
        B_{1\mid 0}
        \;=\;
        \sum_{o\in\{0,1\}} M^{(o)}_{1,0},
        \qquad
        \eta_{\downarrow}
        \;\coloneqq\;
        B_{0\mid 1}
        \;=\;
        \sum_{o\in\{0,1\}} M^{(o)}_{0,1}.
    \end{equation}
    Here $\eta_{\uparrow}$ is the \emph{excitation} probability (population $0\!\to\!1$ induced by the MCM), and $\eta_{\downarrow}$ is the \emph{decay} probability (population $1\!\to\!0$ induced by the MCM).
\end{definition}

\begin{definition}[Population-instrument fidelity used for illustrative examples]
    \label{def:population-instrument-fidelity}
    In the reduced population model, an ideal computational-basis MCM maps a pre-measurement state $s$ to readout $o=s$ and post-measurement state $s'=s$.
    For a $Z$-twirled MCM matrix pair $(M^{(0)},M^{(1)})$, we define the basis-averaged population-instrument fidelity
    \begin{equation}
        \label{eq:population-instrument-fidelity}
        F_{\mathrm{MCM}}^{\mathrm{pop}}
        \;\coloneqq\;
        \frac{1}{2}\Bigl(M^{(0)}_{0,0}+M^{(1)}_{1,1}\Bigr).
    \end{equation}
    Equivalently, $F_{\mathrm{MCM}}^{\mathrm{pop}}$ is the average, over equally weighted inputs $\ket{0}$ and $\ket{1}$, of the probability that the MCM both reports the correct computational-basis outcome and leaves the population in the corresponding computational-basis state.
    The ``70\%'' value used for the simulated instrument in Fig.~\ref{fig:overview}(d) refers to this operational population-level quantity, it does not represent full quantum-process fidelity, assignment fidelity alone, or reset fidelity.
\end{definition}

\noindent The matrix $A$ in \eqref{eq:assignment-matrix-def} is the standard readout confusion matrix used in measurement-error mitigation procedures that assume a purely classical readout channel $s\mapsto o$.
In practice, estimating $A$ by preparing nominal $\ket{0}$ and $\ket{1}$ can conflate state-preparation imperfections with measurement imperfections, i.e.\ it returns a combined SPAM estimate rather than a standalone readout-confusion estimate.
The $Z$-twirled instrument description keeps the strictly more informative joint kernel $M^{(o)}_{s',s}=\Pr(o,s'\mid s)$, which separates readout-confusion behavior \eqref{eq:assignment-matrix-def} from population backaction \eqref{eq:backaction-matrix-def}.
Once $(M^{(0)},M^{(1)})$ is learned, both $A$ and $B$ are fixed immediately, and state-preparation parameters can then be inferred from dedicated circuits in SM Sec.~\ref{app:spam} without absorbing state-preparation errors into measurement parameters.
In this sense instrument learning provides the readout model needed for gauge-aware SPAM error separation while retaining additional information about population backaction; many distinct instruments can share the same $A$ but differ in $B$ and in outcome-conditioned backaction.

\subsection{Forward model for readout-string probabilities}
\label{app:model:forward}

Repeated applications of the $Z$-twirled MCM matrix pair $(M^{(0)},M^{(1)})$ (Definition~\ref{def:z-twirled-mcm-instrument}) induce the following matrix-product expression for readout-string probabilities.

\begin{lemma}[Matrix-product form for readout-string probabilities]
    \label{lem:matrix-product-string-probabilities}
    Fix any $Z$-twirled single-qubit MCM model $(M^{(0)},M^{(1)})$ satisfying \eqref{eq:z-twirled-mcm-constraints}.
    Run this same MCM model $L\ge 1$ times in sequence, and let $w=w_1\cdots w_L\in\{0,1\}^L$ be the recorded readout bit string, ordered chronologically as in SM Sec.~\ref{app:prob-dict}.
    Let $\bm{\pi}=(\pi_0,\pi_1)^{\mathsf{T}}$ be the initial population vector in the computational basis, and let $\bm{1}=(1,1)^{\mathsf{T}}$.

    After the first $\ell$ readouts $w_{1:\ell}$, define the unnormalized population vector
    \begin{equation}
        \label{eq:prefix-unnormalized-state}
        \bm{\pi}^{(w_{1:\ell})}
        \;\coloneqq\;
        M^{(w_\ell)} M^{(w_{\ell-1})}\cdots M^{(w_1)}\,\bm{\pi}
        \;\in\; \mathbb{R}_{\ge 0}^2.
    \end{equation}
    This vector contains both the probability weight of the observed partial readout string and the remaining computational-basis population information after the $\ell$-th MCM.
    Therefore the probability of observing the full readout string $w$ is
    \begin{equation}
        \label{eq:string-prob-matrix-product}
        p(w)
        \;=\;
        \bm{1}^{\mathsf{T}}\,\bm{\pi}^{(w)}
        \;=\;
        \bm{1}^{\mathsf{T}}\,M^{(w)}\,\bm{\pi},
        \qquad
        M^{(w)} \coloneqq M^{(w_L)}\cdots M^{(w_1)}.
    \end{equation}
    Whenever $p(w)>0$, the normalized post-measurement population vector after the final MCM, conditioned on the observed readout string, is
    \begin{equation}
        \label{eq:conditional-poststate}
        \Pr(s_L=\cdot \mid w)
        \;=\;
        \frac{\bm{\pi}^{(w)}}{p(w)}.
    \end{equation}
\end{lemma}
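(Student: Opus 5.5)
The plan is to prove the matrix-product form by induction on $\ell$, tracking the joint law of the recorded prefix and the current computational-basis population. The invariant I would establish is
\begin{equation}
    \label{eq:prop-prefix-invariant}
    \bigl(\bm{\pi}^{(w_{1:\ell})}\bigr)_{s} \;=\; \Pr\bigl(w_{1:\ell},\ s_\ell = s\bigr),
    \qquad s\in\{0,1\},
\end{equation}
where $s_\ell$ denotes the post-measurement basis state after the $\ell$-th MCM, and $s_0$ is distributed according to $\bm{\pi}$. The base case $\ell=0$ is the definition of $\bm{\pi}$, with the empty product equal to the identity. In the $Z$-twirled model all states stay diagonal: by Proposition~\ref{prop:mcm-equivalence-learning}, Eq.~\eqref{eq:canonical-action-general-rho}, each outcome map sends a diagonal operator to a diagonal operator and depends only on populations. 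So a classical description by population vectors is exact, and no coherences need to be carried along.

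For the inductive step I would use the stationarity and Markov assumptions. Conditioned on the entire past, including the readouts $w_{1:\ell}$ and the current state $s_\ell=s$, the $(\ell+1)$-th MCM produces $(o,s')$ with probability $M^{(o)}_{s',s}$ from Eq.~\eqref{eq:joint-prob-def}. The chain rule then gives
\[
\Pr(w_{1:\ell}o,\ s_{\ell+1}=s') = \sum_s M^{(o)}_{s',s}\,\Pr(w_{1:\ell},\ s_\ell=s).
\]
This is exactly the $s'$ component of $M^{(o)}\bm{\pi}^{(w_{1:\ell})}$, so the invariant propagates. Nonnegativity of $\bm{\pi}^{(w_{1:\ell})}$ follows because it is a product of entrywise nonnegative matrices applied to a nonnegative vector.

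With the invariant in hand, the two remaining claims are short. Setting $\ell=L$ and marginalizing over the final state $s_L$ by contracting with $\bm{1}^{\mathsf T}$ gives $p(w)=\bm{1}^{\mathsf T}M^{(w)}\bm{\pi}$. When $p(w)>0$, dividing the joint probability in \eqref{eq:prop-prefix-invariant} by this marginal is Bayes' rule, which gives Eq.~\eqref{eq:conditional-poststate}. As a consistency check, I would also verify that these probabilities are normalized over $\{0,1\}^L$. Summing over $w_L$ uses the TP constraint in the form $\bm{1}^{\mathsf T}(M^{(0)}+M^{(1)})=\bm{1}^{\mathsf T}$, and iterating this peels off one MCM at a time until only $\bm{1}^{\mathsf T}\bm{\pi}=1$ remains. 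The same telescoping shows that the prefix marginals of Eq.~\eqref{eq:prefix-marginal} agree with the matrix-product formula at shorter length.

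The main obstacle is making the Markov step rigorous at the operator level rather than only in the classical picture. I would handle it by writing the $L$-fold composed instrument map $(\mathcal{T}_Z(\Lambda))_{w_L}\circ\cdots\circ(\mathcal{T}_Z(\Lambda))_{w_1}$ applied to $\rho=\mathrm{diag}(\bm{\pi})$. Each $\mathcal{T}_Z$-twirled map sends a diagonal operator to a diagonal one whose population vector is $M^{(o)}$ times the input populations, by Eq.~\eqref{eq:z-twirl-on-basis-projector} and linearity. Induction on this operator identity then reproduces \eqref{eq:prop-prefix-invariant}, and the trace of the final operator is $p(w)$.
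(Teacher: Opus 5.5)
Your proposal is correct and is essentially the paper's argument: the paper writes the path probability $\pi_{s_0}\prod_{\ell} M^{(w_\ell)}_{s_\ell,s_{\ell-1}}$ from the kernel interpretation and the Markov property, and identifies its partial sums with the forward recursion. That recursion is exactly your invariant $\bigl[\bm{\pi}^{(w_{1:\ell})}\bigr]_{s}=\Pr(w_{1:\ell},s_\ell=s)$, and the paper then marginalizes with $\bm{1}^{\mathsf{T}}$ and applies Bayes' rule, just as you do by induction. Your normalization check and operator-level justification are sound additions, though for a general $\mathcal{T}_Z(\Lambda)$ the relevant fact is Eq.~\eqref{eq:z-twirl-on-basis-projector} plus linearity on diagonal inputs, as you use at the end, rather than Eq.~\eqref{eq:canonical-action-general-rho}, which describes only the canonical realization.
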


\begin{proof}
    The defining interpretation \eqref{eq:joint-prob-def} of the MCM matrix pair is
    \begin{equation}
        \label{eq:kernel-interpretation}
        M^{(o)}_{s',s} \;=\; \Pr(o,s'\mid s),
        \qquad s,s',o\in\{0,1\}.
    \end{equation}
    For a length-$L$ run, let $s_0$ denote the computational-basis state before the first MCM and let $s_\ell$ denote the computational-basis state after the $\ell$-th MCM:
    \begin{equation}
        \label{eq:hidden-state-seq}
        s_0 \;\text{(pre first MCM)},\qquad
        s_\ell \;\text{(post $\ell$-th MCM)} \ \text{for }\ell=1,\dots,L.
    \end{equation}
    Conditioned on the pre-MCM basis state $s_{\ell-1}$, the $\ell$-th MCM produces the readout bit $w_\ell$ and the post-MCM basis state $s_\ell$ with probability $M^{(w_\ell)}_{s_\ell,s_{\ell-1}}$ by \eqref{eq:kernel-interpretation}.
    Therefore, for any fixed computational-basis state sequence $(s_0,\dots,s_L)$, the joint probability of that sequence and the observed readout string $w$ is
    \begin{equation}
        \label{eq:path-prob}
        \Pr(w,s_0,\dots,s_L)
        \;=\;
        \pi_{s_0}\,\prod_{\ell=1}^L M^{(w_\ell)}_{s_\ell,s_{\ell-1}}.
    \end{equation}
    Summing \eqref{eq:path-prob} over all computational-basis state sequences yields
    \begin{equation}
        \label{eq:string-prob-sum-over-paths}
        p(w)
        \;=\;
        \sum_{s_0,\dots,s_L\in\{0,1\}} \pi_{s_0}\,\prod_{\ell=1}^L M^{(w_\ell)}_{s_\ell,s_{\ell-1}}.
    \end{equation}

    To identify the matrix-product structure, define recursively the unnormalized population vectors
    \begin{equation}
        \label{eq:forward-recursion}
        \bm{\pi}^{(w_{1:0})}\coloneqq \bm{\pi},
        \qquad
        \bm{\pi}^{(w_{1:\ell})}\coloneqq M^{(w_\ell)}\,\bm{\pi}^{(w_{1:\ell-1})}
        \ \text{ for }\ell=1,\dots,L.
    \end{equation}
    Expanding \eqref{eq:forward-recursion} entry-wise shows that, for each $\ell$ and each $s_\ell\in\{0,1\}$,
    \begin{equation}
        \label{eq:forward-recursion-entrywise}
        \bigl[\bm{\pi}^{(w_{1:\ell})}\bigr]_{s_\ell}
        \;=\;
        \sum_{s_0,\dots,s_{\ell-1}\in\{0,1\}} \pi_{s_0}\,\prod_{j=1}^\ell M^{(w_j)}_{s_j,s_{j-1}}.
    \end{equation}
    Comparing \eqref{eq:forward-recursion-entrywise} at $\ell=L$ with \eqref{eq:string-prob-sum-over-paths} and then summing over the final computational-basis state $s_L$ yields
    \begin{equation}
        \label{eq:marginalize-final-state}
        p(w)
        \;=\;
        \sum_{s_L\in\{0,1\}}\bigl[\bm{\pi}^{(w)}\bigr]_{s_L}
        \;=\;
        \bm{1}^{\mathsf{T}}\,\bm{\pi}^{(w)}
        \;=\;
        \bm{1}^{\mathsf{T}}\,M^{(w)}\,\bm{\pi},
    \end{equation}
    which is \eqref{eq:string-prob-matrix-product}.

    Finally, \eqref{eq:forward-recursion-entrywise} shows that $\bm{\pi}^{(w)}$ is the joint distribution of the observed readout string and the final computational-basis state, namely $\bigl[\bm{\pi}^{(w)}\bigr]_{s_L}=\Pr(w,s_L)$.
    When $p(w)=\sum_{s_L}\Pr(w,s_L)>0$, Bayes' rule gives $\Pr(s_L=\cdot\mid w)=\Pr(w,\cdot)/p(w)=\bm{\pi}^{(w)}/p(w)$, proving \eqref{eq:conditional-poststate}.
\end{proof}

\subsection{Learning assumptions and diagnostic scope}
\label{app:model:noise-assumptions}

The learning model in SM Sec.~\ref{app:model} treats each mid-circuit measurement (MCM) as a \emph{fixed} two-outcome single-qubit instrument, reduced by the $Z$-twirl to the classical kernel $M^{(o)}_{s',s}=\Pr(o,s'\mid s)$ of Definition~\ref{def:z-twirled-mcm-instrument}.
When the same instrument is applied repeatedly, the resulting readout-string probabilities admit the matrix-product form recorded in Lemma~\ref{lem:matrix-product-string-probabilities}.
The experimental data are well approximated by that model when the following physical assumptions hold.

\begin{assumption}[Stationarity and Markovianity]
    \label{ass:stationarity-markovianity}
    Throughout the learning stage (i.e., across all circuits and shots used to build the measured readout-string distribution), we assume the following.

    \emph{(i) Time-homogeneous MCM within a run.}
    Within a single circuit run containing $L$ consecutive MCM applications, each application is described by the \emph{same} pair of matrices $(M^{(0)},M^{(1)})$ (equivalently, the same $Z$-twirled instrument), independent of the step index $\ell=1,\dots,L$.

    \emph{(ii) Markov property at the level of the reduced single-qubit model.}
    Conditioned on the computational-basis state $s_{\ell-1}\in\{0,1\}$ immediately before the $\ell$-th MCM, the joint distribution of the next observed readout bit and the next post-measurement computational-basis state, $(w_\ell,s_\ell)$, depends only on $s_{\ell-1}$ and not on earlier computational-basis states or earlier readouts.
    Equivalently, the transition kernel is precisely $M^{(w_\ell)}_{s_\ell,s_{\ell-1}}$ as used in Eqs.~\eqref{eq:kernel-interpretation}--\eqref{eq:string-prob-matrix-product}.

    \emph{(iii) Shot-to-shot stationarity (i.i.d.\ sampling).}
    Across repeated shots of any fixed circuit, the underlying model parameters (and hence the string distribution $p(w)$) are stable, so empirical frequencies concentrate around the probabilities generated by the fixed $(M^{(0)},M^{(1)})$.

    \emph{(iv) Effective single-qubit reduction.}
    Any additional degrees of freedom (spectator qubits, classical control electronics, etc.) that influence the measured qubit are assumed to be either negligible or effectively absorbed into a stationary single-qubit instrument description over the timescale of data acquisition.
\end{assumption}

\begin{assumption}[No leakage (base model)]
    \label{ass:no-leakage-base}
    In the base learning model, the measured system is assumed to remain within the computational subspace $\mathrm{span}\{\ket{0},\ket{1}\}$ for the duration of each circuit run and across the learning dataset.
    Consequently, the post-measurement computational-basis state $s'\in\{0,1\}$ appearing in $M^{(o)}_{s',s}=\Pr(o,s'\mid s)$ is sufficient to describe population backaction, and the normalization constraints \eqref{eq:z-twirled-mcm-constraints} are appropriate.

    If leakage is non-negligible, then the effective dynamics of the measured qubit can violate this two-state description (e.g., via population loss to non-computational levels), leading to systematic model mismatch in string statistics.
    This work does not fit an enlarged leakage model; such effects are treated as model mismatch diagnosed by the overcomplete checks in SM Sec.~\ref{app:validation}.
\end{assumption}

\paragraph{Model-mismatch interpretation.}
Assumption~\ref{ass:stationarity-markovianity} is an operational idealization: real devices can exhibit slow parameter drift, non-Markovian memory, or residual couplings that are not captured by a stationary single-qubit instrument.
In this work, violations are treated as model mismatch and are diagnosed empirically by self-consistency checks and prediction-error tests using overcomplete string data (see SM Sec.~\ref{app:validation}).

\section{Gauge structure and short readout-string invariants}
\label{app:gauge}

\subsection{Gauge transformation and invariance}
\label{sec:gauge-transformation-invariance}

The similarity-gauge freedom of the two-state population representation is intrinsic to learning a $Z$-twirled single-qubit MCM instrument from \emph{readout-string probabilities only} when the effective input is the maximally-mixed state.
Concretely, in our $2\times 2$ classical representation of the $Z$-incoherent post-measurement state, a readout-string probability is a scalar of the form
\begin{equation}
    p(o_1,\dots,o_L)
    \;=\;
    \bm{1}^{\mathsf{T}}\, M^{(o_L)} \cdots M^{(o_1)}\, \bm{\pi}_{\mathrm{mix}},
    \qquad
    \bm{1}:=\begin{pmatrix}1\\1\end{pmatrix},
    \quad
    \bm{\pi}_{\mathrm{mix}}:=\tfrac12\,\bm{1},
\end{equation}
(and similarly for protocols that insert additional trace-preserving linear maps between MCM calls and treat those maps as part of the same gauge-dependent representation).
The key point is that $\bm{\pi}_{\mathrm{mix}}$ and $\bm{1}^{\mathsf{T}}$ are simultaneously invariant under a nontrivial one-parameter family of invertible similarity transformations, implying an \emph{irreducible} one-degree-of-freedom non-identifiability unless external information breaks this symmetry (e.g., a trusted preparation of a non-maximally-mixed input).

\begin{definition}[Gauge matrices and gauge action]
    \label{def:gauge-action}
    For any real parameter $t\in\mathbb{R}\setminus\{\tfrac12\}$, define
    \begin{equation}
        R(t)
        \;:=\;
        \begin{pmatrix}
            1-t & t   \\
            t   & 1-t
        \end{pmatrix},
        \qquad
        \det R(t)=1-2t\neq 0.
    \end{equation}
    We define the associated \emph{gauge action} $\Phi_t$ on any $2\times 2$ matrix $A$ by
    \begin{equation}
        \Phi_t(A)
        \;:=\;
        R(t)^{-1} A\, R(t).
    \end{equation}
    In particular, the gauge-transformed MCM instrument is
    \begin{equation}
        M^{(o)} \longmapsto \widetilde M^{(o)} := \Phi_t\!\big(M^{(o)}\big)=R(t)^{-1} M^{(o)} R(t),
        \qquad o\in\{0,1\}.
    \end{equation}
\end{definition}

Two elementary invariances will be repeatedly used.
\begin{equation}
    R(t)\,\bm{\pi}_{\mathrm{mix}} = \bm{\pi}_{\mathrm{mix}},
    \qquad
    \bm{1}^{\mathsf{T}} R(t)^{-1}=\bm{1}^{\mathsf{T}}.
    \label{eq:R-invariants-v0-1}
\end{equation}
Both follow from $R(t)\bm{1}=\bm{1}$ (row/column sums equal to $1$), and invertibility of $R(t)$.

\paragraph{Composition law.}
The matrices $R(t)$ form an Abelian one-parameter group under multiplication.
Direct multiplication gives, for $t_1,t_2\neq \tfrac12$,
\begin{equation}
    R(t_1)R(t_2)
    \;=\;
    \begin{pmatrix}
        1-t_1-t_2+2t_1t_2 & t_1+t_2-2t_1t_2   \\
        t_1+t_2-2t_1t_2   & 1-t_1-t_2+2t_1t_2
    \end{pmatrix}
    \;=\;
    R(t_1\odot t_2),
\end{equation}
with induced parameter composition law
\begin{equation}
    t_1\odot t_2
    \;:=\;
    t_1+t_2-2t_1t_2.
    \label{eq:gauge-composition-law}
\end{equation}
Equivalently, in the basis $\{\bm{1},(1,-1)^\top\}$ the matrix $R(t)$ acts with eigenvalues $1$ and $u:=1-2t$, so the gauge group is isomorphic to multiplication on $\mathbb{R}\setminus\{0\}$.
Thus $R(t)$ is the classical population action of a qubit depolarizing gauge.
Closely related depolarizing gauges appear in incoherent SPAM characterization, where extra energy levels can reduce the allowed ambiguity through stronger positivity constraints~\cite{chenEnhancingQuantumNoise2025}.
In our setting, the same population gauge acts on outcome-conditioned MCM transition matrices and therefore redistributes readout/backaction metrics rather than only state-preparation and readout errors.
The identity is $R(0)=I$, and the inverse is
\begin{equation}
    t^{-1}
    \;=\;
    \frac{-t}{1-2t},
    \qquad\text{so that}\qquad
    R(t)^{-1}=R(t^{-1}).
\end{equation}

\begin{theorem}[Gauge invariance of readout-string probability data]
    \label{thm:gauge-invariance-prob-only}
    Fix $\bm{\pi}_{\mathrm{mix}}=\tfrac12(1,1)^\top$.
    Consider any protocol family whose reported data are \emph{only} readout-string probabilities obtained by composing the two-outcome instrument $\{M^{(0)},M^{(1)}\}$ and marginalizing over the final computational-basis state with $\bm{1}^{\mathsf{T}}$.
    More generally, the same statement holds if the protocol includes additional trace-preserving linear maps on $\mathbb{R}^2$ that are treated as unknown parts of the same gauge-dependent representation.
    Then, for every $t\neq \tfrac12$, simultaneously applying the gauge action \( M^{(o)}\mapsto \widetilde M^{(o)}:=R(t)^{-1}M^{(o)}R(t) \) (and conjugating any such interleaved maps by the same $R(t)$) leaves \emph{all} readout-string probabilities invariant.

    Externally characterized interleaves or trusted non-maximally mixed preparations are different.
    They are external reference operations, are not conjugated by this representation change, and can make the gauge parameter observable.
    Consequently, any quantity computed solely from the measured readout-string distribution is gauge-invariant, whereas quantities that use externally characterized operations are conditional on those external characterizations.
\end{theorem}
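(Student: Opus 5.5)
The argument is a telescoping computation built on the two invariances in Eq.~\eqref{eq:R-invariants-v0-1}. I would first fix the general form of a reported datum. Any readout-string probability produced by the protocol family is a scalar
\[
p = \bm{1}^{\mathsf{T}}\, G_{K}\, G_{K-1}\cdots G_{1}\, \bm{\pi}_{\mathrm{mix}},
\]
where each factor $G_j$ is either one of the instrument matrices $M^{(o)}$ or one of the interleaved trace-preserving linear maps $N$ on $\mathbb{R}^2$ that are treated as gauge-dependent. This is just Lemma~\ref{lem:matrix-product-string-probabilities}, extended by inserting the interleaved maps between MCM calls, so every datum is a sum over paths of the same matrix-product type. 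Probabilities of shorter prefixes are defined by marginalization, Eq.~\eqref{eq:prefix-marginal}. They are therefore finite sums of such products, and invariance of each product implies their invariance.

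Next I apply the gauge simultaneously, $G_j\mapsto \widetilde G_j = R(t)^{-1} G_j R(t)$ for every $j$, with the same $t\neq\tfrac12$. In the product $\widetilde G_K\cdots \widetilde G_1$, each internal pair $R(t)R(t)^{-1}=I$ cancels, leaving
\[
\bm{1}^{\mathsf{T}} R(t)^{-1}\, G_K\cdots G_1\, R(t)\,\bm{\pi}_{\mathrm{mix}}.
\]
Both remaining boundary factors drop out by Eq.~\eqref{eq:R-invariants-v0-1}. Indeed $R(t)\bm{1}=\bm{1}$ because each row of $R(t)$ sums to $1$, so $R(t)\bm{\pi}_{\mathrm{mix}}=\bm{\pi}_{\mathrm{mix}}$. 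By symmetry of $R(t)$ we also have $\bm{1}^{\mathsf{T}}R(t)=\bm{1}^{\mathsf{T}}$, and right-multiplying by $R(t)^{-1}$, which exists since $\det R(t)=1-2t\neq0$, gives $\bm{1}^{\mathsf{T}}=\bm{1}^{\mathsf{T}}R(t)^{-1}$. Hence every string probability equals its untransformed value, which also recovers Eq.~\eqref{eq:gauge-invariance} as the special case with no interleaves.

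I would also note a consistency check. The gauge preserves the TP property, $\bm{1}^{\mathsf{T}}(\widetilde M^{(0)}+\widetilde M^{(1)})=\bm{1}^{\mathsf{T}}$, and likewise for each $\widetilde N$. The transformed objects are therefore legitimate members of the same representation class, though they need not be entrywise nonnegative; positivity is deferred to the gauge-set analysis.

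For the remaining claims, the consequence is immediate. Any functional $f$ of the measured distribution satisfies $f(\{p(w)\})=f(\{\widetilde p(w)\})$, so it is gauge-invariant. The clause about external references needs a different mechanism. If an interleave $N_{\mathrm{ext}}$ is held fixed rather than conjugated, the telescoping fails at that site and leaves the factor $R(t)\,N_{\mathrm{ext}}\,R(t)^{-1}$ in place of $N_{\mathrm{ext}}$. This factor differs from $N_{\mathrm{ext}}$ unless $N_{\mathrm{ext}}$ commutes with $R(t)$. Similarly, a trusted preparation $\bm{\pi}\neq\bm{\pi}_{\mathrm{mix}}$ is generally not fixed by $R(t)$. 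I would show this with one example: $\bm{\pi}=(1,0)^{\mathsf{T}}$ is mapped to $(1-t,t)^{\mathsf{T}}$, which shows the gauge can become observable. This clause is a remark rather than a universal claim, so a single witness is enough.

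The only real obstacle is stating the protocol class precisely enough that every datum has the product form above. In particular, the interleaved maps must enter only through linear composition and must be conjugated by the same $R(t)$. Once that is fixed, the proof is the one-line telescoping argument.
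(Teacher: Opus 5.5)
Your proposal is correct and follows the paper's proof essentially step for step. Each path probability is written as $\bm{1}^{\mathsf{T}}F\,\bm{\pi}_{\mathrm{mix}}$, every factor is conjugated by the same $R(t)$ so the product becomes $R(t)^{-1}FR(t)$, the boundary factors are absorbed via $\bm{1}^{\mathsf{T}}R(t)^{-1}=\bm{1}^{\mathsf{T}}$ and $R(t)\bm{\pi}_{\mathrm{mix}}=\bm{\pi}_{\mathrm{mix}}$, and summing over paths or prefix marginals preserves the equality. Your external-reference witness goes slightly beyond the paper, which only asserts that clause; to complete it, note that with $\bm{\pi}=(1,0)^{\mathsf{T}}$ held fixed, $p(0)$ shifts by $t\,(A_{0\mid 1}-A_{0\mid 0})$, which is nonzero for $t\neq 0$ whenever the readout discriminates.
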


\begin{proof}
    It suffices to check a single classical-control path corresponding to a fixed readout string and fixed classical-control choices.
    Such a path probability can be written as
    \begin{equation}
        p
        \;=\;
        \bm{1}^{\mathsf{T}} F\, \bm{\pi}_{\mathrm{mix}},
    \end{equation}
    where $F$ is the ordered product of the matrices executed along that path (each factor is either an $M^{(o)}$ or an interleaved trace-preserving map treated as part of the same fitted model).
    Under the gauge action, every factor $A$ is replaced by $\widetilde A=R^{-1}AR$, hence the full path product becomes $\widetilde F = R^{-1} F R$.
    Therefore
    \begin{equation}
        \widetilde p
        \;=\;
        \bm{1}^{\mathsf{T}} \widetilde F\, \bm{\pi}_{\mathrm{mix}}
        \;=\;
        \bm{1}^{\mathsf{T}} (R^{-1} F R)\, \bm{\pi}_{\mathrm{mix}}
        \;=\;
        (\bm{1}^{\mathsf{T}} R^{-1})\,F\,(R \bm{\pi}_{\mathrm{mix}})
        \;=\;
        \bm{1}^{\mathsf{T}} F \bm{\pi}_{\mathrm{mix}}
        \;=\;
        p,
    \end{equation}
    where we used \eqref{eq:R-invariants-v0-1}.
    Summing over classical-control paths (adaptive control) preserves equality, and any post-processing that depends only on the resulting readout-string distribution is therefore gauge-invariant as well.
    An externally characterized map would instead remain fixed under the representation change, which is why it can make the gauge parameter observable rather than becoming part of the gauge-equivalent family.
\end{proof}

\begin{theorem}[The gauge is the only non-identifiability in the full-span $2\times 2$ case]
    \label{thm:one-dof-nonidentifiability}
    Let $\{M^{(0)},M^{(1)}\}$ and $\{\widehat M^{(0)},\widehat M^{(1)}\}$ be two $2\times 2$ realizations with nonnegative entries and satisfying the (aggregate) trace-preservation / normalization constraint
    \begin{equation}
        \bm{1}^{\mathsf{T}}\!\big(M^{(0)}+M^{(1)}\big)=\bm{1}^{\mathsf{T}},
        \qquad
        \bm{1}^{\mathsf{T}}\!\big(\widehat M^{(0)}+\widehat M^{(1)}\big)=\bm{1}^{\mathsf{T}}.
    \end{equation}
    Assume that for every length $L\ge 0$ and every readout string $(o_1,\dots,o_L)\in\{0,1\}^L$,
    \begin{equation}
        \bm{1}^{\mathsf{T}} M^{(o_L)}\cdots M^{(o_1)} \bm{\pi}_{\mathrm{mix}}
        \;=\;
        \bm{1}^{\mathsf{T}} \widehat M^{(o_L)}\cdots \widehat M^{(o_1)} \bm{\pi}_{\mathrm{mix}},
        \qquad \bm{\pi}_{\mathrm{mix}}=\tfrac12(1,1)^\top.
        \label{eq:prob-equality-all-strings}
    \end{equation}
    Further assume the following two full-span conditions on the two-dimensional population representation.
    These are standard linear-systems conditions, not additional quantum-information terminology: the first says that the unnormalized population vectors generated from $\bm{\pi}_{\mathrm{mix}}$ by all possible readout strings span the full two-dimensional population space.
    \begin{equation}
        \mathrm{span}\{ M^{(o_L)}\cdots M^{(o_1)} \bm{\pi}_{\mathrm{mix}} : L\ge 0,\ o_i\in\{0,1\} \} = \mathbb{R}^2,
        \label{eq:full-span-populations}
    \end{equation}
    The second says that the row vectors obtained by propagating the final marginalization functional $\bm{1}^{\mathsf{T}}$ backward through all possible readout strings span the full dual space.
    \begin{equation}
        \mathrm{span}\{ \bm{1}^{\mathsf{T}} M^{(o_L)}\cdots M^{(o_1)} : L\ge 0,\ o_i\in\{0,1\} \} = \mathbb{R}^{1\times 2}.
        \label{eq:full-span-dual}
    \end{equation}
    Together, these two span assumptions are the usual ``minimal realization'' condition in linear systems; operationally, they rule out a redundant one-dimensional population representation.
    Then there exists a unique parameter $t\in\mathbb{R}\setminus\{\tfrac12\}$ such that
    \begin{equation}
        \widehat M^{(o)} \;=\; R(t)^{-1} M^{(o)} R(t),
        \qquad o\in\{0,1\}.
    \end{equation}
    Equivalently, the observational equivalence class of $\{M^{(0)},M^{(1)}\}$ under readout-string probability access from $\bm{\pi}_{\mathrm{mix}}$ is exactly the one-parameter gauge-equivalent family $\{\Phi_t\}$.
\end{theorem}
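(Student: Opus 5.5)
This is the standard minimal-realization uniqueness argument (Kalman/Ho) for weighted automata, followed by a small computation showing that the similarity must lie in the $R(t)$ family. Write the forward vectors as $v_w := M^{(w)}\bm{\pi}_{\mathrm{mix}}$ and the backward row vectors as $u_{w'} := \bm{1}^{\mathsf{T}} M^{(w')}$, with hatted analogues for the second realization. Hypothesis \eqref{eq:prob-equality-all-strings}, applied to concatenated strings, gives $u_{w'} v_w = \hat u_{w'}\hat v_w$ for all $w,w'$, and also $u_{w'} M^{(o)} v_w = \hat u_{w'}\widehat M^{(o)}\hat v_w$.

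\emph{Step 1: construct the similarity.} By \eqref{eq:full-span-populations}, choose strings $w_1,w_2$ with $V=[v_{w_1}\,v_{w_2}]$ invertible. By \eqref{eq:full-span-dual}, choose $w'_1,w'_2$ with $U=\begin{pmatrix}u_{w'_1}\\u_{w'_2}\end{pmatrix}$ invertible. The Hankel block $H=UV$ is then invertible, and $H=\hat U\hat V$, so $\hat U$ and $\hat V$ are invertible as well. (In particular, the hatted realization automatically satisfies the span conditions.) Define $S:=\hat V V^{-1}$.

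\emph{Step 2: verify it intertwines everything.} Since $H=UV=\hat U\hat V$, we get $\hat U^{-1}U=\hat V V^{-1}=S$, that is, $U=\hat U S$. For any string $w$, the vectors $Uv_w$ and $\hat U\hat v_w$ agree, so $\hat v_w=Sv_w$. The dual argument gives $\hat u_{w'}=u_{w'}S^{-1}$. For the generators, $UM^{(o)}V=\hat U\widehat M^{(o)}\hat V$, which yields $\widehat M^{(o)}=SM^{(o)}S^{-1}$. The empty string then gives $S\bm{\pi}_{\mathrm{mix}}=\bm{\pi}_{\mathrm{mix}}$ and $\bm{1}^{\mathsf{T}}S^{-1}=\bm{1}^{\mathsf{T}}$, i.e.\ $\bm{1}^{\mathsf{T}}S=\bm{1}^{\mathsf{T}}$. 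Uniqueness of $S$ follows because any intertwiner $S'$ satisfying $\hat v_w=S'v_w$ for all $w$ is pinned down on the spanning set $\{v_w\}$.

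\emph{Step 3: identify $S$ with some $R(t)^{-1}$.} This is the step requiring care. Write $S=\begin{pmatrix}a&b\\c&d\end{pmatrix}$. The condition $S\bm{1}=\bm{1}$ gives $a+b=1$ and $c+d=1$. The condition $\bm{1}^{\mathsf{T}}S=\bm{1}^{\mathsf{T}}$ gives $a+c=1$ and $b+d=1$. Hence $d=a$ and $c=b=1-a$, so $S=R(s)$ with $s=1-a$. Invertibility forces $s\ne\tfrac12$. Setting $t:=s^{-1}$ under the composition law \eqref{eq:gauge-composition-law}, we have $R(t)=R(s)^{-1}$, so $S=R(t)^{-1}$ and $\widehat M^{(o)}=R(t)^{-1}M^{(o)}R(t)$.

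\emph{Uniqueness of $t$.} This follows from the uniqueness of $S$ together with injectivity of $t\mapsto R(t)$ (compare off-diagonal entries).

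The aggregate trace-preservation hypothesis and nonnegativity are not needed beyond consistency. The only real obstacle is bookkeeping in Step 2: the Hankel identities must be stated with the correct string-concatenation order, since $M^{(w)}$ composes right to left, so that $u_{w'}M^{(o)}v_w$ is indeed a probability of the string $w\,o\,w'$.
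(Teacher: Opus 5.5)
Your proof is correct, and its skeleton matches the paper's. Both build a similarity between the two minimal realizations, then use the empty string ($S\bm{\pi}_{\mathrm{mix}}=\bm{\pi}_{\mathrm{mix}}$, $\bm{1}^{\mathsf{T}}S=\bm{1}^{\mathsf{T}}$) to force it into the $R$-family. The difference is in how the similarity is built. The paper defines $T$ abstractly on the spanning set by $T(F_w\bm{\pi}_{\mathrm{mix}})=\widehat F_w\bm{\pi}_{\mathrm{mix}}$ and argues well-definedness and surjectivity. You instead factor a finite $2\times 2$ Hankel block, $H=UV=\widehat U\widehat V$.

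Your version is actually tighter than the paper's write-up in three places.

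(i) The paper's well-definedness step needs the hatted backward vectors $\bm{1}^{\mathsf{T}}\widehat F_{w'}$ to span. Its surjectivity step needs the hatted forward vectors $\widehat F_w\bm{\pi}_{\mathrm{mix}}$ to span. The span hypotheses, however, are imposed only on the unhatted realization. Your remark $\det\widehat U\,\det\widehat V=\det H\neq 0$ supplies exactly this missing fact.

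(ii) The paper sets $T=R(t)$ with $t=1-a$ and substitutes into $\widehat M^{(o)}=TM^{(o)}T^{-1}$. That yields $R(t)M^{(o)}R(t)^{-1}$, not the stated $R(t)^{-1}M^{(o)}R(t)$. Your passage to the group inverse $t=s^{-1}$ gets the orientation right. Existence is unaffected either way, since the family is closed under inversion.

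(iii) For uniqueness, injectivity of $t\mapsto R(t)$ alone does not rule out a second gauge realizing the same pair. Your route via uniqueness of $S$ on the spanning set does rule it out, once you state the one-line link. Any $R(t')$ with $\widehat M^{(o)}=R(t')^{-1}M^{(o)}R(t')$ satisfies $\widehat v_w=R(t')^{-1}v_w$, because $R(t')\bm{\pi}_{\mathrm{mix}}=\bm{\pi}_{\mathrm{mix}}$. Hence $R(t')^{-1}=S$.
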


\begin{proof}
    Define the set $\mathcal{W}$ consisting of finite readout bit strings over the alphabet $\{0,1\}$, including the empty string $\emptyset$.
    For each readout bit string $w=(o_1,\dots,o_L)$, write \( F_w:=M^{(o_L)}\cdots M^{(o_1)} \) and \( \widehat F_w:=\widehat M^{(o_L)}\cdots \widehat M^{(o_1)}. \)
    By the first full-span condition \eqref{eq:full-span-populations}, the vectors $\{F_w \bm{\pi}_{\mathrm{mix}}\}_{w\in\mathcal{W}}$ span $\mathbb{R}^2$.

    We define a linear map $T:\mathbb{R}^2\to\mathbb{R}^2$ on the spanning set by
    \begin{equation}
        T(F_w \bm{\pi}_{\mathrm{mix}}) \;:=\; \widehat F_w \bm{\pi}_{\mathrm{mix}},
        \qquad w\in\mathcal{W}.
        \label{eq:define-T}
    \end{equation}
    We must check that $T$ is well-defined.
    Suppose $\sum_i c_i F_{w_i} \bm{\pi}_{\mathrm{mix}}=0$ for some finite collection.
    Left-multiplying by any row vector in the span in \eqref{eq:full-span-dual} and using \eqref{eq:prob-equality-all-strings} implies the same linear dependence holds for $\{\widehat F_{w_i} \bm{\pi}_{\mathrm{mix}}\}$.
    Since those row vectors span the full dual space, \eqref{eq:full-span-dual} forces $\sum_i c_i \widehat F_{w_i} \bm{\pi}_{\mathrm{mix}}=0$.
    Hence \eqref{eq:define-T} is consistent and extends linearly to all of $\mathbb{R}^2$.
    By construction and the first full-span condition, $T$ is surjective; since the space is finite-dimensional, $T$ is invertible.

    Next, for any readout bit string $w$ and any outcome $o\in\{0,1\}$,
    \begin{equation}
        T\!\big(M^{(o)}F_w \bm{\pi}_{\mathrm{mix}}\big)
        \;=\;
        T(F_{(w,o)} \bm{\pi}_{\mathrm{mix}})
        \;=\;
        \widehat F_{(w,o)} \bm{\pi}_{\mathrm{mix}}
        \;=\;
        \widehat M^{(o)} \widehat F_w \bm{\pi}_{\mathrm{mix}}
        \;=\;
        \widehat M^{(o)}\,T(F_w \bm{\pi}_{\mathrm{mix}}).
    \end{equation}
    Since $\{F_w \bm{\pi}_{\mathrm{mix}}\}$ spans $\mathbb{R}^2$, we obtain the intertwining relation
    \begin{equation}
        T\,M^{(o)} \;=\; \widehat M^{(o)}\,T,
        \qquad\text{hence}\qquad
        \widehat M^{(o)} \;=\; T\,M^{(o)}\,T^{-1}.
        \label{eq:similarity-T}
    \end{equation}

    Finally, the empty string gives $F_\emptyset=I$ and $\widehat F_\emptyset=I$, hence
    \begin{equation}
        T \bm{\pi}_{\mathrm{mix}} = \bm{\pi}_{\mathrm{mix}}.
        \label{eq:Tv0}
    \end{equation}
    Moreover, for any $x$ in the span generated by the vectors $F_w\bm{\pi}_{\mathrm{mix}}$, write $x=\sum_i c_i F_{w_i} \bm{\pi}_{\mathrm{mix}}$; then
    \begin{equation}
        \bm{1}^{\mathsf{T}} T x
        \;=\;
        \sum_i c_i\,\bm{1}^{\mathsf{T}} \widehat F_{w_i} \bm{\pi}_{\mathrm{mix}}
        \;=\;
        \sum_i c_i\,\bm{1}^{\mathsf{T}} F_{w_i} \bm{\pi}_{\mathrm{mix}}
        \;=\;
        \bm{1}^{\mathsf{T}} x,
    \end{equation}
    using \eqref{eq:prob-equality-all-strings}.
    By the first full-span condition, this holds for all $x\in\mathbb{R}^2$, so
    \begin{equation}
        \bm{1}^{\mathsf{T}} T=\bm{1}^{\mathsf{T}}.
        \label{eq:1T-T}
    \end{equation}

    Now specialize to $2\times 2$.
    Write $T=\begin{pmatrix}a&b\\c&d\end{pmatrix}$.
    The constraints \eqref{eq:Tv0}--\eqref{eq:1T-T} with $\bm{\pi}_{\mathrm{mix}}=\tfrac12(1,1)^\top$ imply
    \begin{equation}
        a+b=1,\quad c+d=1,\quad a+c=1,\quad b+d=1,
    \end{equation}
    hence $c=b$ and $d=a$, i.e.
    \begin{equation}
        T=\begin{pmatrix}a&1-a\\1-a&a\end{pmatrix}=R(t),\qquad t:=1-a.
    \end{equation}
    Since $T$ is invertible, $\det T = 2a-1\neq 0$, equivalently $t\neq\tfrac12$.
    Substituting $T=R(t)$ into \eqref{eq:similarity-T} yields the claim.
    Uniqueness of $t$ follows because $t\mapsto R(t)$ is injective on $\mathbb{R}\setminus\{\tfrac12\}$.
\end{proof}

\paragraph{Breaking the gauge requires external information.}
The invariance mechanism in Theorem~\ref{thm:gauge-invariance-prob-only} uses only the two facts in \eqref{eq:R-invariants-v0-1}.
Therefore the gauge disappears as soon as either of these facts is violated by an externally characterized capability.
A trusted non-maximally-mixed input $v_\star$ that is not fixed by $R(t)$, a trusted final-state marginalization different from $\bm{1}^{\mathsf{T}}$, or an externally characterized interleaved map held fixed rather than conjugated by $R(t)$ makes the gauge parameter observable.
Within the reduced model, such external information turns SPAM error separation into an externally referenced inversion problem.
In our setting, the readout-string sampling construction deliberately enforces $\bm{\pi}_{\mathrm{mix}}=\tfrac12(1,1)^\top$ via uniform single-qubit Pauli input randomization, and we do not assume any additional externally characterized operation or state that breaks this symmetry; hence one gauge degree of freedom remains irreducible.

\subsection{Identifiable invariants and short readout-string extraction}
\label{app:invariants}
\label{app:invariants:ch-recurrence}
\label{sec:identifiable-min-gauge}

Theorems~\ref{thm:gauge-invariance-prob-only} and~\ref{thm:one-dof-nonidentifiability} imply that, in the $2\times 2$ $Z$-twirled representation with $\bm{\pi}_{\mathrm{mix}}=\tfrac12(1,1)^\top$ and final-state marginalization $\bm{1}^{\mathsf{T}}$, readout-string probability data determine the MCM instrument $\{M^{(0)},M^{(1)}\}$ only up to the one-parameter gauge-equivalent family $M^{(o)}\sim R(t)^{-1}M^{(o)}R(t)$.
Accordingly, the natural targets of learning are \emph{gauge-invariant} scalars (quantities that are constant along this family), together with the physically allowed gauge choices compatible with physical constraints (nonnegativity, etc.).

\begin{proposition}[A complete five-scalar set of identifiable gauge invariants]
    \label{prop:identifiable-invariants}
    Assume $\{M^{(0)},M^{(1)}\}\subset\mathbb{R}_{\ge 0}^{2\times 2}$ and the normalization constraint $\bm{1}^{\mathsf{T}}\!\big(M^{(0)}+M^{(1)}\big)=\bm{1}^{\mathsf{T}}$.
    Then, under the gauge action $M^{(o)}\mapsto \widetilde M^{(o)}:=R(t)^{-1}M^{(o)}R(t)$ with $t\neq\tfrac12$, the following five scalars are invariant and hence identifiable from readout-string probability data.
    \begin{align}
         & p(0)=\bm{1}^{\mathsf{T}} M^{(0)} \bm{\pi}_{\mathrm{mix}}
        \qquad\text{(equivalently, } \bm{1}^{\mathsf{T}} M^{(0)} \bm{1}=2p(0) \text{ since } \bm{\pi}_{\mathrm{mix}}=\tfrac12\bm{1}\text{)}, \\
         & \Tr \!\big(M^{(0)}\big),\quad \det\!\big(M^{(0)}\big),\quad
        \Tr \!\big(M^{(1)}\big),\quad \det\!\big(M^{(1)}\big).
    \end{align}
    Moreover, in the interior of the parameter region (away from degeneracies where the algebraic reconstruction becomes singular), these five invariants form a locally complete scalar description of the observational equivalence class.
    The instrument has $6$ free real parameters after imposing $\bm{1}^{\mathsf{T}}(M^{(0)}+M^{(1)})=\bm{1}^{\mathsf{T}}$, and exactly one real degree of freedom is unidentifiable due to the gauge, leaving $5$ identifiable degrees of freedom.
\end{proposition}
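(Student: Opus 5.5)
Three pieces need proof: invariance of the five scalars under the gauge action, their identifiability from readout-string data, and local completeness together with the parameter count $8-2-1=5$.

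\emph{Invariance and identifiability.} Invariance of $\Tr(M^{(o)})$ and $\det(M^{(o)})$ is immediate, since trace and determinant are similarity invariants and $\widetilde M^{(o)}=R(t)^{-1}M^{(o)}R(t)$. For $p(0)$, we have $\bm{1}^{\mathsf{T}}\widetilde M^{(0)}\bm{\pi}_{\mathrm{mix}}=(\bm{1}^{\mathsf{T}}R(t)^{-1})M^{(0)}(R(t)\bm{\pi}_{\mathrm{mix}})$, which is invariant by \eqref{eq:R-invariants-v0-1}; this is the $L=1$ case of Theorem~\ref{thm:gauge-invariance-prob-only}. For identifiability, $p(0)=S_1$ is a readout-string probability. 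Applying Cayley--Hamilton, $(M^{(0)})^2=\Tr(M^{(0)})M^{(0)}-\det(M^{(0)})I$, and sandwiching between $\bm{1}^{\mathsf{T}}(M^{(0)})^n$ and $\bm{\pi}_{\mathrm{mix}}$ gives the recurrence \eqref{eq:recurrence}. Solving the $2\times2$ linear system at $n=0,1$ gives the closed forms \eqref{eq:tr-det-m0-main} whenever $S_1^2\neq S_2$. The same argument with $T_n$ handles $M^{(1)}$.

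\emph{Parameter count.} The normalization $\bm{1}^{\mathsf{T}}(M^{(0)}+M^{(1)})=\bm{1}^{\mathsf{T}}$ imposes two independent linear constraints on the eight entries, so the constraint set is a 6-dimensional affine space. The gauge orbit through a generic point is one-dimensional, because $t\mapsto R(t)^{-1}M^{(o)}R(t)$ is nonconstant unless both $M^{(o)}$ commute with $R(t)$, i.e.\ unless both are of the form $\begin{psmallmatrix}a&b\\b&a\end{psmallmatrix}$. Hence the quotient is 5-dimensional.

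\emph{Local completeness.} I would show that the map $\Psi:(M^{(0)},M^{(1)})\mapsto(p(0),\Tr M^{(0)},\det M^{(0)},\Tr M^{(1)},\det M^{(1)})$ on the 6-dimensional affine space has rank 5 at generic points. Its kernel then equals the tangent to the gauge orbit, and the fibers of $\Psi$ are locally exactly the gauge orbits. The rank check is the main obstacle. First I would try an explicit parametrization: use $\bm{1}^{\mathsf{T}}M^{(1)}=\bm{1}^{\mathsf{T}}-\bm{1}^{\mathsf{T}}M^{(0)}$ to fix the column sums of $M^{(1)}$, leaving the four entries of $M^{(0)}$ and one free entry per column of $M^{(1)}$. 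I would then evaluate the $5\times6$ Jacobian at a convenient point, such as a near-ideal MCM with a small generic perturbation, and exhibit a nonzero $5\times5$ minor. Nonvanishing at one point implies rank 5 on a dense open set, since the minors are polynomials.

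An alternative route to completeness goes through Theorem~\ref{thm:one-dof-nonidentifiability}. Any two models sharing the five invariants have the same $S_n$ and $T_n$ for all $n$ by the recurrence. One would then argue that mixed strings are also determined, for example via $p(0w)+p(1w)=p(w)$ and the rank-2 Hankel structure of a 2-state realization, so that the theorem applies under its full-span conditions. I would keep this route only as a remark, since it needs extra work to show that all mixed-string probabilities follow from the five scalars. The Jacobian computation gives the "locally complete" claim directly, so that is where I would focus.
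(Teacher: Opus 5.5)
Your argument is correct. The invariance step is the same as the paper's; the identifiability and completeness parts take a different route.

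\emph{What the paper does.} Its proof of this proposition treats identifiability as a consequence of invariance; the explicit estimators only appear later, in Corollary~\ref{cor:tr-det-from-length3}. It then counts $8-2=6$ parameters and invokes Theorem~\ref{thm:one-dof-nonidentifiability}, which says observational classes are exactly the $R(t)$-orbits under the full-span hypotheses, to get a $5$-dimensional quotient. Finally it simply declares the displayed list a convenient choice. It never checks that these five particular scalars are independent or that they separate orbits locally.

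\emph{What your route does instead.} You derive identifiability directly from the Cayley--Hamilton closed forms, valid away from $S_1^2=S_2$ and $T_1^2=T_2$. You prove completeness through a rank-$5$ Jacobian. This also gives ``exactly one unidentifiable direction'' locally, without needing the full-span hypotheses. In exchange, the paper's theorem describes the observational class globally, whereas yours is a local, generic statement.

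\emph{The step you flag as the main obstacle is short.} Use coordinates $(a,b,c,d,e,h)$ with $g=1-a-c-e$ and $f=1-b-d-h$.
\begin{itemize}
\item Only $\Tr M^{(1)}=e+h$ and $\det M^{(1)}=eh-fg$ depend on $(e,h)$. Their $(e,h)$-block is $\begin{psmallmatrix}1&1\\ h+f&e+g\end{psmallmatrix}$, with determinant $(e+g)-(f+h)$, which is $u-v$ in the notation of Proposition~\ref{prop:reconstruction-from-invariants-branch}.
\item The gradients of $p(0)$, $\Tr M^{(0)}$ and $\det M^{(0)}$ vanish in $(e,h)$. In $(a,b,c,d)$ they are $(\tfrac12,\tfrac12,\tfrac12,\tfrac12)$, $(1,0,0,1)$ and $(d,-c,-b,a)$. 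These are linearly independent unless $a=d$ and $b=c$.
\end{itemize}
So the Jacobian is block-triangular, and it has rank $5$ whenever $u\neq v$ and $M^{(0)}$ does not commute with $R(t)$. These are the same nondegeneracy conditions as in the paper's reconstruction, Proposition~\ref{prop:reconstruction-from-invariants-branch}. That reconstruction is where the paper's completeness claim is actually substantiated, and your Jacobian check makes that link explicit inside the proof.
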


\begin{proof}
    Gauge invariance is immediate.
    \begin{enumerate}[label=\textup{(\roman*)}, leftmargin=2.2em]
        \item For the single-shot probability of readout bit $0$ from $\bm{\pi}_{\mathrm{mix}}$,
              \begin{equation}
                  \bm{1}^{\mathsf{T}} \widetilde M^{(0)} \bm{\pi}_{\mathrm{mix}}
                  = \bm{1}^{\mathsf{T}} (R^{-1} M^{(0)} R) \bm{\pi}_{\mathrm{mix}}
                  = (\bm{1}^{\mathsf{T}} R^{-1})\, M^{(0)}\, (R \bm{\pi}_{\mathrm{mix}})
                  = \bm{1}^{\mathsf{T}} M^{(0)} \bm{\pi}_{\mathrm{mix}},
              \end{equation}
              using $\bm{1}^{\mathsf{T}} R^{-1}=\bm{1}^{\mathsf{T}}$ and $R \bm{\pi}_{\mathrm{mix}}=\bm{\pi}_{\mathrm{mix}}$ from \eqref{eq:R-invariants-v0-1}.
        \item $\Tr (\cdot)$ and $\det(\cdot)$ are similarity invariants, hence unchanged under conjugation by any invertible $R(t)$.
    \end{enumerate}

    For the dimension count, note that the pair $(M^{(0)},M^{(1)})$ has $8$ real parameters, and the constraint $\bm{1}^{\mathsf{T}}(M^{(0)}+M^{(1)})=\bm{1}^{\mathsf{T}}$ imposes $2$ independent linear constraints, leaving $6$ degrees of freedom (inequality constraints such as nonnegativity restrict the feasible region but do not change the local dimension in the interior).
    Theorem~\ref{thm:one-dof-nonidentifiability} shows that readout-string probability access from $\bm{\pi}_{\mathrm{mix}}$ leaves exactly one continuous gauge degree of freedom unidentifiable, and no more under the full-span assumptions stated there, hence the observational quotient has dimension $6-1=5$.
    Therefore, any complete identifiable scalar description must contain at least five independent scalars; the displayed list is a convenient choice for our subsequent reconstruction pipeline.
\end{proof}

The results in this subsection clarify the appropriate learning targets once gauge freedom is taken into account.
In particular, after establishing that readout-string probability access from $\bm{\pi}_{\mathrm{mix}}=\tfrac12(1,1)^\top$ identifies the instrument only up to the one-parameter family generated by $R(t)$, it is natural to focus on scalar quantities that are invariant under the similarity action $M^{(o)}\mapsto R(t)^{-1}M^{(o)}R(t)$.
The invariants listed in Proposition~\ref{prop:identifiable-invariants} provide a convenient set of parameters for the observational equivalence class in the interior of the physically admissible region.

The following recurrence gives practical estimators from short readout-bit-string distributions, and SM Sec.~\ref{app:reconstruction} then uses the resulting invariant data (together with a fixed gauge choice and physicality constraints) to construct explicit matrix representatives $M^{(0)},M^{(1)}$.

\paragraph{Cayley--Hamilton recurrence for repeated outcomes.}
The recurrence follows from the $2\times 2$ Cayley--Hamilton identity.
It yields closed-form estimators for $\Tr (M^{(0)}),\det(M^{(0)})$ and $\Tr (M^{(1)}),\det(M^{(1)})$ from measured readout-string probabilities of length at most $3$.

For $n\ge 0$, write $0^n$ (resp.\ $1^n$) for the length-$n$ all-zero (resp.\ all-one) readout bit string, with $0^0=1^0=\emptyset$ (the empty string).
Define the all-zero/all-one readout-string probabilities
\begin{equation}
    \label{eq:SnTn-def}
    S_n \;\coloneqq\; p(0^n),
    \qquad
    T_n \;\coloneqq\; p(1^n),
    \qquad n\ge 0.
\end{equation}
Here digits in the argument of $p(\cdot)$ denote readout bit strings; for example, $p(0^n)$ is the probability of the all-zero readout string of length $n$.
By definition, $S_0=T_0=p(\emptyset)=\bm{1}^{\mathsf{T}}\bm{\pi}=1$, and by Lemma~\ref{lem:matrix-product-string-probabilities},
\begin{equation}
    \label{eq:SnTn-matrix-product}
    S_n \;=\; \bm{1}^{\mathsf{T}} (M^{(0)})^{n}\bm{\pi},
    \qquad
    T_n \;=\; \bm{1}^{\mathsf{T}} (M^{(1)})^{n}\bm{\pi}.
\end{equation}
In a single length-$3$ experiment one directly estimates $p(w)$ for $w\in\{0,1\}^3$; the required values $S_1,S_2,S_3$ (and similarly $T_1,T_2,T_3$) are then obtained by summing over later readouts via \eqref{eq:prefix-marginal}.

\begin{proposition}[Second-order recurrence for $S_n$ and $T_n$]
    \label{prop:ch-second-order-recurrence}
    For all $n\ge 0$,
    \begin{align}
        \label{eq:Sn-recurrence}
        S_{n+2}
        \; & =\;
        \Tr \!\big(M^{(0)}\big)\,S_{n+1}
        \;-\;
        \det\!\big(M^{(0)}\big)\,S_{n}, \\[2pt]
        \label{eq:Tn-recurrence}
        T_{n+2}
        \; & =\;
        \Tr \!\big(M^{(1)}\big)\,T_{n+1}
        \;-\;
        \det\!\big(M^{(1)}\big)\,T_{n}.
    \end{align}
\end{proposition}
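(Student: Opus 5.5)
The plan is to apply the $2\times 2$ Cayley--Hamilton identity to each of $M^{(0)}$ and $M^{(1)}$, multiply it by the appropriate power of the same matrix, and then sandwich it between $\bm{1}^{\mathsf{T}}$ and $\bm{\pi}$. Since the two recurrences are identical in form, I would prove \eqref{eq:Sn-recurrence} in detail and obtain \eqref{eq:Tn-recurrence} by replacing $M^{(0)}$ with $M^{(1)}$ and $S_n$ with $T_n$.

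First, for any real $2\times 2$ matrix $A$, the characteristic polynomial is $\lambda^2-\Tr(A)\lambda+\det(A)$. Cayley--Hamilton therefore gives $A^2-\Tr(A)A+\det(A)I=0$. This can also be checked directly by expanding the entries of $A^2$ for a general $A=\begin{pmatrix}a&b\\c&d\end{pmatrix}$.

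Second, fix $n\ge 0$ and left-multiply the identity by $A^n$, taking $A^0=I$. This gives $A^{n+2}=\Tr(A)A^{n+1}-\det(A)A^{n}$. Left-multiplying by $\bm{1}^{\mathsf{T}}$ and right-multiplying by $\bm{\pi}$ are both linear operations, and $\Tr(A)$ and $\det(A)$ are scalars. Hence
\[
\bm{1}^{\mathsf{T}}A^{n+2}\bm{\pi}=\Tr(A)\,\bm{1}^{\mathsf{T}}A^{n+1}\bm{\pi}-\det(A)\,\bm{1}^{\mathsf{T}}A^{n}\bm{\pi}.
\]
With $A=M^{(0)}$, each term is identified through \eqref{eq:SnTn-matrix-product}, which follows from Lemma~\ref{lem:matrix-product-string-probabilities}. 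This gives $S_{n+2}=\Tr(M^{(0)})S_{n+1}-\det(M^{(0)})S_n$.

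The only point needing care is the base case $n=0$, where $S_0$ appears. I would check that the convention $S_0=p(\emptyset)=\bm{1}^{\mathsf{T}}\bm{\pi}=1$ matches $\bm{1}^{\mathsf{T}}(M^{(0)})^0\bm{\pi}$, which holds because $\bm{\pi}$ is normalized. So \eqref{eq:SnTn-matrix-product} holds for every $n\ge0$, including $n=0$. I would also confirm that the $S_n$ appearing here are the prefix marginals defined by \eqref{eq:prefix-marginal}. These equal the matrix-product expressions because the column sums satisfy $\bm{1}^{\mathsf{T}}(M^{(0)}+M^{(1)})=\bm{1}^{\mathsf{T}}$: marginalizing over later readouts multiplies by $M^{(0)}+M^{(1)}$, which leaves $\bm{1}^{\mathsf{T}}$ unchanged. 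Beyond this bookkeeping there is no real obstacle, and the argument holds for any input $\bm{\pi}$, not only $\bm{\pi}_{\mathrm{mix}}$.
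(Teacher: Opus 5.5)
Your proposal is correct and follows the paper's proof: apply Cayley--Hamilton to the $2\times 2$ matrix, multiply by $A^n$, and sandwich between $\bm{1}^{\mathsf{T}}$ and $\bm{\pi}$ to identify $S_{n+2},S_{n+1},S_n$ via \eqref{eq:SnTn-matrix-product}. The paper multiplies on the right and you on the left, which makes no difference since powers of $A$ commute. Your added checks are valid bookkeeping that the paper leaves implicit: $S_0=\bm{1}^{\mathsf{T}}\bm{\pi}=1$, and prefix marginals reduce to the matrix-product form because $\bm{1}^{\mathsf{T}}(M^{(0)}+M^{(1)})=\bm{1}^{\mathsf{T}}$.
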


\begin{proof}
    We prove \eqref{eq:Sn-recurrence}; the proof for \eqref{eq:Tn-recurrence} is identical with $M^{(1)}$ in place of $M^{(0)}$.

    Let $A\coloneqq M^{(0)}\in\mathbb{R}^{2\times 2}$.
    By the Cayley--Hamilton theorem,
    \begin{equation}
        \label{eq:CH-2x2}
        A^2 \;-\; \Tr (A)\,A \;+\; \det(A)\,I \;=\; 0.
    \end{equation}
    Multiplying \eqref{eq:CH-2x2} on the right by $A^{n}$ yields
    \begin{equation}
        \label{eq:CH-shifted}
        A^{n+2} \;=\; \Tr (A)\,A^{n+1} \;-\; \det(A)\,A^{n}.
    \end{equation}
    Left-multiplying \eqref{eq:CH-shifted} by $\bm{1}^{\mathsf{T}}$ and right-multiplying by $\bm{\pi}$ gives
    \begin{equation}
        \bm{1}^{\mathsf{T}}A^{n+2}\bm{\pi}
        \;=\;
        \Tr (A)\,\bm{1}^{\mathsf{T}}A^{n+1}\bm{\pi}
        \;-\;
        \det(A)\,\bm{1}^{\mathsf{T}}A^{n}\bm{\pi}.
    \end{equation}
    Using \eqref{eq:SnTn-matrix-product} with $A=M^{(0)}$ identifies the left-hand side as $S_{n+2}$ and the two terms on the right as $S_{n+1}$ and $S_n$, proving \eqref{eq:Sn-recurrence}.
\end{proof}

\begin{corollary}[Closed-form formulas for $\Tr $ and $\det$ from length-3 readout strings]
    \label{cor:tr-det-from-length3}
    Suppose $S_1^2\neq S_2$ and $T_1^2\neq T_2$.
    Then
    \begin{align}
        \label{eq:trM0-from-S}
        \Tr \!\big(M^{(0)}\big)
        \; & =\;
        \frac{S_1S_2 - S_3}{S_1^2 - S_2},
           &
        \det\!\big(M^{(0)}\big)
        \; & =\;
        \frac{S_2^2 - S_1S_3}{S_1^2 - S_2},
        \\[4pt]
        \label{eq:trM1-from-T}
        \Tr \!\big(M^{(1)}\big)
        \; & =\;
        \frac{T_1T_2 - T_3}{T_1^2 - T_2},
           &
        \det\!\big(M^{(1)}\big)
        \; & =\;
        \frac{T_2^2 - T_1T_3}{T_1^2 - T_2}.
    \end{align}
\end{corollary}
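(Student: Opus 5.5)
We obtain the corollary directly from Proposition~\ref{prop:ch-second-order-recurrence} by specializing the recurrence to $n=0$ and $n=1$ and solving the resulting $2\times 2$ linear system. We treat the $S$-case; the $T$-case follows by replacing $M^{(0)}$ with $M^{(1)}$ and $S_n$ with $T_n$.

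Write $\tau\coloneqq\Tr(M^{(0)})$ and $\delta\coloneqq\det(M^{(0)})$, and use $S_0=\bm{1}^{\mathsf{T}}\bm{\pi}=1$. Taking \eqref{eq:Sn-recurrence} at $n=0$ and $n=1$ gives
\begin{equation}
    \begin{pmatrix} S_1 & -1 \\ S_2 & -S_1 \end{pmatrix}
    \begin{pmatrix} \tau \\ \delta \end{pmatrix}
    =
    \begin{pmatrix} S_2 \\ S_3 \end{pmatrix}.
\end{equation}
The coefficient matrix has determinant $-S_1^2+S_2$, which is nonzero exactly under the hypothesis $S_1^2\neq S_2$. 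The system therefore has a unique solution, namely the true $(\tau,\delta)$.

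We compute that solution by Cramer's rule:
\begin{equation}
    \tau=\frac{-S_1S_2+S_3}{S_2-S_1^2}=\frac{S_1S_2-S_3}{S_1^2-S_2},
    \qquad
    \delta=\frac{S_1S_3-S_2^2}{S_2-S_1^2}=\frac{S_2^2-S_1S_3}{S_1^2-S_2}.
\end{equation}
These are exactly the formulas in \eqref{eq:trM0-from-S}. As a quick consistency check, substitute back: from the first equation, $S_1\tau-\delta=S_2$, which holds.

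The identical computation with $T_n$ yields \eqref{eq:trM1-from-T}. There is no real obstacle here. The only point that needs care is making explicit that $S_0=T_0=1$ by the normalization of $\bm{\pi}$, and that the nondegeneracy hypothesis is precisely the invertibility of the linear system, which is why it is what guarantees a unique solution.
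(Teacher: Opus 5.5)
Your proposal is correct and follows the paper's proof. Both specialize the second-order recurrence of Proposition~\ref{prop:ch-second-order-recurrence} to $n=0,1$ with $S_0=1$ and solve the resulting $2\times 2$ linear system, with $S_1^2\neq S_2$ supplying invertibility; the paper does this by substitution (expressing $\det$ in terms of $\Tr$ first), while you use Cramer's rule, and your determinant computations are correct.
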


\begin{proof}
    We again prove the statements for $M^{(0)}$; the $M^{(1)}$ formulas follow identically.

    Set $A=M^{(0)}$ and use the recurrence \eqref{eq:Sn-recurrence} at $n=0,1$, together with $S_0=1$:
    \begin{align}
        S_2 & = \Tr (A)\,S_1 - \det(A)\,S_0
        \;=\; \Tr (A)\,S_1 - \det(A),
        \label{eq:S2-eq}                     \\
        S_3 & = \Tr (A)\,S_2 - \det(A)\,S_1.
        \label{eq:S3-eq}
    \end{align}
    From \eqref{eq:S2-eq},
    \begin{equation}
        \label{eq:det-in-terms-tr}
        \det(A) \;=\; \Tr (A)\,S_1 - S_2.
    \end{equation}
    Substituting \eqref{eq:det-in-terms-tr} into \eqref{eq:S3-eq} yields
    \begin{equation}
        S_3
        \;=\;
        \Tr (A)\,S_2
        \;-\;
        \bigl(\Tr (A)\,S_1 - S_2\bigr)S_1
        \;=\;
        \Tr (A)\,(S_2 - S_1^2) + S_1S_2.
    \end{equation}
    Rearranging gives
    \begin{equation}
        \Tr (A)\,(S_1^2 - S_2) \;=\; S_1S_2 - S_3.
    \end{equation}
    Under the stated nondegeneracy condition $S_1^2\neq S_2$, this implies \( \Tr (A)=\frac{S_1S_2-S_3}{S_1^2-S_2}, \) establishing \eqref{eq:trM0-from-S}; plugging this into the intermediate identity \eqref{eq:det-in-terms-tr} gives
    \[
        \det(A)
        =
        \frac{(S_1S_2-S_3)S_1}{S_1^2-S_2}-S_2
        =
        \frac{S_2^2-S_1S_3}{S_1^2-S_2},
    \]
    which is the determinant formula in \eqref{eq:trM0-from-S}.
\end{proof}

\subsection{Long-readout-string consistency checks and failure modes}
\label{subsec:invariants-overcomplete}
\label{subsec:invariants-failure-modes}

Under the maximally mixed-input design and away from the denominator degeneracies of Corollary~\ref{cor:tr-det-from-length3}, the length-$3$ readout-string distribution already determines the gauge-invariant scalars needed for representative reconstruction (SM Sec.~\ref{app:reconstruction}); under the full-span assumptions of Theorem~\ref{thm:one-dof-nonidentifiability}, these scalars then determine the instrument up to the residual one-parameter gauge.
Longer readout strings do not introduce new independent degrees of freedom for the base $2\times 2$, time-homogeneous instrument model (Definition~\ref{def:z-twirled-mcm-instrument}); instead, they provide \emph{overcomplete constraints} that can be used for (i) self-consistency diagnostics and (ii) variance reduction by overdetermined fitting.

\begin{proposition}[Longer-readout-string constraints for self-consistency]
    \label{prop:overcomplete-long-string-constraints}
    Assume the $Z$-twirled, time-homogeneous instrument model of Definition~\ref{def:z-twirled-mcm-instrument}, and define $S_n$ and $T_n$ as in Eq.~\eqref{eq:SnTn-def}.

    Then the second-order recurrences in Proposition~\ref{prop:ch-second-order-recurrence} hold for \emph{all} $n\ge 0$.
    Consequently, once $\Tr(M^{(0)}),\det(M^{(0)})$ are fixed, the entire sequence $\{S_n\}_{n\ge 0}$ is determined by $(S_0,S_1)$; similarly, $\{T_n\}_{n\ge 0}$ is determined by $(T_0,T_1)$ once $\Tr(M^{(1)}),\det(M^{(1)})$ are fixed.

    In particular, let $L\ge 3$ and suppose one measures a length-$L$ readout-string distribution (so that the all-zero/all-one readout-string probabilities $S_1,\dots,S_L$ and $T_1,\dots,T_L$ are available by summing over later readouts via Eq.~\eqref{eq:prefix-marginal}).
    Then for every $n=0,1,\dots,L-2$ one has the \emph{overcomplete constraints}
    \begin{align}
        \label{eq:overcomplete-constraints}
        S_{n+2}
        \;-\;
        \Tr\!\big(M^{(0)}\big)\,S_{n+1}
        \;+\;
        \det\!\big(M^{(0)}\big)\,S_{n}
        \; & =\; 0,
        \\[2pt]
        T_{n+2}
        \;-\;
        \Tr\!\big(M^{(1)}\big)\,T_{n+1}
        \;+\;
        \det\!\big(M^{(1)}\big)\,T_{n}
        \; & =\; 0.
    \end{align}
    Under finite-shot sampling, the empirical left-hand sides obtained by replacing each $p(\cdot)$ with $\widehat{p}(\cdot)$ provide a direct self-consistency check of the model assumptions (stationarity, Markovianity, and absence of leakage).
\end{proposition}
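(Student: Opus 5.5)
The plan is to reduce everything to Proposition~\ref{prop:ch-second-order-recurrence}, whose proof did not depend on a length cutoff. Under the time-homogeneous model (Assumption~\ref{ass:stationarity-markovianity}(i)), Lemma~\ref{lem:matrix-product-string-probabilities} gives $S_n=\bm{1}^{\mathsf{T}}(M^{(0)})^n\bm{\pi}$ and $T_n=\bm{1}^{\mathsf{T}}(M^{(1)})^n\bm{\pi}$ for every $n\ge 0$. The point is that the same matrix appears at every step, which is what lets the powers be formed.

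The first step is to multiply the $2\times2$ Cayley--Hamilton identity by $A^n$ for arbitrary $n$. Sandwiching between $\bm{1}^{\mathsf{T}}$ and $\bm{\pi}$ then yields the recurrence for all $n\ge0$, exactly as in the proof of Proposition~\ref{prop:ch-second-order-recurrence}; no new calculation is needed.

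The second step is the determinacy claim, which I would prove by strong induction on $n$. The cases $n=0,1$ hold by assumption, since $S_0=1$ and $S_1$ are given. For the inductive step, the recurrence expresses $S_{n+2}$ as a fixed linear combination of $S_{n+1}$ and $S_n$, with coefficients $\Tr(M^{(0)})$ and $-\det(M^{(0)})$. The same argument applies verbatim to $T_n$.

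The third step concerns a measured length-$L$ distribution. Here I need to check that the prefix marginals defined in Eq.~\eqref{eq:prefix-marginal} agree with the model's $S_n$ for $n\le L$. Summing $p(0^n v)$ over $v\in\{0,1\}^{L-n}$ means summing over the later outcomes, which replaces each later factor by $M^{(0)}+M^{(1)}$. The normalization $\bm{1}^{\mathsf{T}}(M^{(0)}+M^{(1)})=\bm{1}^{\mathsf{T}}$ then collapses these factors, giving $\bm{1}^{\mathsf{T}}(M^{(0)})^n\bm{\pi}$. This is the one step where TP is actually used, and the only place needing care. With it in hand, the constraints in Eq.~\eqref{eq:overcomplete-constraints} for $n=0,\dots,L-2$ are just instances of the all-$n$ recurrence, since every index involved is at most $L$.

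The final sentence of the statement is interpretive rather than something to prove. I would justify it by noting that the constraints are exact consequences of the model. The empirical residuals are continuous functions of $\widehat{p}$, which concentrates at $p$ by Assumption~\ref{ass:stationarity-markovianity}(iii). So nonzero residuals well above shot noise signal a violation of the model assumptions. I expect no real obstacle; the only subtlety is the marginalization step.
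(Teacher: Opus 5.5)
Your proposal is correct and follows the paper's route. The all-$n$ recurrence is Proposition~\ref{prop:ch-second-order-recurrence} (Cayley--Hamilton multiplied by $A^n$), determinacy follows by iterating from $(S_0,S_1)$, and the overcomplete constraints are instances of that recurrence; your explicit check that the prefix marginals of a length-$L$ distribution equal $\bm{1}^{\mathsf{T}}(M^{(0)})^n\bm{\pi}$ via $\bm{1}^{\mathsf{T}}(M^{(0)}+M^{(1)})=\bm{1}^{\mathsf{T}}$ fills a step the paper leaves implicit, and it is the right place to invoke trace preservation.
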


\begin{proof}
    The first two claims are immediate from Proposition~\ref{prop:ch-second-order-recurrence}: the recurrence is valid for all $n\ge 0$, hence it determines $S_{n+2}$ from $(S_{n+1},S_n)$ once $\Tr(M^{(0)})$ and $\det(M^{(0)})$ are fixed; iterating from $(S_0,S_1)$ determines the full sequence.
    The same argument applies to $T_n$.

    The constraints \eqref{eq:overcomplete-constraints} are a direct restatement of Eqs.~\eqref{eq:Sn-recurrence}--\eqref{eq:Tn-recurrence}.
\end{proof}

\begin{figure*}[!t]
    \centering
    \includegraphics[width=0.75\textwidth]{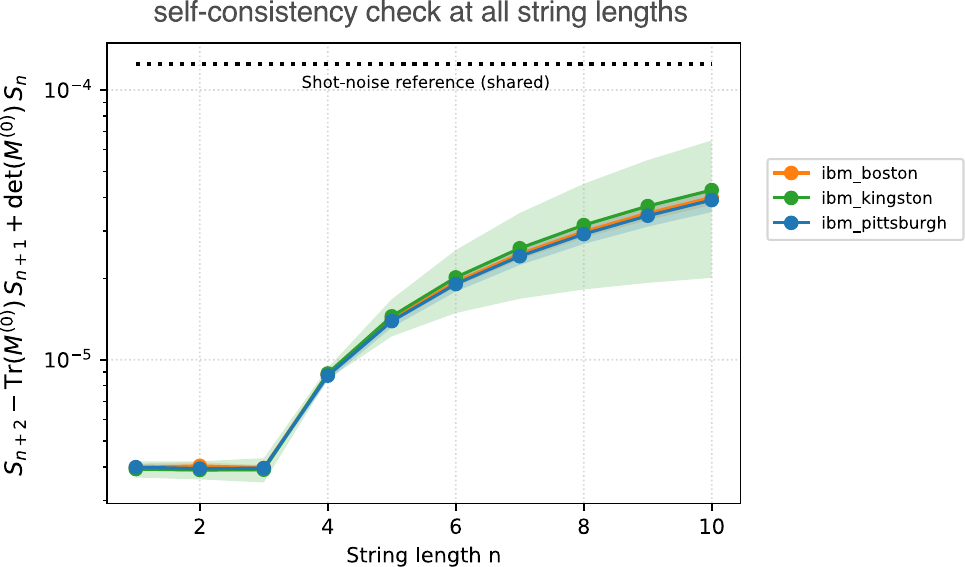}
    \caption{
        \textbf{Self-consistency checks from long strings.}
        The second-order recurrences in Proposition~\ref{prop:ch-second-order-recurrence} hold for all $n\ge 0$ under the model assumptions, hence they provide overcomplete constraints that can be used for self-consistency checks.
        Here we plot the empirical residuals obtained by evaluating the left-hand sides of Eqs.~\eqref{eq:overcomplete-constraints} with empirical probabilities $\widehat{p}(\cdot)$ from a length-$10$ readout-string distribution.
        The residuals are close to zero for small $n$ (where the probabilities are large and hence have small relative error), while finite-shot amplification and possible residual model mismatch can make the longer-readout-string residuals larger.
        The residuals remain well below the naive shot-noise scale (dashed lines) for all $n\le 10$, so these data do not show a statistically resolved violation of the base model and are consistent with the short-readout-string estimates.
        Shaded regions indicate $\pm 2$ standard errors averaged over all 20 qubits for each device; the shot-noise scale is computed from the empirical probabilities using the formulas in SM Sec.~\ref{app:stats}.
        }
    \label{fig:consistency-checks}
\end{figure*}

\begin{remark}[Overdetermined fitting for $\Tr/\det$ from long strings]
    \label{rem:overcomplete-wls-trdet}
    When $L>3$, one can exploit the redundant equations \eqref{eq:overcomplete-constraints} to refine the length-$3$ closed-form estimators in Corollary~\ref{cor:tr-det-from-length3}.

    For example, for $M^{(0)}$ one may determine $(\Tr(M^{(0)}),\det(M^{(0)}))$ by minimizing a least-squares loss built from empirical residuals,
    \begin{equation}
        \label{eq:wls-trdet}
        \min_{x,y\in\mathbb{R}}
        \sum_{n=0}^{L-2}
        \Bigl(
        \widehat{p}(0^{n+2}) - x\,\widehat{p}(0^{n+1}) + y\,\widehat{p}(0^{n})
        \Bigr)^2.
    \end{equation}
    Optionally, one may incorporate inverse-variance weights using the shot-noise covariance formulas in SM Sec.~\ref{app:stats}.
    This estimator is consistent under the base model, typically has reduced variance compared to purely algebraic ratios, and is substantially more stable in near-degenerate regimes where denominators such as $\widehat{p}(0)^2-\widehat{p}(00)$ become small, as discussed below.
    An identical refinement applies to $(\Tr(M^{(1)}),\det(M^{(1)}))$ using the $\widehat{p}(1^n)$ sequence.
\end{remark}

\paragraph{Failure modes and remedies.}
\begin{observation}[Instabilities, diagnostics, and practical fixes]
    \label{obs:invariants-failure-modes}
    The algebraic extraction formulas in Corollary~\ref{cor:tr-det-from-length3} are exact in the noiseless model, but can be numerically unstable for finite-shot data or for parameters near singular regimes.
    The dominant failure modes and corresponding remedies are the following.

    \begin{enumerate}[label=(\roman*), leftmargin=2.2em]

        \item \textbf{Exact degeneracy $S_1^2=S_2$ (and/or $T_1^2=T_2$).}
        In this case the length-$3$ ratio formulas \eqref{eq:trM0-from-S} (and/or \eqref{eq:trM1-from-T}) are undefined.
        A sufficient condition is that the initial state $\bm{\pi}_{\mathrm{mix}}$ used in the learning protocol is a right-eigenvector of $M^{(0)}$ (resp.\ $M^{(1)}$), which implies $S_n=(S_1)^n$ (resp.\ $T_n=(T_1)^n$) and hence $S_2=S_1^2$.
        This situation is non-generic but can occur at highly symmetric parameter points.

              \emph{Remedies include} (a) use a different effective initial state (or an additional known pre-rotation) so that the input is not aligned with an eigenvector; (b) incorporate longer readout strings and solve an overdetermined fit as in Remark~\ref{rem:overcomplete-wls-trdet}; (c) augment the invariant set using mixed readout strings (e.g., $010$, $101$) when available in the measured readout-string distribution, which typically breaks the degeneracy even if the all-zero/all-one statistics do not.

        \item \textbf{Small denominators and amplified shot noise.}
        Even when $S_1^2\neq S_2$, the denominator $S_1^2-S_2$ in \eqref{eq:trM0-from-S} can be small compared to the finite-shot fluctuations of $(\widehat{p}(0),\widehat{p}(00),\widehat{p}(000))$, producing large variance and occasional outliers.

              \emph{Remedies include} (a) prefer overdetermined estimators \eqref{eq:wls-trdet} when longer readout strings are available; (b) report uncertainty using the bootstrap workflow in SM Sec.~\ref{app:stats} rather than relying on a single point estimate; (c) when only length-$3$ data are available, treat near-zero denominators as a flag that the learning run is information-poor for that estimator and increase the shot budget or adjust the circuit family.

        \item \textbf{Empty physically allowed gauge set due to statistical fluctuations.}
        The physically allowed gauge set $\mathcal{T}_\mathrm{phys}$ (SM Sec.~\ref{app:physicality}) is defined by nonnegativity inequalities on a gauge-transformed representative.
        Finite-shot errors can push the inferred invariants outside the physical region, yielding an empty inferred feasibility set or disconnected intervals that are not stable across repetitions.

              \emph{Remedies include} (a) increase the shot budget for the specific probabilities that dominate the relevant inequalities (typically those entering small denominators); (b) perform a constrained projection step by replacing the raw invariant estimates by the closest point (in a suitable metric) that satisfies the physicality constraints; (c) use bootstrap resampling to quantify the probability of feasibility failure and to produce conservative gauge bands (SM Sec.~\ref{app:stats}).

        \item \textbf{Algebraic-candidate ambiguity under noise.}
        The representative reconstruction map in SM Sec.~\ref{app:reconstruction} can yield multiple algebraic candidates even in the noiseless setting; finite-shot noise can further produce spurious candidates that fit short readout strings but violate physicality or badly mispredict longer-readout-string statistics.

              \emph{Remedies include} (a) retain all algebraic candidates through reconstruction and then apply physicality constraints (SM Sec.~\ref{app:physicality}); (b) use longer-readout-string overcomplete constraints (Proposition~\ref{prop:overcomplete-long-string-constraints}) as an additional candidate-selection criterion; (c) for reporting, follow the gauge-band convention of Remark~\ref{rem:recommended-reporting-convention} rather than committing to a single representative without justification.
    \end{enumerate}
\end{observation}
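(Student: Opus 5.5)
Observation~\ref{obs:invariants-failure-modes} is largely a catalogue of remedies, but it makes a few mathematical claims that need checking: (i) the eigenvector condition is sufficient for the degeneracy $S_1^2=S_2$ and is non-generic; (ii) the estimator variance grows like $1/(S_1^2-S_2)^2$; (iii) infeasibility and candidate ambiguity arise from noise, not from the model. I would check these in that order and treat the remedies as consequences of earlier results. For (i), suppose $M^{(0)}\bm{\pi}_{\mathrm{mix}}=\lambda\bm{\pi}_{\mathrm{mix}}$. By Lemma~\ref{lem:matrix-product-string-probabilities} and \eqref{eq:SnTn-matrix-product},
\[
S_n=\bm{1}^{\mathsf{T}}(M^{(0)})^n\bm{\pi}_{\mathrm{mix}}=\lambda^n\bm{1}^{\mathsf{T}}\bm{\pi}_{\mathrm{mix}}=\lambda^n .
\]
So $S_1=\lambda$, $S_n=S_1^n$, and the denominator in \eqref{eq:trM0-from-S} vanishes; the same holds for $T_n$.

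For non-genericity, I would write $S_1^2-S_2$ as a polynomial in the entries of $M^{(0)}$. The cleanest route is a Hankel determinant: $S_1^2-S_2=-\det\begin{psmallmatrix}S_0&S_1\\S_1&S_2\end{psmallmatrix}$. This vanishes exactly when the sequence $(S_n)$ satisfies a first-order recurrence on its first terms. That happens iff $\bm{\pi}_{\mathrm{mix}}$ and $M^{(0)}\bm{\pi}_{\mathrm{mix}}$ are linearly dependent modulo the left kernel generated by $\bm{1}^{\mathsf{T}}$. I would only claim that this is a proper algebraic subvariety of the 6-dimensional parameter set, so it has measure zero. One explicit nondegenerate example suffices: take a diagonal $M^{(0)}$ with distinct entries, for which $S_1^2-S_2=-\tfrac14(a-d)^2\neq0$. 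Remedy (c) then follows because mixed strings such as $010$ give additional polynomial equations; remedy (b) follows from Remark~\ref{rem:overcomplete-wls-trdet}.

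For (ii), I would apply the delta method to the map $(S_1,S_2,S_3)\mapsto\Tr(M^{(0)})$. Its gradient carries a factor $1/(S_1^2-S_2)^2$, and the multinomial covariances of $\widehat p$ are $O(1/N)$. Hence the standard error scales as $N^{-1/2}/|S_1^2-S_2|$, which justifies flagging small denominators and increasing the shot budget. For (iii), the true invariants lie in the image of the physical set, and $\mathcal{T}_\mathrm{phys}$ is cut out by finitely many polynomial inequalities in $t$ and the invariants. Perturbing the invariants by $O(N^{-1/2})$ can therefore empty this set when the true instrument sits near its boundary, for example when some $M^{(o)}_{s',s}\approx0$, which is typical of high-fidelity MCMs.

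Candidate ambiguity comes from the quadratic step in the reconstruction, so several roots exist even noiselessly. Filtering by physicality and by the constraints of Proposition~\ref{prop:overcomplete-long-string-constraints} is then the natural selection rule. The main obstacle I expect is making ``non-generic'' rigorous for the whole reconstruction, not just for the all-zero denominator, since that requires the later formulas. I would restrict the rigorous claim to the denominators in Corollary~\ref{cor:tr-det-from-length3} and present the remaining items as heuristic diagnostics.
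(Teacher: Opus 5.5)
Your proposal is correct and follows the paper's own inline justification: the eigenvector computation giving $S_n=S_1^n$ for (i), the derivative formulas \eqref{eq:stats-derivatives-tr} for (ii), and qualitative diagnostics for (iii)--(iv), with your nonzero-polynomial example a welcome addition for ``non-generic''. Your standard-error scaling $N^{-1/2}/|S_1^2-S_2|$ is right, but only because $U=S_1S_2-S_3=\Tr(M^{(0)})\,(S_1^2-S_2)$ makes the gradient $O(1/|S_1^2-S_2|)$, not the $1/(S_1^2-S_2)^2$ you quote.

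One side remark needs fixing, though you rightly do not rely on it: ``linearly dependent modulo'' the one-dimensional $\ker\bm{1}^{\mathsf{T}}$ is vacuous in $\mathbb{R}^2$. The correct statement is $S_2-S_1^2=c\,(A_{0\mid 0}-A_{0\mid 1})$, where $M^{(0)}\bm{\pi}_{\mathrm{mix}}-S_1\bm{\pi}_{\mathrm{mix}}=c\,(1,-1)^{\mathsf{T}}$. So the degeneracy occurs iff $\bm{\pi}_{\mathrm{mix}}$ is a right eigenvector of $M^{(0)}$ or $M^{(0)}$ has equal column sums. On the second branch $S_n=A_{0\mid 0}^n$ for every input, so changing the input state cannot help.
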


\section{Representative reconstruction and gauge-band reporting}
\label{app:reconstruction}
\label{app:physicality}

\subsection{Algebraic reconstruction and gauge representatives}
\label{app:reconstruction:algebraic}

The \emph{learnable} gauge-invariant scalars in Proposition~\ref{prop:identifiable-invariants}, together with the TP constraints~\eqref{eq:z-twirled-mcm-constraints}, determine an explicit algebraic reconstruction of a \emph{representative} pair $(M^{(0)},M^{(1)})$.
As emphasized in SM Sec.~\ref{app:gauge}, the representative is not unique: it depends on a gauge choice, and different choices are related by the similarity action $\Phi_t$.
The reconstruction map is closed-form, exact in the noiseless model, and returns \emph{all algebraic candidates} compatible with the invariants.

\begin{proposition}[Representative reconstruction from invariants and constraints]
    \label{prop:reconstruction-from-invariants-branch}
    Let $(M^{(0)},M^{(1)})\in(\mathbb{R}^{2\times 2})^2$ be a $Z$-twirled MCM model as in Definition~\ref{def:z-twirled-mcm-instrument}, and assume the learning protocol starts from $\bm{\pi}_{\mathrm{mix}}=\tfrac12\bm{1}$ and returns only readout-string probabilities.

    Suppose we are given the following \emph{gauge-invariant} scalars.
    \begin{equation}
        \label{eq:recon-invariant-inputs}
        S_1 = p(0)=\bm{1}^{\mathsf{T}}M^{(0)}\bm{\pi}_{\mathrm{mix}},
        \qquad
        \Tr \!\big(M^{(0)}\big),\ \det\!\big(M^{(0)}\big),
        \qquad
        \Tr \!\big(M^{(1)}\big),\ \det\!\big(M^{(1)}\big),
    \end{equation}
    where $\Tr (\cdot)$ and $\det(\cdot)$ can be computed from short readout-string data via Corollary~\ref{cor:tr-det-from-length3}.

    Fix a \emph{gauge condition} by choosing a scalar
    \begin{equation}
        \label{eq:gauge-choice-a}
        a \;\coloneqq\; M^{(0)}_{0,0},
    \end{equation}
    and write the unknown entries as
    \begin{equation}
        \label{eq:entry-parameterization}
        M^{(0)}=\begin{pmatrix} a & b \\ c & d \end{pmatrix},
        \qquad
        M^{(1)}=\begin{pmatrix} e & f \\ g & h \end{pmatrix}.
    \end{equation}
    Define the shorthand invariants (to avoid collision with $T_n=p(1^n)$ from Eq.~\eqref{eq:SnTn-def} and $D=1-2t$ from Eq.~\eqref{eq:D-def}, we use Greek letters here)
    \begin{equation}
        \label{eq:recon-shorthands}
        \tau_0 \coloneqq \Tr \!\big(M^{(0)}\big),\quad
        \delta_0 \coloneqq \det\!\big(M^{(0)}\big),\quad
        \tau_1 \coloneqq \Tr \!\big(M^{(1)}\big),\quad
        \delta_1 \coloneqq \det\!\big(M^{(1)}\big).
    \end{equation}
    The following reconstruction formulas then hold.

    \begin{enumerate}[label=\textup{(\roman*)}, leftmargin=2.1em]
        \item \textbf{Reconstructing $M^{(0)}$ (two algebraic candidates).}
        The diagonal entry $d$ and the symmetric sums/products of the off-diagonals are fixed by
              \begin{equation}
                  \label{eq:recon-M0-constraints}
                  d \;=\; \tau_0-a,
                  \qquad
                  b+c \;=\; 2S_1 - \tau_0,
                  \qquad
                  bc \;=\; a d - \delta_0.
              \end{equation}
              Consequently, $(b,c)$ are the (at most two) solutions of the quadratic equation
              \begin{equation}
                  \label{eq:recon-quadratic}
                  x^2 - (2S_1-\tau_0)\,x + (a(\tau_0-a)-\delta_0)=0,
              \end{equation}
              i.e.,
              \begin{equation}
                  \label{eq:recon-bc-roots}
                  \{b,c\}
                  \;=\;
                  \left\{
                  \frac{(2S_1-\tau_0)\pm\sqrt{\Delta}}{2},
                  \frac{(2S_1-\tau_0)\mp\sqrt{\Delta}}{2}
                  \right\},
                  \qquad
                  \Delta \coloneqq (2S_1-\tau_0)^2 - 4\big(a(\tau_0-a)-\delta_0\big).
              \end{equation}

        \item \textbf{Reconstructing $M^{(1)}$ (generic case $u\neq v$).}
        Using the TP constraints \eqref{eq:z-twirled-mcm-constraints} in the matrix form $\bm{1}^{\mathsf{T}}(M^{(0)}+M^{(1)})=\bm{1}^{\mathsf{T}}$ (equivalently, each column of $M^{(0)}+M^{(1)}$ sums to $1$), define
              \begin{equation}
                  \label{eq:uv-def}
                  u \;\coloneqq\; e+g \;=\; 1-(a+c),
                  \qquad
                  v \;\coloneqq\; f+h \;=\; 1-(b+d).
              \end{equation}
              If $u\neq v$, then the determinant constraint $eh-fg=\delta_1$ becomes \emph{linear} in $h$ and yields
              \begin{equation}
                  \label{eq:h-solution-generic}
                  h \;=\; \frac{\delta_1 + v(u-\tau_1)}{u-v},
                  \qquad
                  e=\tau_1-h,
                  \qquad
                  g=u-e,
                  \qquad
                  f=v-h.
              \end{equation}

        \item \textbf{Reconstructing $M^{(1)}$ (degenerate case $u=v$).}
        If $u=v$, the formula \eqref{eq:h-solution-generic} is ill-conditioned.
        In that case one can instead use the gauge-invariant identity
              \begin{equation}
                  \label{eq:trM0M1-identity}
                  \Tr \!\big(M^{(0)}M^{(1)}\big)
                  \;=\;
                  \delta_0 + \delta_1 + (\tau_0-1)(\tau_1-1),
              \end{equation}
              which follows from the $2\times 2$ determinant expansion $\det(A+B)=\det(A)+\det(B)+\Tr (A)\Tr (B)-\Tr (AB)$ applied to $A=M^{(0)}$, $B=M^{(1)}$, together with $\det(M^{(0)}+M^{(1)})=\Tr (M^{(0)}+M^{(1)})-1$ for a column-stochastic $2\times 2$ matrix.
              Writing $X\coloneqq \Tr (M^{(0)}M^{(1)})$, the four linear constraints
              \begin{equation}
                  \label{eq:M1-linear-system}
                  e+g=u,\quad f+h=v,\quad e+h=\tau_1,\quad a e + c f + b g + d h = X
              \end{equation}
              determine $(e,f,g,h)$ uniquely whenever the associated $4\times 4$ linear system is nonsingular.

        \item \textbf{Algebraic candidates and admissibility constraints.}
        For each fixed gauge choice $a$, the quadratic \eqref{eq:recon-quadratic} yields at most two candidate pairs $(b,c)$ and hence at most two reconstructed pairs $(M^{(0)},M^{(1)})$.
        Physical realizability additionally requires the entrywise conditional-probability constraints $0\le M^{(o)}_{s',s}\le 1$ and the TP constraints already enforced above; these constraints typically rule out one candidate and restrict the admissible gauge choices to a physically allowed set (constructed in Subsec.~\ref{app:physicality:admissibility}).
    \end{enumerate}
\end{proposition}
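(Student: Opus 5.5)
The plan is to verify each item of Proposition~\ref{prop:reconstruction-from-invariants-branch} by direct algebra on the entry parametrization \eqref{eq:entry-parameterization}, feeding in the five invariants of Proposition~\ref{prop:identifiable-invariants} and the two TP constraints from \eqref{eq:z-twirled-mcm-constraints}.

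\textbf{Item (i).} I will read off $\tau_0=a+d$, giving $d=\tau_0-a$. Since $\bm{\pi}_{\mathrm{mix}}=\tfrac12\bm{1}$, we have $S_1=\tfrac12\bm{1}^{\mathsf{T}}M^{(0)}\bm{1}=\tfrac12(a+b+c+d)$, hence $b+c=2S_1-\tau_0$. Finally $\delta_0=ad-bc$ gives $bc=ad-\delta_0$. By Vieta, $b$ and $c$ are then the two roots of \eqref{eq:recon-quadratic}, which yields \eqref{eq:recon-bc-roots}. The ordered assignment of the roots to $(b,c)$ is exactly the two-candidate ambiguity recorded in (iv).

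\textbf{Item (ii).} The column sums of $M^{(0)}+M^{(1)}$ equal one, which gives $e+g=1-(a+c)=u$ and $f+h=1-(b+d)=v$. I will also use $e=\tau_1-h$ from the trace. Substituting $g=u-e$ and $f=v-h$ into $eh-fg=\delta_1$ gives
\[
eh-(v-h)(u-e)=eh-uv+ve+uh-eh=ve+uh-uv .
\]
With $e=\tau_1-h$ this becomes $v\tau_1+(u-v)h-uv=\delta_1$. The $eh$ terms cancel, so the equation is linear in $h$, and solving it gives \eqref{eq:h-solution-generic}. The remaining entries $e,g,f$ then follow back-substitutionally.

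\textbf{Item (iii).} First I will verify the $2\times2$ identity $\det(A+B)=\det A+\det B+\Tr A\,\Tr B-\Tr(AB)$. The cleanest route is to expand $\det(A+B)$ entrywise, or to use $\det X=\tfrac12[(\Tr X)^2-\Tr(X^2)]$ and polarize.

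Next, for $C=M^{(0)}+M^{(1)}$, which is column-stochastic, $\bm{1}^{\mathsf{T}}C=\bm{1}^{\mathsf{T}}$, so $1$ is an eigenvalue. The other eigenvalue is then $\Tr C-1$, and therefore $\det C=\Tr C-1$. Combining this with $\Tr C=\tau_0+\tau_1$ gives
\[
X=\delta_0+\delta_1+\tau_0\tau_1-\tau_0-\tau_1+1=\delta_0+\delta_1+(\tau_0-1)(\tau_1-1).
\]
Expanding $\Tr(M^{(0)}M^{(1)})$ entrywise gives $ae+bg+cf+dh$, which is the fourth equation in \eqref{eq:M1-linear-system}. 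The system is linear, so uniqueness holds under the stated nonsingularity hypothesis.

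\textbf{Item (iv).} Items (i)--(iii) give at most two ordered roots $(b,c)$ for each fixed $a$. In the generic case each root determines $M^{(1)}$ uniquely, so there are at most two candidates. The entrywise $[0,1]$ constraints are then simply appended as admissibility conditions. Statements that they ``typically'' exclude a candidate are qualitative and deferred to the physicality subsection.

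\textbf{Main obstacle.} I expect the only real care to be needed in (iii). One has to state correctly that the four equations may be dependent: in the case $u=v$, the first three already force $e+g=f+h$. Nonsingularity is therefore a genuine hypothesis to be carried along, not something to prove.
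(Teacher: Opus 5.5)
Your proposal is correct and follows the paper's own proof essentially step for step: the entry sum, trace, determinant and Vieta for (i); substituting the TP column sums and $e=\tau_1-h$ into $eh-fg=\delta_1$ for (ii); and the $2\times 2$ expansion $\det(A+B)=\det A+\det B+\Tr A\,\Tr B-\Tr(AB)$ together with $\det C=\Tr C-1$ for (iii). Your ``main obstacle'' remark can be sharpened: on the solution line $(e,f,g,h)=(\tau_1-h,\,v-h,\,u-\tau_1+h,\,h)$ of the first three equations in \eqref{eq:M1-linear-system}, both the fourth equation and $eh-fg=\delta_1$ have $h$-coefficient exactly $u-v$, so the $4\times 4$ system is nonsingular if and only if $u\neq v$, and the uniqueness claim in (iii) never applies in the case $u=v$ it is meant for (there $h$ is genuinely undetermined by the invariants) --- a weakness of the statement rather than a gap in your argument.
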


\begin{proof}
    Since $\bm{\pi}_{\mathrm{mix}}=\tfrac12\bm{1}$, the single-shot probability $S_1=p(0)$ satisfies
    \begin{equation}
        \label{eq:S1-sum-entries}
        2S_1
        \;=\;
        \bm{1}^{\mathsf{T}}M^{(0)}\bm{1}
        \;=\;
        a+b+c+d.
    \end{equation}
    Fixing $a$ and using $\tau_0=a+d$ gives $d=\tau_0-a$.
    Subtracting $\tau_0$ from \eqref{eq:S1-sum-entries} yields $b+c=2S_1-\tau_0$, proving the first two relations in \eqref{eq:recon-M0-constraints}.
    The determinant condition $\delta_0=\det(M^{(0)})=ad-bc$ rearranges to $bc=ad-\delta_0$, giving the third relation.
    Eliminating $b$ via $b=(2S_1-\tau_0)-c$ yields the quadratic \eqref{eq:recon-quadratic} and hence the two-root structure \eqref{eq:recon-bc-roots}.

    Next, the TP constraints \eqref{eq:z-twirled-mcm-constraints} imply column-stochasticity of $M^{(0)}+M^{(1)}$, i.e. \( (a+c)+(e+g)=1 \) and \( (b+d)+(f+h)=1, \) establishing \eqref{eq:uv-def}.
    With $e=\tau_1-h$, $g=u-e=u-\tau_1+h$, and $f=v-h$, the determinant equation $\delta_1=eh-fg$ becomes
    \begin{equation}
        \delta_1
        \;=\;
        (\tau_1-h)h - (v-h)(u-\tau_1+h)
        \;=\;
        (u-v)h - v(u-\tau_1),
    \end{equation}
    which is linear in $h$ whenever $u\neq v$, giving \eqref{eq:h-solution-generic}.

    In the degenerate case $u=v$, the above linear reduction fails.
    Identity \eqref{eq:trM0M1-identity} follows from the $2\times 2$ determinant expansion stated in the proposition and the fact that any column-stochastic $2\times 2$ matrix has eigenvalues $\{1,\det(\cdot)\}$ and hence $\det(\cdot)=\Tr (\cdot)-1$.
    Finally, \eqref{eq:M1-linear-system} is exactly the collection of the three column/trace constraints and the scalar constraint $\Tr (M^{(0)}M^{(1)})=X$, written entrywise, and thus determines $(e,f,g,h)$ whenever the corresponding linear system is invertible.
\end{proof}

\refstepcounter{figure}
\begin{modernalgobox}[unbreakable,title={Algorithm~\thefigure: Representative reconstruction from invariants}]
    \begin{algorithmic}[1]
        \Require $S_1$, $\tau_0,\delta_0$, $\tau_1,\delta_1$ as in \eqref{eq:recon-invariant-inputs}, and a gauge choice $a$ in
        \eqref{eq:gauge-choice-a}.
        \Ensure A list of candidate pairs $(M^{(0)},M^{(1)})$ satisfying the invariant constraints and TP constraints.
        \State $d \gets \tau_0-a$, \quad $S \gets 2S_1-\tau_0$, \quad $K \gets a d - \delta_0$, \quad $\Delta \gets S^2-4K$.
        \If{$\Delta<0$}
        \State \Return empty list (no real-valued reconstruction for this gauge choice).
        \EndIf
        \State Compute the two candidate roots $c_{\pm}\gets \frac{S\pm\sqrt{\Delta}}{2}$ and set $b_{\pm}\gets S-c_{\pm}$.
        \For{each candidate pair $(b,c)\in\{(b_{+},c_{+}), (b_{-},c_{-})\}$}
        \State $M^{(0)} \gets \begin{pmatrix} a & b \\ c & d \end{pmatrix}$.
        \State $u \gets 1-(a+c)$, \quad $v \gets 1-(b+d)$.
        \If{$u\neq v$}
        \State $h \gets \frac{\delta_1 + v(u-\tau_1)}{u-v}$; \quad $e\gets \tau_1-h$; \quad $g\gets u-e$; \quad $f\gets v-h$.
        \Else
        \State $X\gets \delta_0 + \delta_1 + (\tau_0-1)(\tau_1-1)$.
        \State Solve the linear system \eqref{eq:M1-linear-system} for $(e,f,g,h)$ (if solvable).
        \EndIf
        \State $M^{(1)} \gets \begin{pmatrix} e & f \\ g & h \end{pmatrix}$.
        \State Append $(M^{(0)},M^{(1)})$ to the candidate list.
        \EndFor
        \State \Return candidate list (to be filtered by admissibility constraints in Subsec.~\ref{app:physicality:admissibility}).
    \end{algorithmic}
\end{modernalgobox}

\noindent\textbf{Algorithm~\thefigure.}
Algebraic reconstruction of a representative instrument from invariants (two-candidate output).
\label{alg:reconstruct-representative-branch}

\paragraph{Implementation notes.}
In finite-shot data, the discriminant $\Delta$ in \eqref{eq:recon-bc-roots} and the denominator $u-v$ in \eqref{eq:h-solution-generic} can become small or slightly negative due to sampling noise.
In practice we (i) clamp $\Delta$ to zero when it is within numerical tolerance of zero, (ii) treat $|u-v|$ below a tolerance as the degenerate case and switch to the linear system \eqref{eq:M1-linear-system}, and (iii) postpone any physicality enforcement to the gauge-admissibility analysis in Subsec.~\ref{app:physicality:admissibility}, where one can robustly compute the physically allowed gauge set and select a representative according to the reporting policy.

\paragraph{Relating an entry-fixed gauge choice to the parameter $t$.}
\label{sec:reconstruction:a-vs-t}

The reconstruction pipeline can \emph{fix a representative} of the gauge-equivalent family by prescribing a specific matrix entry (e.g., $a:=M^{(0)}_{0,0}$).
Since this entry is \emph{not} gauge-invariant, we relate this entry-fixed choice to the parameter $t$ appearing in the gauge matrices $R(t)$ (Definition~\ref{def:gauge-action}).

\begin{lemma}[Entry value along the $R(t)$ gauge family]
    \label{lem:a-vs-t}
    Fix a representative instrument $\{M^{(0)}(0), M^{(1)}(0)\}$ as reference and define the associated gauge-equivalent family by
    \begin{equation}
        M^{(o)}(t)\;:=\;R(t)^{-1} M^{(o)}(0)\,R(t),
        \qquad o\in\{0,1\},\qquad t\in\mathbb{R}\setminus\{\tfrac12\}.
    \end{equation}
    Write
    \begin{equation}
        M^{(0)}(0)
        \;=\;
        \begin{pmatrix}
            m_{00} & m_{01} \\
            m_{10} & m_{11}
        \end{pmatrix},
        \qquad
        m_{ij}:=\bigl[M^{(0)}(0)\bigr]_{i,j},
    \end{equation}
    and define the entry function
    \begin{equation}
        a(t)\;:=\;\bigl[M^{(0)}(t)\bigr]_{0,0}.
    \end{equation}
    Then $a(t)$ is a rational function of $t$ given explicitly by
    \begin{equation}
        a(t)
        \;=\;
        \frac{
        (m_{01}+m_{11}-m_{00}-m_{10})\,t^{2}
        + (2m_{00}+m_{10}-m_{01})\,t
        - m_{00}
        }{2t-1}.
        \label{eq:a-of-t-rational}
    \end{equation}

    Conversely, prescribing an entry-fixed gauge choice $a\in\mathbb{R}$ and requiring $a=a(t)$ forces $t$ to satisfy the quadratic constraint
    \begin{equation}
        (m_{01}+m_{11}-m_{00}-m_{10})\,t^{2}
        + (2m_{00}+m_{10}-m_{01}-2a)\,t
        + (a-m_{00})
        \;=\;0.
        \label{eq:t-from-a-quadratic}
    \end{equation}
    In the special (nongeneric) case $m_{01}+m_{11}=m_{00}+m_{10}$ (i.e., the two columns of $M^{(0)}(0)$ have equal sums), Eq.~\eqref{eq:t-from-a-quadratic} collapses to a linear-fractional relation
    \begin{equation}
        t
        \;=\;
        \frac{m_{00}-a}{\,2m_{00}+m_{10}-m_{01}-2a\,},
        \qquad
        \text{provided} \quad 2m_{00}+m_{10}-m_{01}-2a\neq 0.
        \label{eq:t-from-a-rational-special}
    \end{equation}
\end{lemma}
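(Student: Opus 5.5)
The plan is a direct computation of the $(0,0)$ entry of $R(t)^{-1}M^{(0)}(0)R(t)$. First I will write the inverse explicitly. Since $\det R(t)=1-2t$, the adjugate formula gives
\[
R(t)^{-1}=\frac{1}{1-2t}\begin{pmatrix}1-t & -t\\ -t & 1-t\end{pmatrix}.
\]
Then $a(t)=e_0^{\mathsf T}R(t)^{-1}\,M^{(0)}(0)\,R(t)e_0$. The row vector $e_0^{\mathsf T}R(t)^{-1}$ is $\frac{1}{1-2t}(1-t,\,-t)$, and the column $R(t)e_0$ is $(1-t,\,t)^{\mathsf T}$. This reduces $a(t)$ to a bilinear form
\[
a(t)=\frac{1}{1-2t}\Bigl[(1-t)^2m_{00}+(1-t)t\,m_{01}-t(1-t)m_{10}-t^2m_{11}\Bigr].
\]

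Next I will expand the bracket in powers of $t$ and flip the sign to put the denominator in the form $2t-1$, so the result can be matched with \eqref{eq:a-of-t-rational}. The coefficients are as follows.
\begin{itemize}
\item The $t^2$ coefficient is $m_{00}-m_{01}+m_{10}-m_{11}$.
\item The $t$ coefficient is $-2m_{00}+m_{01}-m_{10}$.
\item The constant term is $m_{00}$.
\end{itemize}
Multiplying numerator and denominator by $-1$ then gives exactly the stated numerator over $2t-1$. This sign bookkeeping is the only place an error is likely, so I will check it by specializing to $t=0$, which must return $a(0)=m_{00}$. I will also confirm that the cross terms carry signs $+m_{01}$ and $-m_{10}$ as derived above.

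For the converse, I set $a=a(t)$ and multiply through by $2t-1\neq 0$, which is legitimate because $t\neq\tfrac12$ on the domain. This gives $a(2t-1)$ equal to the numerator. Moving $2at-a$ to the right-hand side yields the quadratic \eqref{eq:t-from-a-quadratic}: the $t$ coefficient picks up $-2a$ and the constant becomes $a-m_{00}$.

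In the special case where the two column sums of $M^{(0)}(0)$ are equal, the leading coefficient vanishes. The equation becomes linear, $(2m_{00}+m_{10}-m_{01}-2a)\,t=m_{00}-a$. Dividing under the stated nonvanishing proviso gives \eqref{eq:t-from-a-rational-special}. I will add one remark: any root of the quadratic that happens to equal $\tfrac12$ must be discarded. This cannot occur spuriously, because at $t=\tfrac12$ the quadratic evaluates to $\tfrac14(m_{01}+m_{11}-m_{00}-m_{10})$, which is generically nonzero. None of these steps is a real obstacle; the main care point is the sign normalization of the denominator.
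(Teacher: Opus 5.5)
Your main argument is correct and is the same direct-multiplication proof the paper gives: invert $R(t)$ explicitly, read off the $(0,0)$ entry, then clear the factor $2t-1$. Your bilinear form, your three coefficients, the sign flip to the $2t-1$ denominator, and the linear special case all check out.

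Your optional closing remark about $t=\tfrac12$, however, is miscomputed. At $t=\tfrac12$ the quadratic equals the numerator of $a(t)$ evaluated there, which is $\tfrac14(m_{10}+m_{11}-m_{00}-m_{01})$, the difference of the row sums. Your value $\tfrac14(m_{01}+m_{11}-m_{00}-m_{10})$ is just the leading coefficient divided by $4$; it drops the contributions of the linear and constant terms.

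Consequently, $t=\tfrac12$ is a spurious root precisely when $M^{(0)}(0)\bm{1}\propto\bm{1}$, i.e.\ when the row sums are equal. In that case the factor $2t-1$ cancels and $a(t)$ is affine in $t$. So the claim that such a root cannot occur spuriously fails in exactly that non-generic case. This is also the $S_1^2=S_2$ degeneracy the paper flags elsewhere.

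The lemma only asserts the quadratic as a necessary condition on $t$, so this does not affect your proof. The remark should nonetheless be corrected or dropped.
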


\begin{proof}
    Using $R(t)=\begin{psmallmatrix}1-t&t\\ t&1-t\end{psmallmatrix}$ and $R(t)^{-1}=\frac{1}{1-2t}\begin{psmallmatrix}1-t&-t\\ -t&1-t\end{psmallmatrix}$, direct multiplication gives \( M^{(0)}(t)=R(t)^{-1}M^{(0)}(0)R(t) \) and the $(0,0)$ entry simplifies to Eq.~\eqref{eq:a-of-t-rational}.
    Rearranging $a=a(t)$ after multiplying both sides by $(2t-1)$ yields Eq.~\eqref{eq:t-from-a-quadratic}, and the degenerate case $m_{01}+m_{11}=m_{00}+m_{10}$ gives the linear-fractional form \eqref{eq:t-from-a-rational-special}.
\end{proof}

\paragraph{Practical use in gauge selection.}
Equation~\eqref{eq:a-of-t-rational} provides an explicit conversion between two common ways of parameterizing the same one-dimensional gauge-equivalent family, either by the parameter $t$ (used throughout SM Sec.~\ref{app:gauge} and the admissibility analysis in Subsec.~\ref{app:physicality:admissibility}) or by an entry-fixed gauge choice $a:=M^{(0)}_{0,0}$.
In general Eq.~\eqref{eq:t-from-a-quadratic} admits up to two real solutions for $t$, reflecting that a single matrix entry is not a globally one-to-one parameterization of the family; in applications the physically allowed gauge set selects the admissible representative(s).

\subsection{Admissibility and physically allowed gauge-set computation}
\label{app:physicality:admissibility}
\label{app:physicality:union-intervals}

Given a reconstructed candidate representative (and, in general, multiple algebraic candidates) from Subsec.~\ref{app:reconstruction:algebraic}, the admissibility analysis characterizes which similarity transforms in the $R(t)$ gauge family preserve physicality, i.e., keep every entry within $[0,1]$ while preserving instrument normalization.
This yields the \emph{physically allowed gauge set}, the set of representatives used for candidate filtering and for reporting gauge-dependent quantities as gauge bands.

\begin{definition}[Admissible representative and physically allowed gauge set]
    \label{def:feasible-gauge-set}
    Fix a reference representative $(M^{(0)},M^{(1)})$ satisfying the physical constraints of Definition~\ref{def:z-twirled-mcm-instrument}.
    For each $t\in\mathbb{R}\setminus\{\tfrac12\}$ define
    \begin{equation}
        \label{eq:R-t-def-physicality}
        R(t)\;\coloneqq\;\begin{pmatrix}1-t & t\\ t & 1-t\end{pmatrix},
        \qquad
        \widetilde{M}^{(o)}(t)\;\coloneqq\;R(t)^{-1}M^{(o)}R(t),
        \qquad o\in\{0,1\}.
    \end{equation}
    We say that $t$ is \emph{admissible} (for the chosen reference representative) if
    \begin{equation}
        \label{eq:admissible-t-entrywise}
        0 \le \widetilde{M}^{(o)}_{s',s}(t) \le 1
        \qquad\text{for all }o,s,s'\in\{0,1\}.
    \end{equation}
    The corresponding \emph{physically allowed gauge set} is
    \begin{equation}
        \label{eq:Tfeas-def}
        \mathcal{T}_\mathrm{phys}
        \;\coloneqq\;
        \Bigl\{t\in\mathbb{R}\setminus\{\tfrac12\}\ :\ 0\le \widetilde{M}^{(o)}_{s',s}(t)\le 1
        \ \text{for all }o,s,s'\in\{0,1\}\Bigr\},
    \end{equation}
    where inequalities are interpreted entrywise.
    Because normalization is preserved along this family, the upper bound is implied by entrywise nonnegativity; we state the full $[0,1]$ condition to emphasize the conditional-probability interpretation.
\end{definition}

\begin{lemma}[Normalization is preserved along the gauge family]
    \label{lem:gauge-preserves-normalization}
    Let $(M^{(0)},M^{(1)})$ satisfy the instrument normalization $\bm{1}^{\mathsf{T}}(M^{(0)}+M^{(1)})=\bm{1}^{\mathsf{T}}$.
    Then for every $t\neq \tfrac12$,
    \begin{equation}
        \label{eq:gauge-preserves-normalization}
        \bm{1}^{\mathsf{T}}\bigl(\widetilde{M}^{(0)}(t)+\widetilde{M}^{(1)}(t)\bigr)
        \;=\;
        \bm{1}^{\mathsf{T}}.
    \end{equation}
    Consequently, for such a reference representative, admissibility of $t$ is equivalent to the ability to physically realize $(\widetilde{M}^{(0)}(t),\widetilde{M}^{(1)}(t))$ in the sense of Definition~\ref{def:z-twirled-mcm-instrument}.
\end{lemma}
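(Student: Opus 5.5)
The argument needs only the invariance relation \eqref{eq:R-invariants-v0-1} and the fact that right-multiplying by a matrix preserves its row sums. Set $R\coloneqq R(t)$ with $t\neq\tfrac12$, so $R$ is invertible. By linearity,
\begin{equation*}
    \widetilde{M}^{(0)}(t)+\widetilde{M}^{(1)}(t)
    \;=\;
    R^{-1}\bigl(M^{(0)}+M^{(1)}\bigr)R .
\end{equation*}
Left-multiplying by $\bm{1}^{\mathsf{T}}$ and using $\bm{1}^{\mathsf{T}}R^{-1}=\bm{1}^{\mathsf{T}}$ from \eqref{eq:R-invariants-v0-1} turns this into $\bm{1}^{\mathsf{T}}(M^{(0)}+M^{(1)})R$. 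The normalization hypothesis reduces that to $\bm{1}^{\mathsf{T}}R$. Since $R(t)$ is symmetric, its column sums equal its row sums, namely $1$, so $\bm{1}^{\mathsf{T}}R=\bm{1}^{\mathsf{T}}$. This gives \eqref{eq:gauge-preserves-normalization}.

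For the ``consequently'' clause, I compare admissibility with the constraints \eqref{eq:z-twirled-mcm-constraints} of Definition~\ref{def:z-twirled-mcm-instrument}. Suppose $t$ is admissible. Then every entry of $\widetilde{M}^{(o)}(t)$ is nonnegative by \eqref{eq:admissible-t-entrywise}, and the column-sum constraint holds by the identity just proved. So the pair satisfies \eqref{eq:z-twirled-mcm-constraints}, and Proposition~\ref{prop:mcm-equivalence-learning} (via the canonical Kraus instrument \eqref{eq:canonical-instrument}) makes it physically realizable. Conversely, suppose the transformed pair is realizable. The first part of Proposition~\ref{prop:mcm-equivalence-learning} then gives nonnegativity and normalization. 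Each entry $\widetilde{M}^{(o)}_{s',s}(t)$ is one of four nonnegative terms in a column of $\widetilde{M}^{(0)}(t)+\widetilde{M}^{(1)}(t)$ that sums to $1$, so each entry is also at most $1$. This recovers \eqref{eq:admissible-t-entrywise}, and it justifies the remark in Definition~\ref{def:feasible-gauge-set} that the upper bound is redundant.

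None of these steps is difficult. The only point needing care is the order of operations: the row-vector invariance must be applied on the left, and $\bm{1}^{\mathsf{T}}R=\bm{1}^{\mathsf{T}}$ must be checked directly. The latter holds because $R(t)$ is doubly stochastic in the signed sense; it does not follow from \eqref{eq:R-invariants-v0-1} alone, which concerns $R\bm{1}$ and $\bm{1}^{\mathsf{T}}R^{-1}$.
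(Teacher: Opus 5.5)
Your proof is correct and follows the paper's argument: it uses the same chain $\bm{1}^{\mathsf{T}}R(t)^{-1}\bigl(M^{(0)}+M^{(1)}\bigr)R(t)=\bm{1}^{\mathsf{T}}R(t)=\bm{1}^{\mathsf{T}}$, and your two-directional treatment of the ``consequently'' clause (including the observation that the upper bound is implied) spells out what the paper states in one line. One small correction to your closing remark: $\bm{1}^{\mathsf{T}}R(t)=\bm{1}^{\mathsf{T}}$ does follow from \eqref{eq:R-invariants-v0-1} alone, by right-multiplying $\bm{1}^{\mathsf{T}}R(t)^{-1}=\bm{1}^{\mathsf{T}}$ by $R(t)$, so your separate direct check is harmless but not needed.
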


\begin{proof}
    Since $R(t)$ has column sums equal to $1$, we have $R(t)\bm{1}=\bm{1}$, and one checks directly that $\bm{1}^{\mathsf{T}}R(t)^{-1}=\bm{1}^{\mathsf{T}}$ for all $t\neq \tfrac12$.
    Therefore,
    \[
        \bm{1}^{\mathsf{T}}\bigl(\widetilde{M}^{(0)}(t)+\widetilde{M}^{(1)}(t)\bigr)
        =
        \bm{1}^{\mathsf{T}}R(t)^{-1}\bigl(M^{(0)}+M^{(1)}\bigr)R(t)
        =
        \bm{1}^{\mathsf{T}}\bigl(M^{(0)}+M^{(1)}\bigr)R(t)
        =
        \bm{1}^{\mathsf{T}}R(t)
        =
        \bm{1}^{\mathsf{T}}.
    \]
    The final claim follows because Definition~\ref{def:z-twirled-mcm-instrument} requires only entrywise nonnegativity together with the above normalization, and the latter holds automatically along this family.
\end{proof}

\paragraph{Optional relaxation under finite data.}
In finite-shot reconstructions, a numerically reconstructed reference representative may violate exact entrywise nonnegativity by a small tolerance.
When convenient, one may replace \eqref{eq:admissible-t-entrywise} by the relaxed bounds $\widetilde{M}^{(o)}_{s',s}(t)\in[-\varepsilon_\mathrm{phys},\,1+\varepsilon_\mathrm{phys}]$ for all $o,s,s'$, with a user-chosen $\varepsilon_\mathrm{phys}\ge 0$.
The analytic interval characterization below extends verbatim by simple coefficient shifts.

\paragraph{Analytic interval computation.}

The set $\mathcal{T}_\mathrm{phys}$ in \eqref{eq:Tfeas-def} is a finite union of intervals, and the following construction computes it analytically.

\begin{lemma}[Diagonal form of the gauge action]
    \label{lem:diagonal-form-gauge-action}
    Define the scalar
    \begin{equation}
        \label{eq:D-def}
        D\;\coloneqq\;1-2t,
    \end{equation}
    and the orthogonal change-of-basis matrix
    \begin{equation}
        \label{eq:U-def}
        U\;\coloneqq\;\frac{1}{\sqrt{2}}
        \begin{pmatrix}
            1 & 1  \\
            1 & -1
        \end{pmatrix},
        \qquad U^{\mathsf{T}}=U^{-1}.
    \end{equation}
    Then for all $t\neq \tfrac12$,
    \begin{equation}
        \label{eq:R-diagonalized}
        U^{\mathsf{T}}R(t)U=\mathrm{diag}(1,D),
        \qquad
        U^{\mathsf{T}}R(t)^{-1}U=\mathrm{diag}\!\left(1,\frac{1}{D}\right).
    \end{equation}
    For any real $2\times 2$ matrix $P$, writing
    \begin{equation}
        \label{eq:P-tilde-abgd}
        \widetilde{P}\;\coloneqq\;U^{\mathsf{T}}PU
        \;=\;
        \begin{pmatrix}
            \alpha & \beta  \\
            \gamma & \delta
        \end{pmatrix},
    \end{equation}
    the conjugated matrix $P(t)\coloneqq R(t)^{-1}PR(t)$ obeys
    \begin{equation}
        \label{eq:abgd-update}
        U^{\mathsf{T}}P(t)U
        \;=\;
        \begin{pmatrix}
            \alpha   & D\,\beta \\
            \gamma/D & \delta
        \end{pmatrix}.
    \end{equation}
    In particular, $\alpha$ and $\delta$ are invariant along the gauge family, while $\beta\gamma$ is invariant.
\end{lemma}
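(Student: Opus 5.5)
The plan is a direct computation in the eigenbasis of $R(t)$, which commutes with every gauge matrix. First I will verify \eqref{eq:R-diagonalized}. Write $R(t)=(1-t)I+tX$, where $X=\begin{psmallmatrix}0&1\\1&0\end{psmallmatrix}$. The columns of $U$ are $\tfrac{1}{\sqrt2}(1,1)^{\mathsf{T}}$ and $\tfrac1{\sqrt2}(1,-1)^{\mathsf{T}}$. These are eigenvectors of $X$ with eigenvalues $+1$ and $-1$, so they are eigenvectors of $R(t)$ with eigenvalues $(1-t)+t=1$ and $(1-t)-t=1-2t=D$.

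The matrix $U$ is symmetric and $U^2=I$, so $U^{\mathsf{T}}=U^{-1}$. This gives $U^{\mathsf{T}}R(t)U=\mathrm{diag}(1,D)$. For $t\neq\tfrac12$ we have $D\neq0$, and inverting both sides yields $U^{\mathsf{T}}R(t)^{-1}U=\mathrm{diag}(1,1/D)$. This agrees with the group-law remark that the gauge group acts with eigenvalues $1$ and $u=1-2t$.

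Next I will insert $UU^{\mathsf{T}}=I$ between the factors of $P(t)=R(t)^{-1}PR(t)$:
\[
U^{\mathsf{T}}P(t)U=\bigl(U^{\mathsf{T}}R(t)^{-1}U\bigr)\bigl(U^{\mathsf{T}}PU\bigr)\bigl(U^{\mathsf{T}}R(t)U\bigr)=\mathrm{diag}(1,D^{-1})\,\widetilde P\,\mathrm{diag}(1,D).
\]
Left multiplication by $\mathrm{diag}(1,D^{-1})$ scales the second row by $1/D$. Right multiplication by $\mathrm{diag}(1,D)$ scales the second column by $D$.

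Tracking the entries of $\widetilde P$: $\alpha$ is untouched; $\beta$ (first row, second column) becomes $D\beta$; $\gamma$ (second row, first column) becomes $\gamma/D$; and $\delta$ picks up $D\cdot D^{-1}=1$. This is exactly \eqref{eq:abgd-update}. The invariance statements follow at once, since $\alpha$ and $\delta$ do not change and $(D\beta)(\gamma/D)=\beta\gamma$.

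There is no real obstacle here. The only point needing care is the order of the diagonal factors, that is, that $R^{-1}$ sits on the left so that $\beta$ scales as $D$ and not as $1/D$. I will check this against the explicit $R(t)^{-1}$ given in the proof of Lemma~\ref{lem:a-vs-t}, or on the trivial case $t=0$, where $D=1$.
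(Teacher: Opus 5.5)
Your proposal is correct and is essentially the paper's own proof: diagonalize $R(t)=(1-t)I+tX$ with $U$, then write $U^{\mathsf{T}}P(t)U=\mathrm{diag}(1,1/D)\,\widetilde P\,\mathrm{diag}(1,D)$ and read off the entries. The factor order follows directly from the definition $P(t)=R(t)^{-1}PR(t)$, so no extra check is needed; the $t=0$ case could not have distinguished $D$ from $1/D$ anyway.
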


\begin{proof}
    The identities \eqref{eq:R-diagonalized} are verified by direct multiplication using $R(t)=(1-t)I+tX$ and the fact that $U$ diagonalizes $X$.
    Equation \eqref{eq:abgd-update} follows from
    \[
        U^{\mathsf{T}}P(t)U
        =
        \bigl(U^{\mathsf{T}}R(t)^{-1}U\bigr)\,\bigl(U^{\mathsf{T}}PU\bigr)\,\bigl(U^{\mathsf{T}}R(t)U\bigr)
        =
        \mathrm{diag}\!\left(1,\frac{1}{D}\right)\widetilde{P}\,\mathrm{diag}(1,D).
    \]
\end{proof}

\begin{lemma}[Entry formulas and quadratic constraints]
    \label{lem:entry-formulas-quadratic}
    Let $P\in\mathbb{R}^{2\times 2}$ and let $\alpha,\beta,\gamma,\delta$ be defined by \eqref{eq:P-tilde-abgd}.
    Write $P(t)=R(t)^{-1}PR(t)=\bigl(p_{ij}(t)\bigr)_{i,j\in\{0,1\}}$.
    Then, with $D=1-2t$,
    \begin{align}
        \label{eq:entry-formulas}
        p_{00}(t) & = \frac{1}{2}\!\left(\alpha+\delta + D\beta + \frac{\gamma}{D}\right),
                  &
        p_{01}(t) & = \frac{1}{2}\!\left(\alpha-\delta - D\beta + \frac{\gamma}{D}\right),
        \nonumber                                                                          \\
        p_{10}(t) & = \frac{1}{2}\!\left(\alpha-\delta + D\beta - \frac{\gamma}{D}\right),
                  &
        p_{11}(t) & = \frac{1}{2}\!\left(\alpha+\delta - D\beta - \frac{\gamma}{D}\right).
    \end{align}
    Consequently, for each fixed entry $p_{ij}(t)$ the bound $p_{ij}(t)\ge 0$ is equivalent to a quadratic inequality in $D$.
    \begin{equation}
        \label{eq:quadratic-ineq-template}
        p_{ij}(t)\ge 0
        \quad\Longleftrightarrow\quad
        q_{ij}(D)\ \begin{cases}
            \ge 0, & D>0, \\
            \le 0, & D<0,
        \end{cases}
    \end{equation}
    where $q_{ij}(D)\coloneqq 2D\,p_{ij}(t)$ is a quadratic polynomial.
    Explicitly,
    \begin{align}
        \label{eq:qij-explicit}
        q_{00}(D) & = \beta D^2 + (\alpha+\delta)D + \gamma,
                  &
        q_{01}(D) & = -\beta D^2 + (\alpha-\delta)D + \gamma,
        \nonumber                                             \\
        q_{10}(D) & = \beta D^2 + (\alpha-\delta)D - \gamma,
                  &
        q_{11}(D) & = -\beta D^2 + (\alpha+\delta)D - \gamma.
    \end{align}
\end{lemma}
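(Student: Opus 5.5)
The entry formulas follow from inverting the change of basis in Lemma~\ref{lem:diagonal-form-gauge-action}. Since $U$ is orthogonal and symmetric, $P(t)=U\,\bigl(U^{\mathsf{T}}P(t)U\bigr)\,U^{\mathsf{T}}$. By \eqref{eq:abgd-update} the middle factor is
\[
\begin{pmatrix}\alpha & D\beta\\ \gamma/D & \delta\end{pmatrix}.
\]
So the plan is to take a generic $2\times2$ matrix $Q=\begin{pmatrix}x&y\\z&w\end{pmatrix}$ and compute $UQU^{\mathsf{T}}$ once. With $U=\tfrac{1}{\sqrt2}\begin{psmallmatrix}1&1\\1&-1\end{psmallmatrix}$, the result is
\[
UQU^{\mathsf{T}}=\tfrac12\begin{pmatrix}x+y+z+w & x-y+z-w\\ x+y-z-w & x-y-z+w\end{pmatrix}.
\]
Substituting $x=\alpha$, $y=D\beta$, $z=\gamma/D$, $w=\delta$ then gives the four expressions in \eqref{eq:entry-formulas} directly. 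The only care needed is with the sign bookkeeping: which of $y$ and $z$ carries the minus sign in the off-diagonal entries. I will cross-check this at $t=0$ ($D=1$), where the formula must reproduce $P=U\widetilde P U^{\mathsf{T}}$.

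For the quadratic reformulation, I will multiply each $p_{ij}(t)$ by $2D$, which is nonzero because $t\neq\tfrac12$. This clears the $1/D$ term and produces the polynomials $q_{ij}(D)$ of \eqref{eq:qij-explicit}; each is read off term by term from \eqref{eq:entry-formulas}. For example, $2D\,p_{00}=D(\alpha+\delta)+D^2\beta+\gamma$. Since $p_{ij}=q_{ij}/(2D)$, the sign of $p_{ij}$ equals the sign of $q_{ij}$ when $D>0$ and is opposite when $D<0$. This yields \eqref{eq:quadratic-ineq-template}.

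There is no real obstacle here; the only error-prone step is tracking the signs in the conjugation by $U$, and the $D=1$ consistency check guards against that.
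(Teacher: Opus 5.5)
Your proposal is correct and follows the paper's own proof: conjugate back by $U$ using the update \eqref{eq:abgd-update}, multiply each entry by $2D\neq 0$, and split on the sign of $D$. Your explicit $UQU^{\mathsf{T}}$ expansion already has the off-diagonal signs right, since $y=D\beta$ enters $(0,1)$ with a minus and $z=\gamma/D$ enters $(1,0)$ with a minus, matching \eqref{eq:entry-formulas}.
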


\begin{proof}
    The entry formulas \eqref{eq:entry-formulas} follow by inserting \eqref{eq:abgd-update} into $P(t)=U\bigl(U^{\mathsf{T}}P(t)U\bigr)U^{\mathsf{T}}$ and multiplying out.
    Multiplying each expression in \eqref{eq:entry-formulas} by $2D$ yields \eqref{eq:qij-explicit}.
    If $D>0$ then $p_{ij}(t)\ge 0$ is equivalent to $2Dp_{ij}(t)\ge 0$; if $D<0$ then the inequality reverses, giving \eqref{eq:quadratic-ineq-template}.
\end{proof}

\begin{proposition}[Computing $\mathcal{T}_\mathrm{phys}$ analytically]
    \label{prop:compute-t-feas}
    Fix a reference representative $(M^{(0)},M^{(1)})$ satisfying Definition~\ref{def:z-twirled-mcm-instrument}.
    Then the physically allowed gauge set $\mathcal{T}_\mathrm{phys}$ in \eqref{eq:Tfeas-def} is a (finite) union of intervals.
    Moreover, it can be computed as follows.

    For each outcome $o\in\{0,1\}$, perform the following steps.
    \begin{enumerate}[label=\textup{(\roman*)}, leftmargin=2.2em]
        \item Form $\widetilde{M}^{(o)}\coloneqq U^{\mathsf{T}}M^{(o)}U$ and extract $(\alpha_o,\beta_o,\gamma_o,\delta_o)$ as in \eqref{eq:P-tilde-abgd}.

        \item For each entry index $(i,j)\in\{0,1\}^2$, define the quadratic $q_{ij}^{(o)}(D)$ by \eqref{eq:qij-explicit} with $(\alpha,\beta,\gamma,\delta)=(\alpha_o,\beta_o,\gamma_o,\delta_o)$.

        \item Solve the four quadratic inequalities
              \begin{equation}
                  \label{eq:D-feas-o}
                  q_{ij}^{(o)}(D)\ \begin{cases}
                      \ge 0, & D>0, \\
                      \le 0, & D<0,
                  \end{cases}
                  \qquad (i,j)\in\{0,1\}^2,
              \end{equation}
              obtaining an admissible set $\mathcal{D}^{(o)}_\mathrm{feas}\subset\mathbb{R}\setminus\{0\}$ that is a union of intervals on the $D$-axis.

        \item Intersect the outcome-wise admissible sets.
              \begin{equation}
                  \label{eq:D-feas-intersect}
                  \mathcal{D}_\mathrm{feas}\;\coloneqq\;\mathcal{D}^{(0)}_\mathrm{feas}\cap \mathcal{D}^{(1)}_\mathrm{feas}.
              \end{equation}

        \item Map back to $t$ using the monotone relation $t=\tfrac{1-D}{2}$ from \eqref{eq:D-def}.
              \begin{equation}
                  \label{eq:D-to-t-map}
                  \mathcal{T}_\mathrm{phys}
                  \;=\;
                  \left\{\frac{1-D}{2}\ :\ D\in\mathcal{D}_\mathrm{feas}\right\}.
              \end{equation}
              Each interval $[D_-,D_+]$ maps to the interval $\bigl[\tfrac{1-D_+}{2},\,\tfrac{1-D_-}{2}\bigr]$.
    \end{enumerate}
\end{proposition}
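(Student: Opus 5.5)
The proof will reduce the defining condition of $\mathcal{T}_\mathrm{phys}$ to a finite family of polynomial sign conditions in the single variable $D=1-2t$, using Lemmas~\ref{lem:diagonal-form-gauge-action} and~\ref{lem:entry-formulas-quadratic}. First, by Lemma~\ref{lem:gauge-preserves-normalization} the gauge family preserves $\bm{1}^{\mathsf{T}}(\widetilde M^{(0)}+\widetilde M^{(1)})=\bm{1}^{\mathsf{T}}$. Hence once all eight entries are nonnegative, each entry is bounded by its column sum in $\widetilde M^{(0)}+\widetilde M^{(1)}$, which equals $1$. So the upper bounds in \eqref{eq:Tfeas-def} are redundant, and $\mathcal{T}_\mathrm{phys}$ is exactly the set of $t\neq\tfrac12$ with $\widetilde M^{(o)}_{s',s}(t)\ge 0$ for all eight index triples. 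This justifies why step (iii) only treats nonnegativity.

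Next, apply Lemma~\ref{lem:diagonal-form-gauge-action} to $P=M^{(o)}$ for each $o$ to obtain $(\alpha_o,\beta_o,\gamma_o,\delta_o)$. Lemma~\ref{lem:entry-formulas-quadratic} then says that each entry of $\widetilde M^{(o)}(t)$ is nonnegative iff $q^{(o)}_{ij}(D)\ge0$ on $D>0$, respectively $\le 0$ on $D<0$. Since $t\mapsto D$ is an affine bijection of $\mathbb{R}\setminus\{\tfrac12\}$ onto $\mathbb{R}\setminus\{0\}$, it follows that
\[
\mathcal{T}_\mathrm{phys}=\Bigl\{\tfrac{1-D}{2}: D\in\mathcal{D}^{(0)}_\mathrm{feas}\cap\mathcal{D}^{(1)}_\mathrm{feas}\Bigr\}.
\]
This establishes steps (i)–(v) as a correct computation. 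The map is decreasing, so $[D_-,D_+]$ is sent to $[\tfrac{1-D_+}{2},\tfrac{1-D_-}{2}]$.

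The remaining claim is that the result is a finite union of intervals. On each half-line $(0,\infty)$ and $(-\infty,0)$, the solution set of a single inequality $q(D)\ge0$ or $q(D)\le0$, with $q$ of degree at most two, is a union of at most two closed (relative) intervals. This holds because the sign of $q$ can change only at its at most two real roots. The degenerate case $q\equiv0$ gives the whole half-line, and a constant nonzero $q$ gives the half-line or the empty set. A finite intersection of finite unions of intervals is again a finite union of intervals, and the image under an affine bijection preserves this structure.

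The only delicate point is bookkeeping at degenerate coefficients, for example $\beta_o=0$ making $q$ linear, or a vanishing discriminant giving isolated points. Isolated points are allowed as degenerate intervals, so I expect these cases to go through by the same root-counting argument rather than to pose a real obstacle. I will also note that $D=0$ is excluded throughout, so intervals touching $0$ are open at that endpoint.
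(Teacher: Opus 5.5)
Your proposal is correct and follows essentially the same route as the paper's proof: Lemma~\ref{lem:entry-formulas-quadratic} turns each entrywise nonnegativity condition into a quadratic sign condition on each half-line in $D$. Each such condition has a solution set of at most two intervals, finitely many of these are intersected, and the result is mapped back through the affine bijection $t=\tfrac{1-D}{2}$. Your explicit argument that the upper bounds are redundant by Lemma~\ref{lem:gauge-preserves-normalization} (which the paper states in Definition~\ref{def:feasible-gauge-set} rather than in the proof), together with your treatment of degenerate coefficients, only fills in details the paper leaves implicit.
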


\begin{proof}
    By Lemma~\ref{lem:entry-formulas-quadratic}, each entrywise nonnegativity constraint is equivalent to a single quadratic inequality in $D$ on each of the two half-lines $(0,\infty)$ and $(-\infty,0)$.
    The solution set of a quadratic inequality on a half-line is a union of at most two intervals, hence the intersection over finitely many constraints and the two outcomes $o\in\{0,1\}$ yields a finite union of intervals $\mathcal{D}_\mathrm{feas}\subset\mathbb{R}\setminus\{0\}$.
    The linear change of variables \eqref{eq:D-to-t-map} preserves the “finite union of intervals” structure, and excludes $t=\tfrac12$ automatically because $D=0$ is excluded.
\end{proof}

\paragraph{A local inner bound on physically allowed gauge motion.}
Let $(M^{(0)},M^{(1)})$ be a reference representative and suppose all eight entries of $M^{(0)}$ and $M^{(1)}$ lie in $[m,\,1-m]$ for some $m\in(0,\tfrac12]$ (a uniform margin to the boundary).
For each $o\in\{0,1\}$ compute $(\alpha_o,\beta_o,\gamma_o,\delta_o)$ as in \eqref{eq:P-tilde-abgd} and define
\[
    C\;\coloneqq\;\max_{o\in\{0,1\}}\bigl(|\beta_o|+|\gamma_o|\bigr).
\]
Then, for all $t$ with $|t|\le m/C$ and $t<\tfrac12$, one has $t\in\mathcal{T}_\mathrm{phys}$.
This gives an inner feasibility guarantee around the chosen representative.
Entries with a larger margin to the $[0,1]$ boundary permit at least this much local gauge motion, while large $|\beta_o|$ or $|\gamma_o|$ make the entries more sensitive to motion along the gauge family.
The experimentally observed narrow gauge bands for high-fidelity MCMs arise from the complementary situation in which some physically important entries are already close to $0$ or $1$, so the entrywise constraints leave little room for redistribution under $R(t)$.

\begin{proof}[Proof sketch]
    On the component $t<\tfrac12$ (i.e.\ $D>0$), differentiate \eqref{eq:entry-formulas} with respect to $D$ and evaluate at $D=1$ to obtain $|\partial p_{ij}/\partial D|\le \tfrac12(|\beta_o|+|\gamma_o|)$ for each entry of each outcome matrix.
    Using $|D-1|=2|t|$ and a mean-value bound gives $|p_{ij}(t)-p_{ij}(0)|\le |t|\,(|\beta_o|+|\gamma_o|)\le |t|\,C$.
    If all entries start in $[m,1-m]$ and $|t|\le m/C$, they remain in $[0,1]$, hence stay nonnegative.
\end{proof}

\subsection{Representative choice and gauge-band reporting}
\label{app:physicality:representative}
\label{app:physicality:bands}

The physically allowed gauge set $\mathcal{T}_\mathrm{phys}$ is typically a union of intervals determined by the admissibility constraints (cf.\ Proposition~\ref{prop:compute-t-feas}).
While any choice of $t\in\mathcal{T}_\mathrm{phys}$ yields a physically valid representative instrument $(\widetilde{M}^{(0)}(t),\widetilde{M}^{(1)}(t))$, we adopt a deterministic convention for (i) plotting a single representative and (ii) quoting one concrete matrix pair when needed.
All scientific conclusions are ultimately reported as bands over $\mathcal{T}_\mathrm{phys}$ defined later in this subsection.

\begin{definition}[Max-margin (slack) gauge selection]
    \label{def:max-margin-gauge-selection}
    Let $\mathcal{T}_\mathrm{phys}$ be the physically allowed gauge set (Definition~\ref{def:feasible-gauge-set}).
    For each $t\in\mathcal{T}_\mathrm{phys}$ define the gauge-transformed representative matrices $\widetilde{M}^{(o)}(t)=R(t)^{-1}M^{(o)}R(t)$ as in \eqref{eq:R-t-def-physicality}.
    Define the \emph{entrywise slack} at $t$ by
    \begin{equation}
        \label{eq:slack-def}
        \mathrm{slack}(t)
        \;\coloneqq\;
        \min_{\substack{o\in\{0,1\}\\ i,j\in\{0,1\}}}
        \min\!\Bigl\{
        \widetilde{M}^{(o)}_{i,j}(t),\;
        1-\widetilde{M}^{(o)}_{i,j}(t)
        \Bigr\}.
    \end{equation}
    (The upper margin is included to avoid representatives with entries extremely close to $1$, which can also amplify finite-shot sensitivity in later reconstructions.)

    A \emph{max-margin representative} is any
    \begin{equation}
        \label{eq:t-rep-argmax}
        t_{\mathrm{rep}}
        \;\in\;
        \argmax_{t\in\mathcal{T}_\mathrm{phys}}\,\mathrm{slack}(t).
    \end{equation}
    If the maximizer is not unique, we break ties by choosing the minimizer of $|t|$ and then the smaller $t$.

    \smallskip

    \emph{Optional outcome-state alignment filter.}
    If desired, one may first restrict to the subset
    \begin{equation}
        \label{eq:orientation-filter}
        \mathcal{T}_\mathrm{orient}
        \;\coloneqq\;
        \Bigl\{t\in\mathcal{T}_\mathrm{phys}:\ \widetilde{M}^{(0)}_{0,0}(t)\ge \tfrac12\Bigr\},
    \end{equation}
    and then apply \eqref{eq:t-rep-argmax} with $\mathcal{T}_\mathrm{phys}$ replaced by $\mathcal{T}_\mathrm{orient}$ whenever $\mathcal{T}_\mathrm{orient}\neq\emptyset$.
    This convention enforces a consistent ``readout bit $0$ aligns with computational-basis state $\ket{0}$'' orientation when the physically allowed gauge set contains representatives that nearly exchange the two computational-basis states.
\end{definition}

\paragraph{Existence of a maximizer.}
Under the nondegenerate conditions encountered in practice, $\mathcal{T}_\mathrm{phys}$ is a finite union of closed bounded intervals that avoid $t=\tfrac12$, hence is compact.
Since $\mathrm{slack}(t)$ is continuous on $\mathcal{T}_\mathrm{phys}$ (being the minimum of finitely many continuous functions), the argmax in \eqref{eq:t-rep-argmax} is nonempty.

\begin{proposition}[Efficient max-margin selection on interval unions]
    \label{prop:efficient-max-margin-selection}
    Assume $\mathcal{T}_\mathrm{phys}$ is given as a finite union of disjoint intervals $\mathcal{T}_\mathrm{phys}=\bigcup_{k=1}^K [t_k^-,t_k^+]$ with $t_k^+<\tfrac12$ or $t_k^->\tfrac12$ for each $k$.
    Define $D=1-2t$ and set $\mathcal{D}_k=[D_k^-,D_k^+]$ as the image of $[t_k^-,t_k^+]$ under $D=1-2t$ (so $0\notin\mathcal{D}_k$).

    For each outcome $o\in\{0,1\}$ and entry index $(i,j)\in\{0,1\}^2$, the function $D\mapsto \widetilde{M}^{(o)}_{i,j}(t(D))$ is of the form
    \begin{equation}
        \label{eq:affine-D-invD-form}
        f(D)\;=\;u+vD+\frac{w}{D},
    \end{equation}
    with coefficients $(u,v,w)$ determined by $M^{(o)}$ (cf.\ Lemma~\ref{lem:entry-formulas-quadratic}).

    Fix one interval $\mathcal{D}_k$.
    Let $\mathcal{C}_k$ be the finite candidate set consisting of the following points.
    \begin{enumerate}[label=\textup{(\roman*)}, leftmargin=2.2em]
        \item Include the endpoints $D_k^-,D_k^+$.
        \item Include, for each of the $16$ functions in \eqref{eq:slack-def} written in the form \eqref{eq:affine-D-invD-form}, the stationary point $D_\ast=\sqrt{w/v}$ whenever $vw>0$ and $D_\ast\in\mathcal{D}_k$.
        \item Include, for each pair of such functions $f_1,f_2$ in \eqref{eq:affine-D-invD-form}, the intersection points in $\mathcal{D}_k$ obtained by solving $f_1(D)=f_2(D)$, i.e.\ the quadratic equation
              \begin{equation}
                  \label{eq:intersection-quadratic}
                  (v_1-v_2)D^2 + (u_1-u_2)D + (w_1-w_2)=0.
              \end{equation}
    \end{enumerate}
    Then any maximizer of $\mathrm{slack}(t)$ over $t\in[t_k^-,t_k^+]$ is attained at some $D\in\mathcal{C}_k$.
    Consequently, a global maximizer $t_{\mathrm{rep}}$ can be found by evaluating $\mathrm{slack}$ at the mapped candidate points $t=(1-D)/2$ over all $k$ and selecting the best value (with the tie-breaking rule of Definition~\ref{def:max-margin-gauge-selection}).
\end{proposition}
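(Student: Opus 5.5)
The plan is a standard piecewise-smooth maximin argument on a compact interval. First I will record the shape of the sixteen functions. By Lemma~\ref{lem:entry-formulas-quadratic}, each entry $\widetilde{M}^{(o)}_{i,j}(t(D))$ equals $\tfrac12(\pm\alpha_o\pm\delta_o)\pm\tfrac12\beta_o D\pm\tfrac12\gamma_o/D$, so it has the form \eqref{eq:affine-D-invD-form}. The same holds for $1-\widetilde{M}^{(o)}_{i,j}$, with $(u,v,w)\mapsto(1-u,-v,-w)$. Thus $\mathrm{slack}(t(D))=\min_{m=1}^{16} f_m(D)$, where each $f_m(D)=u_m+v_mD+w_m/D$ is real-analytic on $\mathcal{D}_k$, since $0\notin\mathcal{D}_k$. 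Because $D=1-2t$ is an affine bijection between $[t_k^-,t_k^+]$ and $\mathcal{D}_k$, it suffices to maximize $g(D):=\min_m f_m(D)$ over the compact interval $\mathcal{D}_k$. A maximizer exists by continuity, as noted in the paragraph on existence.

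Next, let $D^\star$ be any maximizer and suppose $D^\star\notin\mathcal{C}_k$. Then $D^\star$ is an interior point of $\mathcal{D}_k$, since the endpoints lie in $\mathcal{C}_k$. Let $A=\{m: f_m(D^\star)=g(D^\star)\}$ be the active set. I split into two cases.

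In the first case, two active functions $f_1\neq f_2$ differ as functions. Then $D^\star$ solves \eqref{eq:intersection-quadratic}, so it lies in $\mathcal{C}_k$ via item~(iii), a contradiction. If the quadratic is identically zero, then $f_1\equiv f_2$ and we are in the second case.

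In the second case, all active functions coincide with a single $f(D)=u+vD+w/D$. Then $g=f$ on a neighborhood of $D^\star$: the inactive functions are strictly larger at $D^\star$ and stay so nearby by continuity. An interior local maximum of $f$ forces $f'(D^\star)=v-w/D^{\star2}=0$.
\begin{itemize}
\item If $v\neq0$, this gives $D^{\star2}=w/v$, so $vw>0$. Then $D^\star=\pm\sqrt{w/v}$ is a stationary point and lies in $\mathcal{C}_k$ via item~(ii).
\item If $v=0$, then $w=0$ as well, so $f$ is constant. In that case $g$ is constant near $D^\star$.
\end{itemize}

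The main obstacle is this last situation, together with the sign issue in item~(ii). The proposition lists only $D_\ast=+\sqrt{w/v}$, but on a component with $D<0$ the stationary point is $-\sqrt{w/v}$. I would handle this by reading item~(ii) as including $\pm\sqrt{w/v}$, whichever lies in $\mathcal{D}_k$; this is a harmless sign convention.

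For the constant or degenerate case, the claim "any maximizer is attained in $\mathcal{C}_k$" should be read as "the maximum value is attained at some point of $\mathcal{C}_k$". To justify this, consider the maximal subinterval $J\ni D^\star$ on which $g$ equals the constant maximum value. Extend $J$ until either an endpoint of $\mathcal{D}_k$ is reached, or some other $f_m$ becomes active and crosses. A crossing point is a root of \eqref{eq:intersection-quadratic} with nonidentical coefficients, hence in $\mathcal{C}_k$. Since there are finitely many analytic functions, the extension terminates at such a point, and the maximum value is attained there.

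Finally, I will take the maximum over $k$ of these finite candidate evaluations. This yields the global maximizer $t_{\mathrm{rep}}$ after mapping back via $t=(1-D)/2$ and applying the tie-breaking rule of Definition~\ref{def:max-margin-gauge-selection}.
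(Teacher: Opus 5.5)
Your proposal is correct and follows essentially the paper's argument: a maximizer of the pointwise minimum of the sixteen functions $u+vD+w/D$ on the compact interval $\mathcal{D}_k$ sits at an endpoint, at a stationary point of a single active function, or at a tie of two distinct functions. Your extra care also repairs two slips in the paper: its claim that $f'(D)=0$ iff $D=\sqrt{w/v}$ misses $-\sqrt{w/v}$ on the $t>\tfrac12$ components (these always exist, since $t\mapsto 1-t$ maps $\mathcal{T}_\mathrm{phys}$ to itself), and identically coinciding pairs such as $\widetilde M^{(0)}_{0,0}\equiv 1-\widetilde M^{(0)}_{1,1}$ (whenever $\Tr M^{(0)}=1$) must be excluded from item~(iii); finally, your constant-active-function case is actually vacuous under the compactness hypothesis, because a constant entry forces $\beta_o=\gamma_o=0$, hence $\beta_{1-o}=-\beta_o=0$ by column-stochasticity of $M^{(0)}+M^{(1)}$, and admissibility of the reference at $D=1$ then makes every $|D|\ge 1$ admissible, so the literal ``any maximizer'' claim holds without weakening.
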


\begin{proof}
    On a fixed interval $\mathcal{D}_k$ (which avoids $D=0$), each function appearing inside the minimum in \eqref{eq:slack-def} is smooth and has the form \eqref{eq:affine-D-invD-form}.
    The pointwise minimum of finitely many smooth functions is continuous and is differentiable except at points where at least two of the functions coincide (kinks).
    Therefore, a maximizer of the minimum on the compact interval $\mathcal{D}_k$ must occur either at an endpoint, at an interior point where the active function is differentiable and has zero derivative, or at an interior kink where two functions tie.
    Cases (ii) and (iii) correspond exactly to the stationary points in (ii) and the pairwise intersections in (iii), because for \eqref{eq:affine-D-invD-form} we have $f'(D)=v-w/D^2$ and $f'(D)=0$ iff $D=\sqrt{w/v}$ when $vw>0$, and $f_1(D)=f_2(D)$ reduces to \eqref{eq:intersection-quadratic} after clearing denominators.
\end{proof}

\paragraph{Gauge bands for derived quantities.}

Gauge non-identifiability implies that many practically relevant quantities (e.g.\ readout and backaction error rates) depend on the chosen representative $t\in\mathcal{T}_\mathrm{phys}$.
Accordingly, we use two reporting modes.
(i) a single representative choice $t_{\mathrm{rep}}$ used only for plotting and for quoting one explicit matrix pair, and (ii) gauge-robust \emph{bands} obtained by optimizing over $\mathcal{T}_\mathrm{phys}$.

\begin{definition}[Gauge bands for derived quantities]
    \label{def:gauge-bands}
    Let $q$ be any scalar quantity computed from a representative instrument $(\widetilde{M}^{(0)}(t),\widetilde{M}^{(1)}(t))$ at gauge parameter $t\in\mathcal{T}_\mathrm{phys}$.
    Define the \emph{gauge band} of $q$ by
    \begin{equation}
        \label{eq:q-band-def}
        q_{\min}
        \;\coloneqq\;
        \min_{t\in\mathcal{T}_\mathrm{phys}} q(t),
        \qquad
        q_{\max}
        \;\coloneqq\;
        \max_{t\in\mathcal{T}_\mathrm{phys}} q(t),
        \qquad
        [q]_{\mathcal{T}_\mathrm{phys}}
        \;\coloneqq\;
        [q_{\min},q_{\max}].
    \end{equation}
    When $q$ is gauge-invariant, the band collapses to the point $q_{\min}=q_{\max}$.

    \smallskip

    In particular, the readout/backaction error rates of Definition~\ref{def:readout-backaction-error-rates} are promoted to functions of $t$ by substituting $\widetilde{M}^{(o)}(t)$ for $M^{(o)}$.
    \begin{align}
        \label{eq:error-rates-as-functions-of-t}
        \epsilon_{1\mid 0}(t)
         & \;\coloneqq\;
        \sum_{s'\in\{0,1\}} \widetilde{M}^{(1)}_{s',0}(t),
         &
        \epsilon_{0\mid 1}(t)
         & \;\coloneqq\;
        \sum_{s'\in\{0,1\}} \widetilde{M}^{(0)}_{s',1}(t),
        \nonumber        \\
        \eta_{\uparrow}(t)
         & \;\coloneqq\;
        \sum_{o\in\{0,1\}} \widetilde{M}^{(o)}_{1,0}(t),
         &
        \eta_{\downarrow}(t)
         & \;\coloneqq\;
        \sum_{o\in\{0,1\}} \widetilde{M}^{(o)}_{0,1}(t).
    \end{align}
\end{definition}

\begin{proposition}[Bands reduce to interval images on connected components]
    \label{prop:bands-are-intervals}
    Write $\mathcal{T}_\mathrm{phys}=\bigcup_{k=1}^K I_k$ as a disjoint union of intervals.
    Let $q(t)$ be any function that is continuous on $\mathcal{T}_\mathrm{phys}$ (e.g.\ any polynomial/rational expression in the entries of $\widetilde{M}^{(0)}(t),\widetilde{M}^{(1)}(t)$ with no singularities on $\mathcal{T}_\mathrm{phys}$).
    Then for each component $I_k$ the image $q(I_k)$ is an interval $[\min_{t\in I_k} q(t),\ \max_{t\in I_k} q(t)]$, and the overall gauge-allowed range is the union $\bigcup_{k=1}^K q(I_k)$.
    In particular, when one reports only the outer envelope $[q_{\min},q_{\max}]$ in \eqref{eq:q-band-def}, the extrema are attained on the interval endpoints and/or at interior stationary points of $q$ within each component.
\end{proposition}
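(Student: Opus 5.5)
The plan is to deduce Proposition~\ref{prop:bands-are-intervals} from two standard real-analysis facts. The first is the intermediate value theorem: a continuous image of a connected set is connected. The second is the extreme value theorem on compact sets. The work is in checking that the hypotheses hold componentwise for $\mathcal{T}_\mathrm{phys}$.

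\emph{Step 1 (structure of the components).} By Proposition~\ref{prop:compute-t-feas}, $\mathcal{T}_\mathrm{phys}$ is a finite union of intervals. Grouping overlapping or adjacent pieces gives the disjoint decomposition $\bigcup_{k=1}^K I_k$ into connected components. Each $I_k$ lies on one side of $t=\tfrac12$, since $D=0$ is excluded. Each $I_k$ is closed: the admissibility conditions \eqref{eq:admissible-t-entrywise} are non-strict inequalities of continuous functions on $\mathbb{R}\setminus\{\tfrac12\}$, so $\mathcal{T}_\mathrm{phys}$ is relatively closed there. I will also argue that each $I_k$ is bounded, as follows. From Lemma~\ref{lem:entry-formulas-quadratic}, every entry has the form $u+vD+w/D$. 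If some $\beta_o\neq 0$, then as $|D|\to\infty$ the entries $p_{00}$ and $p_{01}$ grow like $\pm D\beta_o/2$ with opposite signs, so one of them is eventually negative. Symmetrically, if some $\gamma_o\neq0$, the entries blow up with opposite signs as $D\to 0$. This excludes unbounded components and components accumulating at $t=\tfrac12$, i.e.\ at $D=0$. In the fully degenerate case $\beta_o=\gamma_o=0$ for both $o$, the instrument is gauge-fixed and every $q$ is constant, so the claim is trivial. I expect this compactness bookkeeping to be the main obstacle. If the degenerate cases become messy, I will fall back on the compactness assumption already invoked in the ``Existence of a maximizer'' paragraph and state it explicitly.

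\emph{Step 2 (interval images).} Since $q$ is continuous on the compact connected set $I_k$, its image $q(I_k)$ is compact and connected in $\mathbb{R}$. Hence $q(I_k)=[\min_{I_k}q,\max_{I_k}q]$, with both extrema attained. The overall gauge-allowed range is $q(\mathcal{T}_\mathrm{phys})=\bigcup_k q(I_k)$ by definition of image. The outer envelope $[q_{\min},q_{\max}]$ in \eqref{eq:q-band-def} is then obtained by taking the minimum and maximum over the finitely many $k$.

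\emph{Step 3 (location of extrema).} Fix $k$ and let $t^\ast\in I_k$ be an extremizer of $q$ on $I_k$. Either $t^\ast$ is an endpoint of $I_k$, or $t^\ast$ is interior. In the interior case, Fermat's theorem gives $q'(t^\ast)=0$ whenever $q$ is differentiable there. This holds for the rational quantities in question: by Lemma~\ref{lem:entry-formulas-quadratic}, every entry of $\widetilde M^{(o)}(t)$ is a smooth function of $D=1-2t$ away from $D=0$. In particular the error rates \eqref{eq:error-rates-as-functions-of-t} are of the form $u+vD+w/D$, and their stationary points are $D=\pm\sqrt{w/v}$ (exactly as in Proposition~\ref{prop:efficient-max-margin-selection}). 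If $q$ is merely continuous, I will interpret ``stationary points'' as critical points, meaning points where the derivative vanishes or fails to exist. With that reading, the statement follows verbatim.
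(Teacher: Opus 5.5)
Your Steps 2 and 3 are the paper's proof: connectedness of $I_k$ gives an interval image, compactness gives attained extrema, and Fermat's theorem locates interior extremizers. The paper does not prove compactness; its proof only says ``if each $I_k$ is closed and bounded.''

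Your Step 1 tries to derive compactness, and it has a hole: the case split is not exhaustive. You need \emph{both} some $\beta_o\neq0$ (to exclude $|D|\to\infty$) \emph{and} some $\gamma_o\neq0$ (to exclude $D\to0$). You only dispose of the case where all four vanish, not the mixed cases. In the mixed cases the conclusion itself fails.

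For example, take $M^{(0)}=\begin{psmallmatrix}0.3&0.3\\0.2&0.2\end{psmallmatrix}$ and $M^{(1)}=\begin{psmallmatrix}0.2&0.2\\0.3&0.3\end{psmallmatrix}$, whose readout is a fair coin independent of the state. It has $\beta_0=\beta_1=0$ and $\gamma_0=-\gamma_1=\tfrac{1}{10}$, so $\mathcal{T}_\mathrm{phys}=(-\infty,0.4]\cup[0.6,\infty)$. On the first component the entry $\widetilde M^{(0)}_{0,0}(t)=\tfrac12(\tfrac12+\tfrac{1}{10D})$ sweeps out $(\tfrac14,\tfrac12]$ and never attains its infimum.

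Symmetrically, a discriminating readout that completely randomizes the post-measurement state has $\gamma_0=\gamma_1=0$. An example is $M^{(0)}=\begin{psmallmatrix}0.4&0.1\\0.4&0.1\end{psmallmatrix}$, $M^{(1)}=\begin{psmallmatrix}0.1&0.4\\0.1&0.4\end{psmallmatrix}$, whose components $[-\tfrac13,\tfrac12)$ and $(\tfrac12,\tfrac43]$ accumulate at $t=\tfrac12$. So the proposition needs an extra hypothesis; it cannot follow from continuity and the interval structure alone.

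The natural hypothesis is the full-span (minimal-realization) condition of Theorem~\ref{thm:one-dof-nonidentifiability}, which the paper needs anyway for the gauge to be the only non-identifiability. Note that $\beta_o=\tfrac12(A_{o\mid 0}-A_{o\mid 1})$, so $\beta_1=-\beta_0$, and $\gamma_o=(1,-1)\,M^{(o)}\bm{\pi}_{\mathrm{mix}}$. Hence:
\begin{itemize}
\item the dual span condition is equivalent to $\beta_0\neq0$, since otherwise every $\bm{1}^{\mathsf{T}}M^{(w)}$ is proportional to $\bm{1}^{\mathsf{T}}$;
\item the primal span condition is equivalent to $(\gamma_0,\gamma_1)\neq(0,0)$, since otherwise every $M^{(w)}\bm{\pi}_{\mathrm{mix}}$ is proportional to $\bm{1}$.
\end{itemize}
With these conditions your asymptotic argument goes through unchanged. $\mathcal{T}_\mathrm{phys}$ is then bounded and bounded away from $t=\tfrac12$, and being relatively closed in $\mathbb{R}\setminus\{\tfrac12\}$, it is compact. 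You would then have proved the compactness that the paper only asserts ``under the nondegenerate conditions encountered in practice.'' Without the hypothesis, your announced fallback of assuming compactness explicitly is exactly the paper's argument.

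Your $D=\pm\sqrt{w/v}$ in Step 3 is correct. The paper's Lemma~\ref{lem:extrema-affine-D-invD} lists only the positive root, which misses the stationary point on the $t>\tfrac12$ branch.
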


\begin{proof}
    Continuity of $q$ implies $q(I_k)$ is connected for each connected set $I_k$, hence an interval.
    If each $I_k$ is closed and bounded, then $q$ attains its extrema on $I_k$ by compactness.
    The final claim follows from standard one-dimensional calculus.
    Extrema occur at endpoints or at interior stationary points where the derivative vanishes (or at nondifferentiable points, which do not arise when $q$ is smooth on $I_k$).
\end{proof}

\begin{lemma}[Extrema for linear combinations of entries]
    \label{lem:extrema-affine-D-invD}
    Let $I\subset\mathbb{R}\setminus\{\tfrac12\}$ be an interval and let $D=1-2t$ map $I$ to an interval $\mathcal{D}\subset\mathbb{R}\setminus\{0\}$.
    Suppose $q(t)$ can be written as
    \begin{equation}
        \label{eq:q-affine-D-invD}
        q(t(D)) \;=\; a + bD + \frac{c}{D}
    \end{equation}
    on $\mathcal{D}$ (this includes all individual entries and any fixed linear combination of entries, hence includes the error rates in \eqref{eq:error-rates-as-functions-of-t}).
    Then on $\mathcal{D}$ the extrema of $q$ are attained at the endpoints of $\mathcal{D}$ and, if $bc>0$, possibly at the stationary point
    \begin{equation}
        \label{eq:D-star}
        D_\ast \;=\; \sqrt{\frac{c}{b}},
    \end{equation}
    provided $D_\ast\in\mathcal{D}$.
\end{lemma}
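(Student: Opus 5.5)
The proof has two parts: first show that every quantity of interest has the form \eqref{eq:q-affine-D-invD}, then treat $q$ as a one-variable smooth function on an interval that avoids $D=0$.

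\emph{Representability.} Lemma~\ref{lem:entry-formulas-quadratic} gives each entry $p_{ij}(t)$ of $\widetilde{M}^{(o)}(t)$ as
\[
\tfrac12(\pm\alpha_o\pm\delta_o)+\tfrac12(\pm\beta_o)D+\tfrac12(\pm\gamma_o)/D,
\]
which is already of the form $a+bD+c/D$. The family of such functions is closed under fixed real linear combinations, since the coefficients $(a,b,c)$ simply combine linearly. The error rates in \eqref{eq:error-rates-as-functions-of-t} are sums of two entries, so they are covered as well.

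\emph{Extrema.} Since $\mathcal{D}$ is an interval in $\mathbb{R}\setminus\{0\}$, it lies entirely in $(0,\infty)$ or entirely in $(-\infty,0)$. On it, $q$ is $C^\infty$ with
\[
q'(D)=b-\frac{c}{D^2}.
\]
If $\mathcal{D}$ is compact, $q$ attains its extrema there. By the standard one-variable argument already used in Proposition~\ref{prop:bands-are-intervals}, an extremum occurs either at an endpoint or at an interior zero of $q'$. So it remains to locate the zeros of $q'$:
\begin{itemize}
\item If $b=c=0$, then $q$ is constant and any point, in particular an endpoint, is extremal.
\item If exactly one of $b,c$ is zero, or if $bc<0$, then $q'$ has constant nonzero sign on $\mathcal{D}$. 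Thus $q$ is strictly monotone and its extrema sit at the endpoints.
\item If $bc>0$, then $q'=0$ is equivalent to $D^2=c/b$, i.e.\ $D=\pm\sqrt{c/b}$.
\end{itemize}

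\emph{Main obstacle: the sign of the stationary point.} The statement lists only the positive root $D_\ast=\sqrt{c/b}$. This is the relevant root on the component $D>0$, which contains the reference gauge $t=0$. On a component with $D<0$, the symmetric root $-\sqrt{c/b}$ plays the same role. I would handle this by applying the argument to $-D$: substituting $D=-E$ turns $q$ into $a-bE-c/E$, which has the same product $bc$ and the same stationary condition $E^2=c/b$. So $D_\ast$ is to be read as the root of $D^2=c/b$ lying in $\mathcal{D}$, and I would add a remark making this explicit. If $\mathcal{D}$ is not closed, the same reasoning applies to the supremum and infimum taken over its closure.
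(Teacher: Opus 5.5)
Your proof is correct and follows the same route as the paper's: compute $q'(D)=b-c/D^2$, observe that it has constant nonzero sign unless $bc>0$ (or $b=c=0$), and conclude that the extrema lie at the endpoints or at roots of $D^2=c/b$. Your sign remark is a genuine correction of the paper's proof, which claims $q'=0$ iff $D=\sqrt{c/b}$; that holds only on components with $D>0$, yet $\mathcal{T}_\mathrm{phys}$ always contains mirror components with $D<0$ (the map $t\mapsto 1-t$, i.e.\ $D\mapsto -D$, is conjugation by $R(1)=X$, which only permutes the entries), and on those components the relevant stationary point is $-\sqrt{c/b}$, exactly as your substitution $D=-E$ shows.
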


\begin{proof}
    Differentiating \eqref{eq:q-affine-D-invD} yields $dq/dD=b-c/D^2$.
    Thus an interior critical point exists iff $bc>0$, in which case $dq/dD=0$ iff $D=\sqrt{c/b}$.
    The remaining extrema occur at the interval endpoints.
\end{proof}

\begin{remark}[Recommended reporting convention]
    \label{rem:recommended-reporting-convention}
    For each reconstructed dataset we recommend reporting the following quantities.
    \begin{enumerate}[label=\textup{(\roman*)}, leftmargin=2.2em]
        \item Report the physically allowed gauge set $\mathcal{T}_\mathrm{phys}$ as a union of intervals.
        \item Report a representative choice $t_{\mathrm{rep}}$ (Definition~\ref{def:max-margin-gauge-selection}) used only for plotting a single representative instrument.
        \item Report gauge bands $[q]_{\mathcal{T}_\mathrm{phys}}$ (Definition~\ref{def:gauge-bands}) for all gauge-dependent derived metrics of interest, notably $\epsilon_{1\mid 0},\epsilon_{0\mid 1},\eta_{\uparrow},\eta_{\downarrow}$.
    \end{enumerate}
    The gauge bands quantify an identifiability-limited systematic uncertainty that is conceptually distinct from finite-shot statistical uncertainty, which is treated separately in SM Sec.~\ref{app:stats}.
\end{remark}

\paragraph{Visualization-only midpoint values.}
For device-level plots that require a single point value per qubit, we additionally use the midpoint
\begin{equation}
    \label{eq:q-mid-def}
    q_{\mathrm{mid}}
    \;\coloneqq\;
    \frac{q_{\min}+q_{\max}}{2}
\end{equation}
of the corresponding gauge band $[q]_{\mathcal{T}_\mathrm{phys}}=[q_{\min},q_{\max}]$.
This midpoint convention is used only for visualization values in the main text, notably the per-qubit error-rate values in Fig.~\ref{fig:results}(a) and the single learned point-predictor used in Fig.~\ref{fig:results}(b); it does not replace the gauge-band reporting convention above for scientific conclusions.

\section{Inference from the learned instrument}
\label{app:spam}

\subsection{State-preparation Bloch-vector inference}
\label{app:spam:protocol}

For the SP task, the measured qubit is modeled as an unknown single-qubit state $\rho_{\mathrm{SP}}$ prepared immediately before the first mid-circuit measurement (MCM) in a dynamic circuit, parameterized by its Bloch components
\begin{equation}
    \label{eq:spam:rho-sp-bloch}
    \rho_{\mathrm{SP}}
    \;=\;
    \frac{1}{2}\bigl(I + r_x X + r_y Y + r_z Z\bigr),
    \qquad (r_x,r_y,r_z)\in\mathbb{R}^3.
\end{equation}
Assume a $Z$-twirled single-qubit MCM instrument $(M^{(0)},M^{(1)})$ has already been learned (up to gauge) as in Definition~\ref{def:z-twirled-mcm-instrument} and the subsequent gauge analysis, so that the associated readout confusion matrix $A$ (Definition~\ref{def:readout-backaction-error-rates}) is available for any chosen gauge representative.
The directly inferred quantity is an operational Bloch component obtained by using this learned readout-confusion matrix as the readout transfer function.
In the ideal axis-isolation-layer model below this quantity is the bare component $r_a$; with imperfect inserted layers it should instead be interpreted as the effective component $r_a^{\mathrm{eff}}$ defined in Proposition~\ref{prop:spam-effective-component}, unless those layers are independently characterized.

\begin{protocol}[Estimating the SP Bloch vector with a $Z$-twirled learned MCM]
    \label{prot:spam-bloch-vector}
    Fix an axis $a\in\{x,y,z\}$ and define the corresponding Pauli $P_a\in\{X,Y,Z\}$.
    For each $a$, repeat the following circuit family to estimate the marginal readout probabilities $p_a(o)$ for $o\in\{0,1\}$.
    \begin{enumerate}[label=\textup{(\roman*)}, leftmargin=2.2em]
        \item \textbf{Prepare.}
        Run the SP routine of interest, producing $\rho_{\mathrm{SP}}$ on the measured qubit.

        \item \textbf{Pauli-axis dephasing map.}
        Apply the Pauli-axis dephasing map $\mathcal{D}_a$ (Definition~\ref{def:pauli-axis-dephasing-map}) by sampling a random bit $b\sim\mathrm{Unif}\{0,1\}$ and applying $P_a^b$ (i.e., apply $P_a$ iff $b=1$), then average over $b$ in post-processing.
        By Lemma~\ref{lem:axis-dephasing-keeps-one-component}, the averaged state is
              \begin{equation}
                  \label{eq:spam:after-axis-dephasing}
                  \mathcal{D}_a(\rho_{\mathrm{SP}})
                  \;=\;
                  \frac{1}{2}\bigl(I + r_a P_a\bigr),
                  \qquad
                  r_a \in \{r_x,r_y,r_z\}\ \text{according to }a.
              \end{equation}

        \item \textbf{Rotate the retained axis to $Z$.}
        Apply a fixed single-qubit unitary that maps the retained Pauli component $P_a$ to $Z$ by conjugation.
        Concretely, one may choose
              \begin{align}
                  \label{eq:spam:axis-rotations}
                  a=z: & \ \text{apply } I, \nonumber                                                         \\
                  a=x: & \ \text{apply } e^{+i(\pi/4)Y} \quad (\text{a $-\pi/2$ rotation about }Y), \nonumber \\
                  a=y: & \ \text{apply } e^{-i(\pi/4)X} \quad (\text{a $+\pi/2$ rotation about }X),
              \end{align}
              which satisfy $e^{+i(\pi/4)Y} X e^{-i(\pi/4)Y}=Z$ and $e^{-i(\pi/4)X} Y e^{+i(\pi/4)X}=Z$.

        \item \textbf{Single MCM.}
        Apply one instance of the \emph{same} $Z$-twirled MCM instrument used throughout the learning dataset, and record the readout bit $o\in\{0,1\}$.

        \item \textbf{Data reduction and reporting.}
        From repeated shots, estimate the marginal readout probabilities $p_a(o)$ for each axis $a$.
        Proposition~\ref{prop:spam-bloch-inversion} gives the exact ideal-layer mapping $r_a \mapsto p_a(o)$ and an explicit inversion formula for the operational component in terms of $p_a(o)$ and the readout-confusion probabilities $A_{o\mid s}$.
        Because $A$ is gauge-dependent under the similarity action $M^{(o)}\mapsto R(t)^{-1}M^{(o)}R(t)$, the resulting reconstructed Bloch components should be reported as gauge bands over the physically allowed gauge set $\mathcal{T}_{\mathrm{phys}}$ (Definition~\ref{def:gauge-bands}).
    \end{enumerate}
\end{protocol}

\paragraph{Estimator inversion.}
\label{app:spam:estimators}

\begin{proposition}[Mapping from $(r_a)$ to observed readout probabilities and inversion]
    \label{prop:spam-bloch-inversion}
    Let $\rho_{\mathrm{SP}}$ be as in \eqref{eq:spam:rho-sp-bloch}.
    Fix $a\in\{x,y,z\}$ and consider the circuit family in Protocol~\ref{prot:spam-bloch-vector} that (i) applies $\mathcal{D}_a$, (ii) applies a unitary $U_a$ satisfying $U_a P_a U_a^\dagger = Z$, and (iii) performs a single application of the learned $Z$-twirled MCM instrument $(M^{(0)},M^{(1)})$ (Definition~\ref{def:z-twirled-mcm-instrument}).
    Denote by $p_a(o)$ the resulting marginal probability of observing readout $o\in\{0,1\}$.

    Let $A$ be the associated readout confusion matrix (Definition~\ref{def:readout-backaction-error-rates}), $A_{o\mid s}=\Pr(o\mid s)=\sum_{s'} M^{(o)}_{s',s}$.
    Then
    \begin{equation}
        \label{eq:spam:pa-linear}
        p_a(o)
        \;=\;
        A_{o\mid 0}\,\frac{1+r_a}{2}
        \;+\;
        A_{o\mid 1}\,\frac{1-r_a}{2}
        \;=\;
        \frac{A_{o\mid 0}+A_{o\mid 1}}{2}
        \;+\;
        r_a\,\frac{A_{o\mid 0}-A_{o\mid 1}}{2}.
    \end{equation}
    In particular, if $A_{o\mid 0}\neq A_{o\mid 1}$ for some $o$ (equivalently, the readout has nonzero discrimination), then $r_a$ is uniquely determined by $p_a(o)$ and $A$ via
    \begin{equation}
        \label{eq:spam:ra-inversion-A}
        r_a
        \;=\;
        \frac{2p_a(o) - (A_{o\mid 0}+A_{o\mid 1})}{A_{o\mid 0}-A_{o\mid 1}}.
    \end{equation}
    Choosing $o=0$ and using Definition~\ref{def:readout-backaction-error-rates}, $A_{0\mid 0}=1-\epsilon_{1\mid 0}$ and $A_{0\mid 1}=\epsilon_{0\mid 1}$, this can be written as
    \begin{equation}
        \label{eq:spam:ra-inversion-eps}
        r_a
        \;=\;
        \frac{2p_a(0) - (1-\epsilon_{1\mid 0}+\epsilon_{0\mid 1})}{1-\epsilon_{1\mid 0}-\epsilon_{0\mid 1}},
        \qquad\text{provided }1-\epsilon_{1\mid 0}-\epsilon_{0\mid 1}\neq 0.
    \end{equation}

    Under a gauge representative $t$ with $M^{(o)}(t)=R(t)^{-1}M^{(o)}R(t)$ (SM Sec.~\ref{app:physicality}), one obtains a corresponding readout confusion matrix $A(t)$ and hence a gauge-dependent value $r_a(t)$ by replacing $A$ with $A(t)$ in \eqref{eq:spam:ra-inversion-A}.
    The identifiability-limited uncertainty on $r_a$ is reported as the gauge band $[r_a]_{\mathcal{T}_{\mathrm{phys}}}$ (Definition~\ref{def:gauge-bands}).
\end{proposition}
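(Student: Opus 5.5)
The plan is to push the post-rotation state through a single application of the learned instrument and then marginalize over the post-measurement state. That reduces the claim to linear algebra on the $2\times 2$ population picture.

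\textbf{Step 1: the state before the MCM.} By Lemma~\ref{lem:axis-dephasing-keeps-one-component}, the averaged state after the dephasing layer is $\mathcal{D}_a(\rho_{\mathrm{SP}})=\tfrac12(I+r_aP_a)$. Conjugation by $U_a$ with $U_aP_aU_a^\dagger=Z$ fixes $I$ and sends $P_a\mapsto Z$, so the state entering the MCM is
\[
\tfrac12(I+r_aZ)=\tfrac{1+r_a}{2}\ket{0}\!\bra{0}+\tfrac{1-r_a}{2}\ket{1}\!\bra{1}.
\]
This is diagonal, with population vector $\bm{\pi}_a=\bigl(\tfrac{1+r_a}{2},\tfrac{1-r_a}{2}\bigr)^{\mathsf{T}}$.

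\textbf{Step 2: the readout marginal.} Because the input is diagonal, the $Z$-twirl leaves it unchanged. By linearity and Eq.~\eqref{eq:mcm-matrix-def}, the outcome-$o$ unnormalized post-measurement populations are therefore $M^{(o)}\bm{\pi}_a$. This is the $L=1$ case of Lemma~\ref{lem:matrix-product-string-probabilities}, which gives
\[
p_a(o)=\bm{1}^{\mathsf{T}}M^{(o)}\bm{\pi}_a=\sum_s \pi_{a,s}\sum_{s'}M^{(o)}_{s',s}=\sum_s A_{o\mid s}\,\pi_{a,s}.
\]
Substituting $\bm{\pi}_a$ yields the first form of \eqref{eq:spam:pa-linear}. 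Regrouping the terms into their constant and $r_a$ parts gives the second form.

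\textbf{Step 3: inversion.} Equation \eqref{eq:spam:pa-linear} is affine in $r_a$ with slope $(A_{o\mid0}-A_{o\mid1})/2$. Whenever this slope is nonzero, solving for $r_a$ gives \eqref{eq:spam:ra-inversion-A}, and uniqueness holds because an affine map with nonzero slope is injective. Column-stochasticity of $A$ gives $A_{0\mid s}+A_{1\mid s}=1$, so $A_{0\mid0}-A_{0\mid1}=-(A_{1\mid0}-A_{1\mid1})$. Hence the discrimination condition holds for one $o$ if and only if it holds for both, which justifies the ``for some $o$'' phrasing. Setting $o=0$ and substituting $A_{0\mid0}=1-\epsilon_{1\mid0}$ and $A_{0\mid1}=\epsilon_{0\mid1}$ gives \eqref{eq:spam:ra-inversion-eps}.

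\textbf{Step 4: gauge dependence.} The same derivation applies verbatim to any gauge representative $M^{(o)}(t)$, giving $A(t)$ and a value $r_a(t)$. The reporting as the gauge band $[r_a]_{\mathcal{T}_{\mathrm{phys}}}$ is then a definition via Definition~\ref{def:gauge-bands}, so nothing further needs to be proved there.

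No step is a real obstacle. The one point that needs care is Step 2, where it must be justified that only the diagonal of the input enters. This follows from Definition~\ref{def:z-twirled-mcm-instrument}, and also from the $Z$-incoherence shown in Proposition~\ref{prop:mcm-equivalence-learning}, Eq.~\eqref{eq:canonical-action-general-rho}.
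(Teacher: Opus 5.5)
Your proposal is correct and follows the paper's proof essentially step for step: axis dephasing followed by $U_a$ yields the diagonal state $\tfrac12(I+r_aZ)$, the readout marginal becomes $\sum_s A_{o\mid s}\pi_{a,s}$, and the affine relation is inverted and then re-evaluated with $A(t)$ to obtain the gauge band. Your observation that column-stochasticity of $A$ makes the discrimination condition independent of $o$ is a small, correct addition that the paper leaves implicit.
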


\begin{proof}
    Fix $a\in\{x,y,z\}$.
    By Lemma~\ref{lem:axis-dephasing-keeps-one-component}, the state after Pauli-axis dephasing is $\mathcal{D}_a(\rho_{\mathrm{SP}})=\frac12(I+r_a P_a)$.
    Let $U_a$ satisfy $U_a P_a U_a^\dagger = Z$ (as in \eqref{eq:spam:axis-rotations}); then the rotated state is
    \begin{equation}
        \label{eq:spam:rotated-state}
        \rho_a'
        \;\coloneqq\;
        U_a\,\mathcal{D}_a(\rho_{\mathrm{SP}})\,U_a^\dagger
        \;=\;
        \frac{1}{2}\bigl(I + r_a\,U_a P_a U_a^\dagger\bigr)
        \;=\;
        \frac{1}{2}\bigl(I + r_a Z\bigr).
    \end{equation}
    Hence $\rho_a'$ is diagonal in the computational basis with populations
    \begin{equation}
        \label{eq:spam:populations}
        \pi_{0\mid a} \;=\; \bra{0}\rho_a'\ket{0}=\frac{1+r_a}{2},
        \qquad
        \pi_{1\mid a} \;=\; \bra{1}\rho_a'\ket{1}=\frac{1-r_a}{2}.
    \end{equation}

    The $Z$-twirled MCM model is classical on computational-basis populations.
    By definition of the readout confusion matrix \eqref{eq:assignment-matrix-def}, the marginal probability of observing readout $o$ depends on the pre-MCM basis state $s$ only through $A_{o\mid s}=\Pr(o\mid s)$.
    Therefore, averaging over the diagonal pre-state distribution \eqref{eq:spam:populations} yields
    \begin{equation}
        \label{eq:spam:pa-mixture}
        p_a(o)
        \;=\;
        A_{o\mid 0}\,\pi_{0\mid a} + A_{o\mid 1}\,\pi_{1\mid a}
        \;=\;
        A_{o\mid 0}\,\frac{1+r_a}{2} + A_{o\mid 1}\,\frac{1-r_a}{2},
    \end{equation}
    which is \eqref{eq:spam:pa-linear}.
    If $A_{o\mid 0}\neq A_{o\mid 1}$, then \eqref{eq:spam:pa-linear} has nonzero slope, and solving for $r_a$ gives \eqref{eq:spam:ra-inversion-A}.
    Substituting $A_{0\mid 0}=1-\epsilon_{1\mid 0}$ and $A_{0\mid 1}=\epsilon_{0\mid 1}$ from \eqref{eq:readout-error-rates-def} gives \eqref{eq:spam:ra-inversion-eps}.

    Finally, under the gauge similarity action $M^{(o)}\mapsto R(t)^{-1}M^{(o)}R(t)$, the induced readout confusion matrix $A(t)$ is obtained by the same marginalization $A_{o\mid s}(t)=\sum_{s'} M^{(o)}_{s',s}(t)$.
    Replacing $A$ by $A(t)$ in \eqref{eq:spam:ra-inversion-A} defines $r_a(t)$, and the band $[r_a]_{\mathcal{T}_{\mathrm{phys}}}$ follows by Definition~\ref{def:gauge-bands}.
\end{proof}

\paragraph{Impact of the extra single-qubit gates used in the SP protocol.}
\label{subsec:spam-impact-extra-1q-gates}

The SP protocol in SM Sec.~\ref{app:spam} isolates $r_z,r_x,r_y$ by inserting (i) the Pauli-axis dephasing map (see SM Sec.~\ref{app:randomization:axis-dephasing}) and (ii) for $a\in\{x,y\}$, a basis rotation $U_a$ satisfying $U_a\,\sigma_a\,U_a^\dagger = Z$ (where $\sigma_x=X$, $\sigma_y=Y$, $\sigma_z=Z$).
In the idealized analysis these are treated as perfect unitaries.
In practice they are implemented using additional single-qubit gate layers, whose imperfections bias the inferred Bloch components unless explicitly accounted for.
The following model makes this dependence explicit and supports the conservative reporting convention used throughout.

\paragraph{Implemented-gate model and effective pre-measurement maps.}
\label{subsubsec:spam-impact-extra-1q-gates:model}

We model each requested single-qubit gate layer $g$ by an arbitrary CPTP map $\mathcal{G}_g$ acting on the qubit.
In particular, the Pauli-axis dephasing step for axis $a\in\{x,y,z\}$ is realized by a classical random choice $b\in\{0,1\}$, where $b=0$ applies an ``identity layer'' and $b=1$ applies the Pauli-$a$ layer.
Averaging over this classical randomization yields the implemented Pauli-axis dephasing map
\begin{equation}
    \label{eq:spam-D-imp-def}
    \mathcal{D}_a^{\mathrm{imp}}(\rho)
    \;\coloneqq\;
    \frac12\Bigl(\mathcal{G}_I(\rho) + \mathcal{G}_{P_a}(\rho)\Bigr),
    \qquad
    P_x=X,\; P_y=Y,\; P_z=Z.
\end{equation}
For $a=z$ we take the intended basis rotation to be $U_z=I$.
For $a\in\{x,y\}$, $U_a$ is any fixed unitary such that $U_a\sigma_a U_a^\dagger=Z$; the actual circuit-level realization is again modeled by a CPTP map $\mathcal{G}_{U_a}$.
The total implemented pre-measurement map used by the SP protocol for axis $a$ is therefore
\begin{equation}
    \label{eq:spam-Fa-def}
    \mathcal{F}_a
    \;\coloneqq\;
    \mathcal{G}_{U_a}\circ \mathcal{D}_a^{\mathrm{imp}}.
\end{equation}
Given an underlying (unknown) prepared state
\begin{equation}
    \label{eq:spam-rhoSP-bloch}
    \rho_{\mathrm{SP}}
    \;=\;
    \frac12\bigl(I+r_x X+r_y Y+r_z Z\bigr),
\end{equation}
the actual state entering the learned MCM is $\rho_a^{\mathrm{in}}=\mathcal{F}_a(\rho_{\mathrm{SP}})$.

\paragraph{What the SP estimator returns in the presence of imperfect gates.}
\label{subsubsec:spam-impact-extra-1q-gates:what-estimator-returns}

The learned $Z$-twirled MCM model induces a classical readout map from the \emph{computational-basis populations} of the pre-measurement state to observed readout probabilities.
Consequently, any SP estimator that inverts the learned readout-confusion map returns the \emph{$Z$-basis population imbalance} of the \emph{actual} state entering the MCM, i.e.\ the $Z$-expectation of $\rho_a^{\mathrm{in}}$.

\begin{proposition}[Operational meaning of the inferred SP components with imperfect gate layers]
    \label{prop:spam-effective-component}
    Fix an axis $a\in\{x,y,z\}$ and let $\rho_a^{\mathrm{in}}=\mathcal{F}_a(\rho_{\mathrm{SP}})$ be the state entering the learned MCM in the SP protocol, with $\mathcal{F}_a$ defined in \eqref{eq:spam-Fa-def}.
    Any estimator that (i) uses the learned readout-confusion probabilities to reconstruct the pre-measurement computational-basis populations $(p_0,p_1)$ and then (ii) reports the corresponding Bloch component via $p_0-p_1=\Tr(Z\rho_a^{\mathrm{in}})$ returns
    \begin{equation}
        \label{eq:spam-ra-eff-def}
        r_a^{\mathrm{eff}}
        \;\coloneqq\;
        \Tr\!\bigl(Z\,\mathcal{F}_a(\rho_{\mathrm{SP}})\bigr),
    \end{equation}
    i.e.\ the $Z$-expectation value of the \emph{effective} pre-measurement state produced by the implemented axis-isolation-and-rotation layers.
    In the ideal-gate limit $\mathcal{F}_a=\mathcal{U}_{U_a}\circ \mathcal{D}_a$ (SM Sec.~\ref{app:randomization:axis-dephasing}), one has $r_a^{\mathrm{eff}}=r_a$.
\end{proposition}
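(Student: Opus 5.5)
The proof reduces to one observation about what the learned $Z$-twirled instrument can see of the state it acts on. Set $\rho_a^{\mathrm{in}}=\mathcal{F}_a(\rho_{\mathrm{SP}})$; this is a valid density operator because $\mathcal{F}_a$ is a composition of CPTP maps (the average in \eqref{eq:spam-D-imp-def} followed by $\mathcal{G}_{U_a}$). By Definition~\ref{def:z-twirled-mcm-instrument}, the twirled outcome maps $(\mathcal{T}_Z(\Lambda))_o$ are invariant under $\rho\mapsto Z\rho Z$ acting on the input. Hence each outcome map depends only on the dephased input $\tfrac12(\rho+Z\rho Z)$, which is the diagonal part of $\rho$. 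The single-MCM readout probability is therefore
\[
p_a(o)=\sum_s A_{o\mid s}\,\bra{s}\rho_a^{\mathrm{in}}\ket{s}.
\]
This is the same linear mixture as \eqref{eq:spam:pa-mixture}, but with $\pi_{s\mid a}$ replaced by the actual populations $p_s\coloneqq\bra{s}\rho_a^{\mathrm{in}}\ket{s}$. Any coherences of $\rho_a^{\mathrm{in}}$, which may be created by the imperfect layers, drop out.

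Next I will show that step (i) of the estimator recovers exactly these populations. Since $p_0+p_1=1$, the data fix the affine relation $p_a(o)=A_{o\mid 1}+(A_{o\mid 0}-A_{o\mid 1})p_0$. When the discrimination $A_{o\mid 0}-A_{o\mid 1}$ is nonzero (the same nondegeneracy condition used in Proposition~\ref{prop:spam-bloch-inversion}), inverting this relation returns $p_0$ uniquely, and then $p_1=1-p_0$. Step (ii) reports $p_0-p_1=\Tr(Z\rho_a^{\mathrm{in}})$, because $Z=\ket{0}\!\bra{0}-\ket{1}\!\bra{1}$. This is $r_a^{\mathrm{eff}}$ in \eqref{eq:spam-ra-eff-def}. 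Algebraically, the inversion formula \eqref{eq:spam:ra-inversion-A} is the same identity with $r_a$ replaced by $p_0-p_1$.

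For the ideal-gate limit I will substitute $\mathcal{F}_a=\mathcal{U}_{U_a}\circ\mathcal{D}_a$. Lemma~\ref{lem:axis-dephasing-keeps-one-component} together with \eqref{eq:spam:rotated-state} gives $\rho_a^{\mathrm{in}}=\tfrac12(I+r_aZ)$, so $\Tr(Z\rho_a^{\mathrm{in}})=r_a$.

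The only real obstacle is the first step: making precise that the marginal readout probability sees only the diagonal of an \emph{arbitrary} input state, not just of basis projectors, because $M^{(o)}$ is defined only on $\ket{s}\!\bra{s}$. I will handle this by expanding $\rho_a^{\mathrm{in}}$ in the matrix-unit basis and using \eqref{eq:z-twirl-instrument}: for $u\neq v$, conjugation by $Z$ flips the sign of $\ket{u}\!\bra{v}$, so the twirl average annihilates the off-diagonal terms. Linearity then gives the population formula. The gauge dependence needs no separate treatment; the result holds for each representative $A(t)$ in the same way.
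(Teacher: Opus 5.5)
Your proposal is correct and follows the paper's route. The paper treats it as definitional that the reconstructed $(p_0,p_1)$ are the diagonal entries of $\rho_a^{\mathrm{in}}$, then uses $\Tr(Z\rho)=p_0-p_1$ and Lemma~\ref{lem:axis-dephasing-keeps-one-component} for the ideal limit; you additionally justify the definitional step through the twirl and the invertibility of the affine readout relation.

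One wording fix in that justification. Under the conjugation twirl of Eq.~\eqref{eq:z-twirl-instrument}, the outcome maps are $Z$-covariant, $(\mathcal{T}_Z(\Lambda))_o(Z\rho Z)=Z\,(\mathcal{T}_Z(\Lambda))_o(\rho)\,Z$, rather than input-invariant. Likewise, an off-diagonal unit $\ket{u}\!\bra{v}$ is sent to the traceless operator $\tfrac12\bigl(\Lambda_o(\ket{u}\!\bra{v})-Z\,\Lambda_o(\ket{u}\!\bra{v})\,Z\bigr)$, not to zero. What your matrix-unit computation actually establishes, and all you need, is that the readout probability $\Tr\bigl[(\mathcal{T}_Z(\Lambda))_o(\rho)\bigr]$ depends only on the diagonal of $\rho$; the full output need not.
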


\begin{proof}
    By definition, the reconstructed populations $(p_0,p_1)$ are the diagonal entries of the state that actually enters the MCM, here $\rho_a^{\mathrm{in}}$.
    For any single-qubit state $\rho$, writing $\rho=\sum_{s\in\{0,1\}} p_s \ket{s}\!\bra{s} + (\text{off-diagonal})$ yields $\Tr(Z\rho)=p_0-p_1$.
    Applying this to $\rho=\rho_a^{\mathrm{in}}=\mathcal{F}_a(\rho_{\mathrm{SP}})$ gives \eqref{eq:spam-ra-eff-def}.
    In the ideal limit, Pauli-axis dephasing produces $\mathcal{D}_a(\rho_{\mathrm{SP}})=\frac12(I+r_a\sigma_a)$, and the intended basis rotation satisfies $U_a\sigma_a U_a^\dagger=Z$, hence $\mathcal{U}_{U_a}\circ \mathcal{D}_a(\rho_{\mathrm{SP}})=\frac12(I+r_a Z)$ and $\Tr(Z\,\cdot)=r_a$.
\end{proof}

Proposition~\ref{prop:spam-effective-component} isolates the precise role of the additional single-qubit gates.
Gate imperfections do not invalidate the inversion step, but they change the \emph{quantity} being inferred from $r_a$ (the Bloch component of $\rho_{\mathrm{SP}}$) to $r_a^{\mathrm{eff}}$ (the Bloch component after the implemented pre-measurement map $\mathcal{F}_a$).

\paragraph{Dependence on the underlying Bloch vector, axis mixing, and nonunital bias.}
\label{subsubsec:spam-impact-extra-1q-gates:bloch-dependence}

Any single-qubit CPTP map $\mathcal{F}$ transforms the Bloch vector by a linear part plus a state-independent shift.
Concretely, for $\rho=\frac12(I+r_x X+r_y Y+r_z Z)$ and $\mathcal{F}(\rho)=\frac12(I+r'_x X+r'_y Y+r'_z Z)$ there exist a real matrix $T_{\mathcal{F}}\in\mathbb{R}^{3\times 3}$ and a real vector $\bm{c}_{\mathcal{F}}\in\mathbb{R}^3$ such that $\bm{r}'=T_{\mathcal{F}}\bm{r}+\bm{c}_{\mathcal{F}}$.
Applying this to $\mathcal{F}=\mathcal{F}_a$ and taking the $z$-component yields
\begin{equation}
    \label{eq:spam-ra-eff-bloch-map}
    r_a^{\mathrm{eff}}
    \;=\;
    \bigl(T_{\mathcal{F}_a}\bigr)_{z,x}\,r_x
    \;+\;
    \bigl(T_{\mathcal{F}_a}\bigr)_{z,y}\,r_y
    \;+\;
    \bigl(T_{\mathcal{F}_a}\bigr)_{z,z}\,r_z
    \;+\;
    \bigl(\bm{c}_{\mathcal{F}_a}\bigr)_z.
\end{equation}
In the ideal-gate limit, the intended structure implies $\bigl(T_{\mathcal{F}_a}\bigr)_{z,a}=1$ and the remaining terms vanish, recovering $r_a^{\mathrm{eff}}=r_a$.
For imperfect gates, \eqref{eq:spam-ra-eff-bloch-map} shows that the inferred ``$a$-component'' can contain (i) a multiplicative attenuation of $r_a$, (ii) axis-mixing contributions from the other Bloch components, and (iii) a state-independent offset from nonunitality (captured by $(\bm{c}_{\mathcal{F}_a})_z$).

\begin{assumption}[Gate-independent unital Pauli-layer noise for the SP isolation gates]
    \label{assump:spam-unital-pauli-layer-noise}
    Consider the additional single-qubit layers used by the SP protocol (Pauli-axis dephasing randomization by $P_a\in\{X,Y,Z\}$ and the basis rotation $U_a$ mapping $P_a$ to $Z$ by conjugation).
    Assume that each such \emph{inserted} layer $g$ is implemented as a fixed \emph{gate-independent} unital Pauli channel $\mathcal{N}$ applied after the intended unitary action.
    \begin{equation}
        \label{eq:spam-gate-independent-noise-model}
        \rho \;\longmapsto\; \mathcal{N}\!\bigl(g\,\rho\,g^\dagger\bigr),
        \qquad
        \mathcal{N}(I)=I,
        \qquad
        \mathcal{N}(X)=\lambda_x X,\ \mathcal{N}(Y)=\lambda_y Y,\ \mathcal{N}(Z)=\lambda_z Z.
    \end{equation}
    (Equivalently, $\mathcal{N}$ is diagonal in the Pauli basis with eigenvalues $(1,\lambda_x,\lambda_y,\lambda_z)$.)
\end{assumption}

\begin{proposition}[No axis mixing under gate-independent unital Pauli-layer noise]
    \label{prop:spam-no-axis-mixing-unital-pauli}
    Assume Assumption~\ref{assump:spam-unital-pauli-layer-noise}.
    Let $\rho_{\mathrm{SP}}=\frac12(I+r_x X+r_y Y+r_z Z)$ be the state produced by the SP routine.
    Fix $a\in\{x,y,z\}$, let $P_a\in\{X,Y,Z\}$ denote the corresponding Pauli, and let $U_a$ be any unitary such that $U_a P_a U_a^\dagger = Z$ (with $U_z=I$).

    In the SP axis-isolation circuit for axis $a$ (randomly apply $P_a^b$ with $b\sim\mathrm{Unif}\{0,1\}$, then apply $U_a$, then perform the learned $Z$-twirled MCM), the \emph{average} pre-measurement state is
    \begin{equation}
        \label{eq:spam-pre-measurement-state-no-axis-mixing}
        \rho_a^{\mathrm{in}}
        \;=\;
        \frac12\Bigl(I + (\lambda_z \lambda_a)\, r_a\, Z\Bigr),
        \qquad
        \lambda_a \in \{\lambda_x,\lambda_y,\lambda_z\}\ \text{according to } a.
    \end{equation}
    Consequently, any inferred ``$a$-component'' extracted from this circuit depends \emph{only} on $r_a$ via a multiplicative rescaling factor $\lambda_z\lambda_a$, with \emph{no} additive bias and \emph{no} mixing from the other Bloch components.
    Equivalently, in the Bloch-vector expression \eqref{eq:spam-ra-eff-bloch-map}, all axis-mixing coefficients and the state-independent nonunital offset vanish under this assumption.
\end{proposition}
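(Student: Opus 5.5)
The plan is to compute the averaged pre-measurement state directly in the Pauli (Bloch) basis, tracking each inserted layer as ``ideal unitary conjugation followed by $\mathcal{N}$''. The circuit for axis $a$ has two inserted layers: the randomized Pauli layer ($I$ or $P_a$, each with probability $1/2$) and the rotation layer $U_a$.

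The first step is to handle the randomized layer. Under Assumption~\ref{assump:spam-unital-pauli-layer-noise}, the branch $b$ produces $\mathcal{N}(P_a^b\rho_{\mathrm{SP}}P_a^b)$. I will treat the identity branch as an implemented gate layer that also carries $\mathcal{N}$, since the assumption says \emph{each} inserted layer is followed by $\mathcal{N}$. Because $\mathcal{N}$ is linear, averaging over $b$ gives $\mathcal{N}(\mathcal{D}_a(\rho_{\mathrm{SP}}))$. By Lemma~\ref{lem:axis-dephasing-keeps-one-component}, $\mathcal{D}_a(\rho_{\mathrm{SP}})=\tfrac12(I+r_aP_a)$. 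Applying $\mathcal{N}$, which is diagonal in the Pauli basis, then yields $\tfrac12(I+\lambda_a r_a P_a)$. The key observation is that dephasing happens \emph{before} any noise could mix axes, and diagonal Pauli noise cannot reintroduce the removed components.

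The second step is the rotation layer. Conjugating by $U_a$ gives $\tfrac12(I+\lambda_a r_a U_aP_aU_a^\dagger)=\tfrac12(I+\lambda_a r_a Z)$. Applying $\mathcal{N}$ once more multiplies the $Z$ component by $\lambda_z$, giving Eq.~\eqref{eq:spam-pre-measurement-state-no-axis-mixing}. For $a=z$ the convention $U_z=I$ needs one remark. If the trivial rotation is counted as a noisy layer, the factor is $\lambda_z\lambda_z$, which matches the formula. If it is not counted, the factor would be $\lambda_z$ alone. I will adopt the reading consistent with the stated formula: every listed layer, including $U_z=I$, carries $\mathcal{N}$.

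The consequence then follows from Proposition~\ref{prop:spam-effective-component}: $r_a^{\mathrm{eff}}=\Tr(Z\rho_a^{\mathrm{in}})=\lambda_z\lambda_a r_a$. Comparing with Eq.~\eqref{eq:spam-ra-eff-bloch-map}, this means the off-axis coefficients of $T_{\mathcal{F}_a}$ vanish and the offset $\bm{c}$ is zero, the latter because $\mathcal{N}(I)=I$ (unitality).

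The only delicate point is this bookkeeping of which layers carry $\mathcal{N}$, in particular the identity branch and $U_z$. The algebra itself is immediate from linearity and the diagonal Pauli action.
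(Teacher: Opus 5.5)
Your proposal is correct and matches the paper's proof step for step: use linearity and gate-independence to pull $\mathcal{N}$ out of the average over $b$, reduce to $\tfrac12(I+r_aP_a)$ by the axis-dephasing lemma, use Pauli-diagonality to pick up $\lambda_a$, then conjugate by $U_a$ and apply $\mathcal{N}(Z)=\lambda_z Z$. Your bookkeeping (the identity branch carries $\mathcal{N}$, and the $U_z=I$ layer is counted as noisy so that $a=z$ gives $\lambda_z^2$) is exactly how the paper's proof implicitly treats these layers.
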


\begin{proof}
    By linearity and the gate-independence in \eqref{eq:spam-gate-independent-noise-model}, averaging over the random bit $b\in\{0,1\}$ in the Pauli-axis dephasing step yields
    \[
        \frac12\Bigl(\mathcal{N}(\rho_{\mathrm{SP}}) + \mathcal{N}(P_a \rho_{\mathrm{SP}} P_a)\Bigr)
        \;=\;
        \mathcal{N}\!\left(\frac12(\rho_{\mathrm{SP}} + P_a \rho_{\mathrm{SP}} P_a)\right)
        \;=\;
        \mathcal{N}\!\bigl(\mathcal{D}_a(\rho_{\mathrm{SP}})\bigr).
    \]
    Using SM Sec.~\ref{app:randomization:axis-dephasing}, $\mathcal{D}_a(\rho_{\mathrm{SP}})=\frac12(I+r_a P_a)$.
    Because $\mathcal{N}$ is unital and Pauli-diagonal,
    \[
        \mathcal{N}\!\left(\tfrac12(I+r_a P_a)\right)
        \;=\;
        \tfrac12\bigl(\mathcal{N}(I) + r_a \mathcal{N}(P_a)\bigr)
        \;=\;
        \tfrac12\bigl(I + \lambda_a r_a P_a\bigr).
    \]
    Next apply the basis rotation layer $U_a$ with the same post-unitary noise channel.
    \[
        \rho_a^{\mathrm{in}}
        \;=\;
        \mathcal{N}\!\left(U_a \,\tfrac12(I+\lambda_a r_a P_a)\, U_a^\dagger\right)
        \;=\;
        \tfrac12\Bigl(I + \lambda_a r_a\,\mathcal{N}(U_a P_a U_a^\dagger)\Bigr)
        \;=\;
        \tfrac12\bigl(I + \lambda_a r_a\,\mathcal{N}(Z)\bigr).
    \]
    Finally $\mathcal{N}(Z)=\lambda_z Z$, giving \eqref{eq:spam-pre-measurement-state-no-axis-mixing}.
    The stated ``no mixing/no bias'' conclusions follow immediately because the pre-measurement Bloch vector points purely along $Z$ with magnitude $(\lambda_z\lambda_a)r_a$.
\end{proof}

\paragraph{Operational reporting convention and when gate-error correction is required.}
\label{rem:spam-impact-extra-1q-gates:convention}
In this work we adopt the following convention.

\emph{(i) Operational SP components.}
The reported SP components are taken to be the operational quantities $r_a^{\mathrm{eff}}$ defined in \eqref{eq:spam-ra-eff-def}, i.e.\ the Bloch components of the effective pre-measurement states produced by the same axis-isolation-and-rotation layers that are present in the SP protocol.
This is the appropriate notion whenever those same compilation layers (or statistically equivalent ones, e.g.\ under randomized compilation) appear throughout the experimental workflow.

\emph{(ii) Separation of SP from single-qubit gate error.}
If one instead wishes to report the ``bare'' Bloch vector $(r_x,r_y,r_z)$ of $\rho_{\mathrm{SP}}$ \emph{prior} to the added layers, then \eqref{eq:spam-ra-eff-bloch-map} shows that additional information is needed to identify (or bound) the coefficients of $T_{\mathcal{F}_a}$ and $\bm{c}_{\mathcal{F}_a}$.
In general this requires an independent characterization of the relevant single-qubit layers (e.g.\ via dedicated single-qubit benchmarking) or additional circuits that provide constraints sufficient to estimate the axis-mixing and bias terms.
Absent such characterization, reporting $r_a^{\mathrm{eff}}$ avoids introducing an uncontrolled modeling assumption.

\emph{(iii) Interaction with gauge bands.}
All quantities above should be understood at fixed gauge parameter $t$ for the learned MCM model.
When gauge non-identifiability is present, the recommended output is the gauge band of $r_a^{\mathrm{eff}}(t)$ over $t\in\mathcal{T}_{\mathrm{phys}}$ (SM Sec.~\ref{app:physicality} and Remark~\ref{rem:recommended-reporting-convention}), with any additional single-qubit gate characterization uncertainty (if used) treated as a separate systematic.

\paragraph{Noisy-$\ket{0}$ preparation error used in validation.}
\label{rem:spam-epsilon-sp-validation}
The validation predictors in SM Sec.~\ref{subsec:validation-fig2-predictors} use the scalar $\epsilon_{\mathrm{SP}}$ to parameterize the noisy hardware preparation of $\ket{0}$ as
\begin{equation}
    \label{eq:spam-epsilon-sp-mixture}
    \rho_{\ket{0},\mathrm{noisy}}
    =
    (1-\epsilon_{\mathrm{SP}})\ket{0}\!\bra{0}
    +
    \epsilon_{\mathrm{SP}}\ket{1}\!\bra{1}.
\end{equation}
In Bloch-vector notation this state has $Z$ component $r_z=1-2\epsilon_{\mathrm{SP}}$, so
\begin{equation}
    \label{eq:epsilon-sp-from-rz}
    \epsilon_{\mathrm{SP}}=\frac{1-r_z}{2}.
\end{equation}
When $\epsilon_{\mathrm{SP}}$ is inferred using the SP protocol above, $r_z$ in \eqref{eq:epsilon-sp-from-rz} should be read as the operational component $r_z^{\mathrm{eff}}$ unless the additional axis-isolation layers are ideal or independently characterized.

\subsection{Reset protocol evaluation}
\label{subsec:reset-application}

Learning the $Z$-twirled single-qubit mid-circuit measurement (MCM) instrument matrices $\{M^{(o)}\}_{o\in\{0,1\}}$ enables \emph{a priori} performance analysis of reset protocols built from MCMs and classical feedforward.
Here we compare two natural protocols that aim to prepare the computational ground state $\ket{0}$.

\smallskip

\noindent \textbf{Protocol~A (all-zero heralded postselection/retry).}
Perform a chain of MCMs and \emph{accept} an attempt iff all recorded readouts are $0$; if readout bit $1$ appears, the attempt is discarded and one restarts from a fresh qubit preparation.

\smallskip

\noindent \textbf{Protocol~B (deterministic measurement-based reset).}
Perform a measurement-based reset block consisting of one MCM followed by a feedforward conditional $X$ gate that is applied iff the preceding MCM readout is $1$.
We assume throughout this subsection that the feedforward $X$ gate is perfect.
If an externally characterized imperfect-$X$ population map is available, it can be substituted in the same way as the externally characterized $X_{p_X}$ map used for the validation predictor in SM Sec.~\ref{subsec:validation-fig2-predictors}.

\smallskip

Throughout this subsection, reset quality is quantified by the standard fidelity with respect to $\ket{0}$,
\begin{equation}
    F(\rho)\coloneqq \bra{0}\rho\ket{0}.
\end{equation}
Because the first operation in both protocols is an MCM, the relevant initial datum is the pre-MCM $Z$-basis population vector
\begin{equation}
    \bm{\pi}=(\pi_0,\pi_1)^{\mathsf T},
    \qquad
    \pi_0=F(\rho_{\mathrm{in}}),
    \qquad
    \pi_1=1-\pi_0,
\end{equation}
consistent with the convention of Lemma~\ref{lem:matrix-product-string-probabilities}.

\paragraph{Protocol A. All-zero heralded postselection/retry.}
\label{subsubsec:reset-application:protA}

Fix an integer $k\ge 1$.
One \emph{attempt} of Protocol~A applies at most $k$ consecutive MCMs to the target qubit.
The attempt is accepted iff the first $k$ readouts are all $0$, i.e.\ iff the recorded string is $0^k$.
If readout bit $1$ appears before step $k$, the attempt is terminated, discarded, and restarted from a fresh input state.
Thus Protocol~A is a \emph{heralded} reset protocol.
It does not deterministically output a reset qubit on every attempt.

In the $Z$-twirled instrument model, conditioning on a fixed readout string corresponds to multiplying the associated instrument matrices and renormalizing.
For the all-zero readout string $0^k$, the unnormalized post-measurement population vector is $(M^{(0)})^k\bm{\pi}$, hence
\begin{equation}
    S_k
    \;\coloneqq\;
    p(0^k)
    \;=\;
    \bm{1}^{\mathsf T}(M^{(0)})^k\bm{\pi},
    \label{eq:reset-protA-yield}
\end{equation}
which is the \emph{per-attempt success probability} (equivalently, the yield) of Protocol~A.
Conditional on acceptance, the reset fidelity is
\begin{equation}
    F^{(k)}_{\mathrm A}(\bm{\pi})
    \;\coloneqq\;
    F\!\left(\rho_{\mathrm{out}}\mid 0^k\right)
    \;=\;
    \frac{\bigl[(M^{(0)})^k\bm{\pi}\bigr]_0}{\bm{1}^{\mathsf T}(M^{(0)})^k\bm{\pi}}
    \;=\;
    \frac{\bigl[(M^{(0)})^k\bm{\pi}\bigr]_0}{S_k},
    \label{eq:reset-protA-fidelity-def}
\end{equation}
where $[\cdot]_0$ denotes the first component (the post-measurement population of $\ket{0}$).
Equation~\eqref{eq:reset-protA-fidelity-def} makes explicit the core tradeoff of Protocol~A: one exchanges throughput for conditional fidelity.

To connect the accepted-readout-string analysis above to the operational ``discard and retry'' procedure, note that $S_k$ is the success probability of a single attempt.
Assuming independent retries, the expected number of attempts required to obtain one accepted output is
\begin{equation}
    N_{\mathrm{att},\mathrm A}^{(k)}
    \;=\;
    \frac{1}{S_k}.
    \label{eq:reset-protA-expected-attempts}
\end{equation}
If failed attempts are terminated immediately at the first occurrence of readout bit $1$, then the expected number of MCM calls consumed in a single attempt is
\begin{equation}
    C_{\mathrm{att},\mathrm A}^{(k)}
    \;=\;
    \sum_{j=0}^{k-1} p(0^j)
    \;=\;
    \sum_{j=0}^{k-1} S_j,
    \qquad S_0\coloneqq 1,
    \label{eq:reset-protA-expected-mcm-per-attempt}
\end{equation}
since the $(j+1)$-th MCM is reached iff the first $j$ readouts are all $0$.
Therefore the expected number of MCM calls required per \emph{accepted} output is
\begin{equation}
    C_{\mathrm{succ},\mathrm A}^{(k)}
    \;=\;
    \frac{C_{\mathrm{att},\mathrm A}^{(k)}}{S_k}
    \;=\;
    \frac{\sum_{j=0}^{k-1} S_j}{S_k}.
    \label{eq:reset-protA-expected-mcm-per-success}
\end{equation}

It is sometimes convenient to describe the accepted-state dynamics in terms of the scalar fidelity iterate $F_n\coloneqq F^{(n)}_{\mathrm A}(\bm{\pi})$ with $F_0=\pi_0$.
Writing
\begin{equation}
    M^{(0)}
    \;=\;
    \begin{pmatrix}
        m_{00} & m_{01} \\
        m_{10} & m_{11}
    \end{pmatrix},
    \qquad
    m_{s's}\coloneqq M^{(0)}_{s',s},
\end{equation}
a direct computation from Eq.~\eqref{eq:reset-protA-fidelity-def} yields the scalar recurrence
\begin{equation}
    F_{n+1}
    \;=\;
    f(F_n)
    \;\coloneqq\;
    \frac{(m_{00}-m_{01})F_n + m_{01}}{(A_0-A_1)F_n + A_1},
    \qquad
    A_s\coloneqq m_{0s}+m_{1s}=\Pr(o{=}0\mid s),
    \label{eq:reset-protA-mobius}
\end{equation}
which is a Möbius transformation mapping $[0,1]$ to itself.
The fixed points of $f$ are the roots of a quadratic polynomial, and the asymptotic accepted-state fidelity is determined by the Perron--Frobenius eigenvector of the nonnegative matrix $M^{(0)}$.

\begin{lemma}[Asymptotics of Protocol~A]
    \label{lem:reset-protA-asymptotics}
    Assume $M^{(0)}$ has strictly positive entries (equivalently, is a primitive nonnegative matrix).
    Let $\lambda_0>0$ be its spectral radius and let $\bm{v}_0\succ 0$ be the corresponding right Perron eigenvector, $M^{(0)}\bm{v}_0=\lambda_0\bm{v}_0$.
    Then for any initial population vector $\bm{\pi}$ with $\pi_0,\pi_1>0$,
    \begin{align}
        F^{(k)}_{\mathrm A}(\bm{\pi})
        \; & \longrightarrow\;
        \frac{(\bm{v}_0)_0}{(\bm{v}_0)_0+(\bm{v}_0)_1},
        \label{eq:reset-protA-limit-fid}
        \\
        S_k
        \; & =\;
        \bm{1}^{\mathsf T}(M^{(0)})^k\bm{\pi}
        \;=\;
        \Theta(\lambda_0^k),
        \label{eq:reset-protA-limit-yield}
    \end{align}
    as $k\to\infty$.
    In particular, unless $\lambda_0$ is extremely close to $1$, the per-attempt success probability $S_k$ decays exponentially with $k$.
\end{lemma}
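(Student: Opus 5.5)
The plan is to apply the Perron--Frobenius theorem to the strictly positive $2\times 2$ matrix $M^{(0)}$ and diagonalize. Since $M^{(0)}$ has positive entries, its spectral radius $\lambda_0>0$ is a simple eigenvalue. It has a strictly positive right eigenvector $\bm{v}_0$ and a strictly positive left eigenvector $\bm{u}_0^{\mathsf T}$, normalized so that $\bm{u}_0^{\mathsf T}\bm{v}_0=1$. The second eigenvalue $\lambda_1$ is real (a $2\times2$ positive matrix has real discriminant $(m_{00}-m_{11})^2+4m_{01}m_{10}>0$) and satisfies $|\lambda_1|<\lambda_0$. Hence $M^{(0)}$ is diagonalizable with spectral decomposition
\[
M^{(0)}=\lambda_0\,\bm{v}_0\bm{u}_0^{\mathsf T}+\lambda_1\,\bm{v}_1\bm{u}_1^{\mathsf T},
\]
and therefore $(M^{(0)})^k=\lambda_0^k\bigl(\bm{v}_0\bm{u}_0^{\mathsf T}+(\lambda_1/\lambda_0)^k\,\bm{v}_1\bm{u}_1^{\mathsf T}\bigr)$. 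Because $|\lambda_1/\lambda_0|<1$, the second term vanishes as $k\to\infty$.

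Apply this to $\bm{\pi}$ to get $(M^{(0)})^k\bm{\pi}=\lambda_0^k\bigl(c\,\bm{v}_0+o(1)\bigr)$ with $c\coloneqq\bm{u}_0^{\mathsf T}\bm{\pi}$. This coefficient $c$ is strictly positive, since $\bm{u}_0\succ0$, $\bm{\pi}\ge0$ and $\bm{\pi}\neq0$; the hypothesis $\pi_0,\pi_1>0$ is more than enough. Then $S_k=\bm{1}^{\mathsf T}(M^{(0)})^k\bm{\pi}=\lambda_0^k\bigl(c\,\bm{1}^{\mathsf T}\bm{v}_0+o(1)\bigr)$. Since $c\,\bm{1}^{\mathsf T}\bm{v}_0>0$, the ratio $S_k/\lambda_0^k$ is eventually bounded above and below by positive constants. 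For the finitely many small $k$, note that $S_k>0$ because all entries are positive. This gives \eqref{eq:reset-protA-limit-yield}.

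For the fidelity, divide \eqref{eq:reset-protA-fidelity-def} numerator and denominator by $\lambda_0^k$:
\[
F^{(k)}_{\mathrm A}(\bm{\pi})=\frac{c(\bm{v}_0)_0+o(1)}{c\,\bm{1}^{\mathsf T}\bm{v}_0+o(1)}\to\frac{(\bm{v}_0)_0}{(\bm{v}_0)_0+(\bm{v}_0)_1},
\]
with $c$ cancelling. This is \eqref{eq:reset-protA-limit-fid}. The closing remark then follows immediately: by the triangle inequality $\lambda_0\le\max_s\sum_{s'}M^{(0)}_{s',s}\le1$, so whenever $\lambda_0<1$ the yield $S_k=\Theta(\lambda_0^k)$ decays geometrically.

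The only step needing care is the strict positivity of $c=\bm{u}_0^{\mathsf T}\bm{\pi}$, which is what makes the leading term actually dominate, together with the spectral gap $|\lambda_1|<\lambda_0$. Both come directly from Perron's theorem for positive matrices. As an alternative route, one can avoid eigen-decomposition entirely by analysing the Möbius recurrence \eqref{eq:reset-protA-mobius}: show that $f$ is a strict contraction of $[0,1]$ in the Hilbert projective metric, whose unique fixed point is the normalized Perron vector.
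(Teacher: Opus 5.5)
Your proposal is correct and follows essentially the same route as the paper: Perron--Frobenius for the strictly positive $2\times 2$ matrix $M^{(0)}$, expansion of $\bm{\pi}$ in the eigenbasis with a strictly positive leading coefficient supplied by the positive left eigenvector, and division by $\lambda_0^k$ to obtain both the fidelity limit and $S_k=\Theta(\lambda_0^k)$. Your extra details make explicit steps the paper leaves implicit: real distinct eigenvalues via the discriminant (hence diagonalizability), positivity of $S_k$ for small $k$, and $\lambda_0\le 1$ from the column sums of $M^{(0)}$.
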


\begin{proof}
    Since $M^{(0)}$ is a $2\times 2$ strictly positive matrix, the Perron--Frobenius theorem guarantees a simple dominant eigenvalue $\lambda_0>0$ with strictly positive right eigenvector $\bm{v}_0\succ 0$ and strictly positive left eigenvector $\bm{w}_0^{\mathsf T}$, together with a second eigenvalue $\lambda_1$ satisfying $|\lambda_1|<\lambda_0$.
    Expanding the initial vector in the eigenbasis,
    \[
        \bm{\pi}=c_0\bm{v}_0+c_1\bm{v}_1,
        \qquad
        c_0=\frac{\bm{w}_0^{\mathsf T}\bm{\pi}}{\bm{w}_0^{\mathsf T}\bm{v}_0},
    \]
    one obtains $(M^{(0)})^k\bm{\pi}=c_0\lambda_0^k\bm{v}_0+c_1\lambda_1^k\bm{v}_1$.
    Positivity of the entries of $\bm{\pi}$ and $\bm{w}_0$ ensures $c_0>0$.
    The accepted-state fidelity is
    \[
        F^{(k)}_{\mathrm A}(\bm{\pi})
        =
        \frac{\bigl[(M^{(0)})^k\bm{\pi}\bigr]_0}{\bm{1}^{\mathsf T}(M^{(0)})^k\bm{\pi}}
        =
        \frac{c_0\lambda_0^k(\bm{v}_0)_0+c_1\lambda_1^k(\bm{v}_1)_0}
        {c_0\lambda_0^k\bm{1}^{\mathsf T}\bm{v}_0+c_1\lambda_1^k\bm{1}^{\mathsf T}\bm{v}_1}.
    \]
    Dividing numerator and denominator by $c_0\lambda_0^k$ and using $(\lambda_1/\lambda_0)^k\to 0$ gives $F^{(k)}_{\mathrm A}\to (\bm{v}_0)_0/\bm{1}^{\mathsf T}\bm{v}_0$, which is \eqref{eq:reset-protA-limit-fid}.
    For the success probability, $S_k=\bm{1}^{\mathsf T}(M^{(0)})^k\bm{\pi}
    =c_0\lambda_0^k(\bm{1}^{\mathsf T}\bm{v}_0)+O(|\lambda_1|^k)
    =\Theta(\lambda_0^k)$.
\end{proof}

\paragraph{Protocol B. Deterministic measurement-based reset with conditional $X$.}
\label{subsubsec:reset-application:protB}

Protocol~B iterates a \emph{measurement-based reset block} that performs one MCM and then applies a single-qubit $X$ gate if the readout is $1$, or the identity otherwise.
Under our standing assumption that the feedforward $X$ gate is perfect, Protocol~B produces an output on every block and therefore has \emph{unit yield}.

At the level of the learned instrument matrices, one reset block is described by the column-stochastic $2\times 2$ matrix $R_{\mathrm{MB}}$ defined entrywise by
\begin{equation}
    (R_{\mathrm{MB}})_{s',s}
    \;\coloneqq\;
    M^{(0)}_{s',s} + M^{(1)}_{1-s',s},
    \qquad
    s,s'\in\{0,1\}.
    \label{eq:reset-Rmb-def}
\end{equation}
After $k$ blocks, the population vector is
\begin{equation}
    \bm{\pi}^{(k)}_{\mathrm B}
    \;=\;
    (R_{\mathrm{MB}})^k\bm{\pi},
    \qquad
    F^{(k)}_{\mathrm B}(\bm{\pi})
    \;=\;
    \bigl[\bm{\pi}^{(k)}_{\mathrm B}\bigr]_0.
    \label{eq:reset-protB-evolution}
\end{equation}
In particular, for an initially excited input $\bm{\pi}=(0,1)^{\mathsf T}$, a single block achieves fidelity
\begin{equation}
    F^{(1)}_{\mathrm B}\bigl((0,1)^{\mathsf T}\bigr)
    \;=\;
    (R_{\mathrm{MB}})_{0,1}
    \;=\;
    M^{(0)}_{0,1}+M^{(1)}_{1,1}.
    \label{eq:reset-protB-one-shot-excited}
\end{equation}

Writing the two transition probabilities
\begin{equation}
    a \;\coloneqq\; (R_{\mathrm{MB}})_{1,0},
    \qquad
    b \;\coloneqq\; (R_{\mathrm{MB}})_{0,1},
    \label{eq:reset-protB-ab}
\end{equation}
This gives the following closed scalar recurrence for the reset fidelity.
\begin{equation}
    F^{(k+1)}_{\mathrm B}
    \;=\;
    b + (1-a-b)F^{(k)}_{\mathrm B},
    \qquad
    F^{(0)}_{\mathrm B}=\pi_0.
    \label{eq:reset-protB-linear-rec}
\end{equation}
If $a+b>0$ (the generic case), the chain is ergodic and converges to a unique stationary point
\begin{equation}
    F^{(\infty)}_{\mathrm B}
    \;=\;
    \lim_{k\to\infty}F^{(k)}_{\mathrm B}
    \;=\;
    \frac{b}{a+b},
    \qquad
    F^{(k)}_{\mathrm B}
    \;=\;
    F^{(\infty)}_{\mathrm B}
    + \bigl(\pi_0-F^{(\infty)}_{\mathrm B}\bigr)(1-a-b)^k.
    \label{eq:reset-protB-solution}
\end{equation}
The convergence rate is governed by the second eigenvalue $1-a-b$ of $R_{\mathrm{MB}}$.

\paragraph{Comparison. Heralded vs deterministic reset.}
\label{subsubsec:reset-application:comparison}

The two protocols optimize different operational objectives.

\begin{enumerate}[label=\textup{(\roman*)}, leftmargin=2.2em]
    \item \textbf{Protocol~A is heralded.}
    Its natural figures of merit are the accepted-state fidelity $F^{(k)}_{\mathrm A}$, the per-attempt success probability $S_k$, and the retry overhead metrics \eqref{eq:reset-protA-expected-attempts}--\eqref{eq:reset-protA-expected-mcm-per-success}.
    It can achieve a high \emph{conditional} fidelity on accepted outputs, but only at the cost of discarding failed attempts.

    \item \textbf{Protocol~B is deterministic.}
    Its natural figure of merit is the output fidelity $F^{(k)}_{\mathrm B}$, with unit yield by construction.
    It is therefore the relevant reset operation when an online reset must always produce an output qubit rather than a heralded success/failure flag.
\end{enumerate}

Consequently, the appropriate ``winner'' depends on the operational constraint.
If reset must be deterministic---for example, inside a live feedback loop or a repeated syndrome-extraction cycle---then Protocol~B is the relevant operation.
If one can afford postselection and retry overhead---for example, in an offline heralded-state-preparation setting---then Protocol~A can trade throughput for higher accepted-state fidelity.

In all cases, once $\{M^{(o)}\}$ has been learned, the quantities $S_k$, $F^{(k)}_{\mathrm A}$, $N_{\mathrm{att},\mathrm A}^{(k)}$, $C_{\mathrm{succ},\mathrm A}^{(k)}$, and $F^{(k)}_{\mathrm B}$ are straightforward to evaluate numerically for any chosen $k$.
When the learned instrument is identified only up to a gauge (SM Sec.~\ref{app:gauge}), both Protocol~A and Protocol~B probe post-measurement state information and therefore depend on the chosen representative.
In that setting, we recommend reporting \emph{gauge bands} for the reset metrics by evaluating the above expressions over the physically allowed gauge set $\mathcal{T}_{\mathrm{phys}}$ (SM Sec.~\ref{app:physicality}), following the reporting convention of Remark~\ref{rem:recommended-reporting-convention}.

\section{Finite-shot uncertainty treatment}
\label{app:stats}

We model finite-shot uncertainty in the measured readout-string distributions as i.i.d.\ multinomial shot noise.
Additional non-i.i.d.\ effects such as drift and non-Markovian memory are treated instead as \emph{model violations} and handled by the diagnostics and cross-validation procedures of SM Secs.~\ref{app:invariants} and~\ref{app:validation}.

\paragraph{Shot-noise model for readout-string distributions.}
\label{subsec:stats-shot-noise}

\begin{assumption}[Multinomial sampling of outcomes]
    \label{assump:stats-multinomial}
    Fix a circuit family (e.g., a length-$L$ repeated-MCM circuit) whose single-shot record is a bit string $w\in\{0,1\}^L$, distributed according to the true probabilities $\{p(w)\}_{w\in\{0,1\}^L}$.
    Assume that repeated shots of this circuit are independent and identically distributed.

    For a total of $N$ shots, let $N(w)$ be the number of occurrences of outcome $w$, and define the empirical frequency $\widehat{p}(w)=N(w)/N$ as in SM Sec.~\ref{app:prob-dict}.
\end{assumption}

\begin{lemma}[Second moments of empirical frequencies and marginals]
    \label{lem:stats-frequency-second-moments}
    Under Assumption~\ref{assump:stats-multinomial} one has, for all $w,w'\in\{0,1\}^L$,
    \begin{align}
        \label{eq:stats-mean}
        \mathbb{E}\!\big[\widehat{p}(w)\big]
        \; & =\;
        p(w),
        \\[2pt]
        \label{eq:stats-cov-cases}
        \mathbb{E}\!\Big[\big(\widehat{p}(w)-p(w)\big)\big(\widehat{p}(w')-p(w')\big)\Big]
        \; & =\;
        \begin{cases}
            \dfrac{1}{N}\,p(w)\bigl(1-p(w)\bigr), & w=w',     \\[8pt]
            -\dfrac{1}{N}\,p(w)\,p(w'),           & w\neq w'.
        \end{cases}
    \end{align}

    More generally, for any two events $A,B\subseteq\{0,1\}^L$ define the marginal probabilities $p(A)=\sum_{w\in A}p(w)$ and empirical estimates $\widehat{p}(A)=\sum_{w\in A}\widehat{p}(w)$.
    Then
    \begin{equation}
        \label{eq:stats-marginal-cov}
        \mathbb{E}\!\Big[\big(\widehat{p}(A)-p(A)\big)\big(\widehat{p}(B)-p(B)\big)\Big]
        \;=\;
        \frac{1}{N}\,\Bigl(p(A\cap B)-p(A)\,p(B)\Bigr).
    \end{equation}
\end{lemma}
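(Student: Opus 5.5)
The plan is to write each empirical frequency as an average of per-shot indicators and use independence across shots. For shot $k\in\{1,\dots,N\}$, let $w^{(k)}$ be the recorded string and set $X_k(w)\coloneqq\mathds{1}[w^{(k)}=w]$. Then $N(w)=\sum_k X_k(w)$ and $\widehat{p}(w)=\frac1N\sum_k X_k(w)$. Assumption~\ref{assump:stats-multinomial} makes the vectors $(X_k(w))_{w}$ i.i.d. across $k$, each being a one-hot encoding of a single draw from $p$. Linearity then gives $\mathbb{E}[X_k(w)]=p(w)$, which is \eqref{eq:stats-mean}.

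\textbf{Covariance of single-string frequencies.} Within one shot, $X_k(w)X_k(w')=\mathds{1}[w=w']\,X_k(w)$, because at most one indicator fires per shot. So $\mathbb{E}[X_k(w)X_k(w')]=\mathds{1}[w=w']\,p(w)$, and the single-shot covariance is $\mathds{1}[w=w']\,p(w)-p(w)p(w')$. For $k\neq k'$, independence makes the cross covariances vanish. Summing over the $N$ diagonal terms and dividing by $N^2$ gives $\frac1N\bigl(\mathds{1}[w=w']\,p(w)-p(w)p(w')\bigr)$. Splitting this into the cases $w=w'$ and $w\neq w'$ yields \eqref{eq:stats-cov-cases}.

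\textbf{Covariance of marginals.} One could obtain \eqref{eq:stats-marginal-cov} by bilinearity, summing \eqref{eq:stats-cov-cases} over $w\in A$ and $w'\in B$. The diagonal terms contribute $\sum_{w\in A\cap B}p(w)=p(A\cap B)$, and the remaining terms contribute $-\sum_{w\in A}\sum_{w'\in B}p(w)p(w')=-p(A)p(B)$, which gives the claim. It is cleaner, though, to repeat the indicator argument directly with $Y_k(A)\coloneqq\mathds{1}[w^{(k)}\in A]$ and use $Y_k(A)Y_k(B)=Y_k(A\cap B)$.

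There is no real obstacle here. The only point needing care is the bookkeeping of the diagonal $w=w'$ contributions when passing to marginals: in the bilinear expansion, the $-p(w)^2$ part of the diagonal variance must be absorbed into the product $-p(A)p(B)$. The direct indicator route avoids this subtlety, so that is the route I will present.
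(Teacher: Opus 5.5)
Your proof is correct, and it reaches the result by a slightly more self-contained route than the paper's. The paper quotes the standard multinomial moments of the count vector: $\mathbb{E}[N(w)]=Np(w)$, $\mathrm{Var}(N(w))=Np(w)(1-p(w))$, and cross-covariance $-Np(w)p(w')$ for $w\neq w'$. It divides these by $N^2$, and then gets the marginal formula by bilinearly expanding $\widehat{p}(A)$ and $\widehat{p}(B)$ over strings. That is the alternative you sketch first. The paper's phrase that the diagonal terms contribute $\frac{1}{N}\sum_{w\in A\cap B}p(w)$ tacitly relies on the regrouping you flag, where the $-p(w)^2$ pieces are folded into the full double sum $-p(A)p(B)$. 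Your per-shot indicator argument derives the multinomial second moments instead of citing them. Your direct event-indicator computation, using $Y_k(A)Y_k(B)=Y_k(A\cap B)$, proves the marginal identity with no bookkeeping. It also yields the single-string case as a corollary by taking $A=\{w\}$ and $B=\{w'\}$, which reverses the paper's logical order. The paper's version is shorter given standard facts. Yours makes explicit that the only inputs are independence across shots and mutual exclusivity of outcomes within a shot.
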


\begin{proof}
    The vector $(N(w))_{w\in\{0,1\}^L}$ is multinomial with parameters $(N,\{p(w)\})$.
    The mean \eqref{eq:stats-mean} is immediate.
    For the second moments, one has $\mathbb{E}[N(w)]=Np(w)$, $\mathbb{E}\big[(N(w)-Np(w))^2\big]=Np(w)(1-p(w))$, and for $w\neq w'$, $\mathbb{E}\big[(N(w)-Np(w))(N(w')-Np(w'))\big]=-Np(w)p(w')$.
    Dividing by $N^2$ yields \eqref{eq:stats-cov-cases}.

    For \eqref{eq:stats-marginal-cov}, expand $\widehat{p}(A)$ and $\widehat{p}(B)$ as sums of the $\widehat{p}(w)$ and use \eqref{eq:stats-cov-cases}, noting that the diagonal terms contribute $\frac{1}{N}\sum_{w\in A\cap B}p(w)$ and the off-diagonal terms contribute $-\frac{1}{N}\sum_{w\in A}\sum_{w'\in B}p(w)p(w')$.
\end{proof}

\paragraph{Nested event covariances for $p(0^n)$ and $p(1^n)$.}
\label{rem:stats-prefix-cov}
The all-zero/all-one readout-string probabilities $S_n=p(0^n)$ and $T_n=p(1^n)$ (SM Sec.~\ref{app:invariants}) correspond to nested events in the length-$L$ outcome space.
For example, for $m\ge n\ge 1$, the event ``outcome begins with $0^m$'' is a subset of ``outcome begins with $0^n$'', hence $A_m\subseteq A_n$ and $A_m\cap A_n=A_m$.
Equation~\eqref{eq:stats-marginal-cov} therefore yields the closed form
\begin{equation}
    \label{eq:stats-nested-prefix-cov}
    \mathbb{E}\!\Big[\big(\widehat{p}(0^n)-S_n\big)\big(\widehat{p}(0^m)-S_m\big)\Big]
    \;=\;
    \frac{1}{N}\,\Bigl(S_m - S_n S_m\Bigr),
    \qquad m\ge n\ge 1,
\end{equation}
and analogously for the $T_n$ sequence.
These correlations should be retained when propagating uncertainty to invariants and derived parameters.

\paragraph{Propagation to invariants and reconstructed parameters.}
\label{subsec:stats-uncertainty}

Many quantities reported in this work are obtained by applying a smooth map to a finite set of estimated probabilities $\widehat{p}(w)$ and/or marginals obtained by summing over later readouts.
A generic first-order propagation rule follows from a multivariate Taylor expansion and Lemma~\ref{lem:stats-frequency-second-moments}.

\begin{proposition}[First-order uncertainty propagation for smooth functionals]
    \label{prop:stats-first-order-propagation}
    Let $W=\{w_1,\dots,w_d\}$ be a finite set of readout bit strings (possibly from multiple circuit families), and let $f:\mathbb{R}^d\to\mathbb{R}$ be continuously differentiable in a neighborhood of the true probability vector $\bigl(p(w_1),\dots,p(w_d)\bigr)$.

    Assume that the vector $\bigl(\widehat{p}(w_1),\dots,\widehat{p}(w_d)\bigr)$ is obtained from a single multinomial experiment with $N$ shots as in Assumption~\ref{assump:stats-multinomial}.
    Then, to first order in $1/N$,
    \begin{align}
        \label{eq:stats-first-order-var}
         & \mathbb{E}\!\Big[\big(f\bigl(\widehat{p}(w_1),\dots,\widehat{p}(w_d)\bigr)-f\bigl(p(w_1),\dots,p(w_d)\bigr)\big)^2\Big] \nonumber \\
         & \approx\;
        \sum_{i,j=1}^{d}
        \frac{\partial f}{\partial p(w_i)}\Bigl(p(w_1),\dots,p(w_d)\Bigr)\,
        \frac{\partial f}{\partial p(w_j)}\Bigl(p(w_1),\dots,p(w_d)\Bigr)\,
        \mathbb{E}\!\Big[\big(\widehat{p}(w_i)-p(w_i)\big)\big(\widehat{p}(w_j)-p(w_j)\big)\Big],
    \end{align}
    where the second moments on the right-hand side are given explicitly by \eqref{eq:stats-cov-cases}.

    If the probabilities entering $f$ are estimated from multiple independent circuit families, the same formula holds with the second moments taken blockwise (i.e., cross-family second moments are zero under independence).
\end{proposition}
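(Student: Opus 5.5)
The plan is the standard delta method, specialized to the multinomial covariance of Lemma~\ref{lem:stats-frequency-second-moments}. Write $\bm{p}=(p(w_1),\dots,p(w_d))$ and $\widehat{\bm{p}}=(\widehat{p}(w_1),\dots,\widehat{p}(w_d))$, and set $\bm{\xi}\coloneqq\widehat{\bm{p}}-\bm{p}$.

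First, control the size of $\bm{\xi}$. By \eqref{eq:stats-mean} it has mean zero. By \eqref{eq:stats-cov-cases} each component has variance at most $1/(4N)$, so $\mathbb{E}\|\bm{\xi}\|^2=O(1/N)$. Chebyshev's inequality (or Hoeffding, since each $\widehat{p}(w_i)$ is an average of $N$ bounded i.i.d.\ indicators) then shows that $\widehat{\bm{p}}$ lies in any fixed neighborhood of $\bm{p}$ except on an event whose probability vanishes as $N\to\infty$. On that neighborhood $f$ is $C^1$.

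Second, expand $f$ there using the mean-value form of Taylor's theorem:
\[
f(\widehat{\bm{p}})-f(\bm{p})=\nabla f(\bm{p})^{\mathsf{T}}\bm{\xi}+r(\bm{\xi}),\qquad r(\bm{\xi})=\bigl(\nabla f(\bm{p}+\theta\bm{\xi})-\nabla f(\bm{p})\bigr)^{\mathsf{T}}\bm{\xi}.
\]
By continuity of $\nabla f$, $r(\bm{\xi})=o(\|\bm{\xi}\|)$. Squaring and taking expectations gives
\[
\mathbb{E}\bigl[(f(\widehat{\bm{p}})-f(\bm{p}))^2\bigr]=\nabla f^{\mathsf{T}}\,\mathbb{E}[\bm{\xi}\bm{\xi}^{\mathsf{T}}]\,\nabla f+\text{(remainder)}.
\]
The leading term is exactly \eqref{eq:stats-first-order-var}, with the second moments supplied entrywise by \eqref{eq:stats-cov-cases}. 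If some of the $w_i$ are marginal events rather than full strings, \eqref{eq:stats-marginal-cov} supplies them instead.

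Third, show the remainder is $o(1/N)$ on the good event. The cross term is bounded by Cauchy--Schwarz, $2\,\mathbb{E}|\nabla f^{\mathsf{T}}\bm{\xi}\,r|\le 2\sqrt{\mathbb{E}(\nabla f^{\mathsf{T}}\bm{\xi})^2}\,\sqrt{\mathbb{E}r^2}$. The $r^2$ term uses $r^2\le\epsilon^2\|\bm{\xi}\|^2$ on the neighborhood, so it is at most $\epsilon^2\cdot O(1/N)$. Since $\epsilon$ can be made arbitrarily small by shrinking the neighborhood, the remainder is $o(1/N)$ there.

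Fourth, handle the bad event, which is the main obstacle. $f$ is only assumed $C^1$ near $\bm{p}$, so $f(\widehat{\bm{p}})$ may be unbounded far away. For example, the ratios in Corollary~\ref{cor:tr-det-from-length3} blow up when the denominator $S_1^2-S_2$ vanishes. The approach is to read ``$\approx$ to first order'' as a statement about the asymptotic variance, i.e.\ convergence in distribution. The multinomial central limit theorem gives $\sqrt{N}\,\bm{\xi}\Rightarrow\mathcal{N}(0,N\,\mathrm{Cov}(\bm{\xi}))$, and the delta method then gives $\sqrt{N}(f(\widehat{\bm{p}})-f(\bm{p}))\Rightarrow\mathcal{N}(0,\nabla f^{\mathsf{T}}N\,\mathrm{Cov}(\bm{\xi})\nabla f)$. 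This avoids integrability of $f$ entirely. Alternatively, for a literal moment statement, assume $f$ is bounded on the simplex, or equivalently work with $f$ truncated off the good event, whose probability is exponentially small by Hoeffding.

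Finally, for several independent circuit families, stack the probability vectors. Independence of the families makes the joint covariance block-diagonal, so the cross-family terms in the quadratic form vanish. The same argument then applies verbatim, which gives the blockwise claim.
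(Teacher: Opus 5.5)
Your proposal is correct and is essentially the paper's argument. The paper gives no separate proof, only the remark that the rule follows from a multivariate Taylor expansion combined with the multinomial second moments of Lemma~\ref{lem:stats-frequency-second-moments}, and your delta-method expansion (with block-diagonal covariance for independent families) is exactly that argument made rigorous. You also control the remainder and the event where $\widehat{\bm p}$ leaves the $C^1$ neighborhood: a literal moment statement needs $f$ bounded or truncated, and otherwise the claim should be read as an asymptotic-variance statement. The paper leaves that point implicit, so your treatment goes slightly beyond it.
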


\paragraph{Application to the $\Tr/\det$ estimators and feasibility boundaries.}
\label{rem:stats-trdet-feasibility}
For the $\Tr/\det$ estimators in Corollary~\ref{cor:tr-det-from-length3}, the map $f$ is a rational function of the three initial-string marginals $(p(0),p(00),p(000))$ (and similarly for the all-one strings).
Equation~\eqref{eq:stats-first-order-var} applies directly by treating $(\widehat{p}(0),\widehat{p}(00),\widehat{p}(000))$ as the random input vector and using the nested-event covariance formulas \eqref{eq:stats-nested-prefix-cov}.

In particular, writing $U=S_1S_2-S_3$ and $D=S_1^2-S_2$, the estimator of $\Tr(M^{(0)})$ is $U/D$.
A direct differentiation gives
\begin{equation}
    \label{eq:stats-derivatives-tr}
    \frac{\partial}{\partial S_1}\!\left(\frac{U}{D}\right)
    =
    \frac{S_2 D - 2S_1 U}{D^2},
    \qquad
    \frac{\partial}{\partial S_2}\!\left(\frac{U}{D}\right)
    =
    \frac{S_1 D + U}{D^2},
    \qquad
    \frac{\partial}{\partial S_3}\!\left(\frac{U}{D}\right)
    =
    -\frac{1}{D},
\end{equation}
which may be substituted into \eqref{eq:stats-first-order-var}.
The appearance of $D^{-1}$ and $D^{-2}$ makes explicit the instability mechanism discussed in Subsec.~\ref{subsec:invariants-failure-modes}.

More generally, feasibility boundaries for the gauge set $\mathcal{T}_\mathrm{phys}$ are defined by root conditions of (nonnegative) entrywise constraints after gauge transformation (SM Sec.~\ref{app:physicality}).
When these boundary points are smooth functions of a finite set of probabilities, one may apply Proposition~\ref{prop:stats-first-order-propagation} locally; in practice, bootstrap resampling is often simpler and more robust within SM Sec.~\ref{app:stats}.

\paragraph{Bootstrap workflow for uncertainty bands.}
\label{subsec:stats-bootstrap}

In the regimes of interest here, quantities of practical relevance are often \emph{nonlinear} functions of the readout-string distribution (ratios, roots, interval endpoints, and gauge extrema).
A bootstrap approach avoids fragile analytic linearization, handles disconnected physically allowed gauge sets, and naturally integrates with gauge-band reporting.

\begin{protocol}[Bootstrap procedure for uncertainty bands]
    \label{prot:stats-bootstrap-uncertainty-bands}
    Given raw shot counts for the circuit families used to construct a measured readout-string distribution, proceed as follows.

    \begin{enumerate}[label=(\arabic*), leftmargin=2.2em]
        \item Form the empirical readout-string distribution $\widehat{p}$ by normalizing counts.

        \item Generate many resampled readout-string distributions by drawing synthetic counts from the multinomial model with parameters $(N,\widehat{p})$ for each circuit family independently, and re-normalizing to obtain a resampled distribution.

        \item For each resampled distribution, rerun the full post-processing pipeline used in the main analysis by computing invariants, reconstructing representative candidates (SM Sec.~\ref{app:reconstruction}), and determining the physically allowed gauge set $\mathcal{T}_\mathrm{phys}$ (SM Sec.~\ref{app:physicality}).

        \item For each derived quantity of interest that is gauge-dependent, compute its gauge band over $t\in\mathcal{T}_\mathrm{phys}$ by gauge-transforming the learned MCM model.
        Externally characterized inputs, final-state marginalizations, or interleaved maps are held fixed, so any result using them is conditional on those external characterizations (Remark~\ref{rem:recommended-reporting-convention}).

        \item Summarize the resulting empirical distribution of bands (e.g., by reporting percentiles), and report the fraction of resamples in which feasibility fails as a diagnostic of statistical proximity to the physicality boundary.
    \end{enumerate}
\end{protocol}

\paragraph{Statistical versus model uncertainty.}
\label{rem:stats-stat-vs-model}
Bootstrap bands quantify \emph{finite-shot} uncertainty under Assumption~\ref{assump:stats-multinomial}.
If repeated independent data acquisitions produce band variations significantly larger than bootstrap predictions, this indicates model violations (e.g., drift, non-stationarity, or context dependence) rather than shot noise.
In that setting, longer-readout-string overcomplete constraints (Proposition~\ref{prop:overcomplete-long-string-constraints}) provide a convenient internal diagnostic and can guide targeted data collection to isolate the dominant violation.

\section{Experimental implementation and validation}
\label{app:ibm}
\label{app:validation}

The IBM Quantum experiments use the hardware implementation, independent validation task, Fig.~\ref{fig:results}(b) prediction models, and model-violation diagnostics specified below.

\subsection{Experimental setup, selected qubits, and execution metadata}
\label{subsec:ibm-devices}

We run on three Heron-family IBM Quantum processors.
They are \texttt{ibm\_kingston} (Heron r2) and \texttt{ibm\_pittsburgh}, \texttt{ibm\_boston} (Heron r3), each on 20 selected qubits out of a total of 156 transmon qubits.
These devices employ a heavy-hex lattice with tunable couplers~\cite{bravyiHighthresholdLowoverheadFaulttolerant2024}.
For each device we choose a representative set of qubits subject to one constraint.
No two selected qubits are nearest neighbors on the coupling graph.
This separation suppresses measurement-induced crosstalk between simultaneously characterized qubits and keeps the effective single-qubit reduction consistent with Assumption~\ref{ass:stationarity-markovianity}(iv).
Within that constraint, the selected qubits span a range of backend-reported readout-error benchmarks so the evaluation does not collapse onto a narrow high-fidelity subset.

\paragraph{Execution, transpilation, and recorded metadata.}
\label{subsec:ibm-transpilation}
All circuits are submitted through the IBM Qiskit Runtime service.
We transpile at optimization level~0, enable no dynamical decoupling or error-suppression options, and use only single-qubit circuits, so no SWAP routing is required and the transpiled circuit matches the logical circuit one-to-one.
The native Heron basis used by the transpiler contains $\{SX, RZ, X, \mathrm{measure}\}$.
On each device, all selected qubits are batched into the same jobs and measured in parallel.
The learning stage uses four jobs per device (one for each single-qubit Pauli input randomization setting), and the independent validation stage uses four more, for a total of eight jobs per device independent of the number of selected qubits.
For every job we archive backend revision, submission/completion timestamps, software versions, transpiler settings, and the backend-reported benchmark values at submission time.
Those archived benchmark values provide the Pauli-$X$ error parameter used in the validation predictors below.
Because all jobs for a given device are acquired within one experimental session (typically under one hour), short-term drift is expected to be subleading to the shot-noise scale at $N=10^6$; the diagnostic tests in this section serve as the post-hoc check of that assumption.

\subsection{Learning and validation circuit families}
\label{subsec:ibm-circuits}

For each selected qubit, the learning protocol requires a measured readout-string distribution constructed from repeated MCM chains of length $L=3$ (SM Sec.~\ref{app:invariants}, Corollary~\ref{cor:tr-det-from-length3}).
All circuits act on a single qubit and consist exclusively of single-qubit gates and mid-circuit measurements.
No entangling gates or ancilla qubits are used.

\emph{Learning circuits.}---The learning circuit family consists of four circuits, one for each Pauli gate $P \in \{I, X, Y, Z\}$ (Definition~\ref{def:pauli-input-randomization-depolarizing-map}):

\begin{enumerate}[label=(\arabic*), leftmargin=2.2em]
    \item Apply gate $P$ to the initial state $\ket{0}$ (prepared by the hardware reset).
    \item Apply $L = 3$ consecutive mid-circuit measurements on the same qubit.
    \item Record the full length-3 readout string $w = w_1 w_2 w_3 \in \{0,1\}^3$.
\end{enumerate}

\noindent Each of the four circuits is executed with $N = 10^6$ shots, yielding a total of $4 \times 10^6$ shots per qubit.
The four circuits implement uniform single-qubit Pauli input randomization.
The effective input state, after averaging uniformly over the four Pauli preparations, is the maximally mixed state $I/2$ with population vector $\bm{\pi} = \tfrac{1}{2}\bm{1}$ (Lemma~\ref{lem:pauli-randomization-maximally-mixed}).
This satisfies the input condition used in the algebraic reconstruction (SM Sec.~\ref{app:reconstruction}).
After compilation onto IBM hardware, the $Z$-twirling layers associated with the learned instrument are effectively implemented by inserting a Pauli-$Z$ gate immediately before each MCM with probability $1/2$.
For MCM positions $\ell=1,\dots,L$, these Bernoulli choices are sampled independently across positions.
Algebraically, the intermediate post-MCM $Z$ layer from one twirled instrument combines with the next pre-MCM $Z$ layer into a single independently uniform pre-MCM $Z$ insertion, while the final post-MCM $Z$ layer is invisible after marginalizing the unobserved final state.

\paragraph{Why length-3 chains suffice.}
\label{rem:ibm-length-3-sufficient}
Corollary~\ref{cor:tr-det-from-length3} establishes that, under the maximally mixed input and provided $S_1^2\neq S_2$ and $T_1^2\neq T_2$, the length-$3$ readout-string distribution determines $p(0)=S_1$ together with $\Tr(M^{(0)})$, $\det(M^{(0)})$, $\Tr(M^{(1)})$, and $\det(M^{(1)})$.
Under the full-span assumptions of Theorem~\ref{thm:one-dof-nonidentifiability}, these five scalars determine the instrument up to the residual one-parameter gauge.
The all-zero/all-one readout-string probabilities $S_n = p(0^n)$ and $T_n = p(1^n)$ for $n = 1, 2, 3$ are obtained as marginals of the length-$3$ outcome distribution.
No circuits with $L > 3$ are required for the learning step.
The $X$-interleaved length-$10$ circuits below provide independent validation data, while optional longer bare MCM chains are used only for recurrence-based self-consistency diagnostics.

\emph{Validation circuits.}---To assess the predictive accuracy of the learned instrument, we additionally execute validation circuits consisting of $L_{\mathrm{val}} = 10$ mid-circuit measurements, each preceded by an $X$ gate.
\begin{equation}
    \label{eq:ibm-validation-circuit}
    \ket{0} \;\xrightarrow{\;\text{noisy prep}\;}\;
    \bigl[\, X \to \mathrm{MCM}\,\bigr]^{L_{\mathrm{val}}},
\end{equation}
producing a length-$L_{\mathrm{val}}$ readout string $w \in \{0,1\}^{L_{\mathrm{val}}}$.
Before each MCM, an $X$ gate flips the qubit population, so successive measurements alternate between probing the $\ket{1}$-like and $\ket{0}$-like states.
This circuit family is structurally different from the learning circuits (which use pure MCM chains without interleaving gates) and probes a qualitatively different regime.
The repeated population flips amplify backaction-rate asymmetries and make the outcome distribution highly sensitive to the distinction between $\eta_\uparrow$ and $\eta_\downarrow$.
The validation circuits thus test whether the learned model predicts a circuit family that was not used during learning.

For each selected qubit, the validation circuit is executed with $N = 10^6$ shots from the noisy hardware $\ket{0}$ preparation.
The validation predictors use the same inferred state-preparation error $\epsilon_{\mathrm{SP}}$, the same backend-reported Pauli-$X$ error parameter $p_X$, and do not refit the MCM instrument to the validation data.
The precise prediction rules, including the Pauli-$X$ error parameter, are specified below.

\paragraph{Separation of learning and validation data.}
\label{rem:ibm-learning-validation-separation}
The learning and validation circuit families share no circuit structures.
Learning uses bare MCM chains, while validation uses $X$-interleaved chains.
The Fig.~\ref{fig:results}(b) prediction-accuracy comparison therefore tests prediction on circuits not used for learning, rather than agreement with the learning data.

\subsection{Validation metrics, Fig.~\ref{fig:results}(b) predictors, and diagnostics}
\label{subsec:validation-metrics}

We quantify predictive accuracy on the independent length-$L_{\mathrm{val}}=10$ $X$-interleaved validation circuits of SM Sec.~\ref{subsec:ibm-circuits} in two complementary ways.
The main-text Fig.~\ref{fig:results}(b) uses a depth-resolved Pauli-$Z$ observable metric, while we also retain a full length-$10$ string-distribution comparison as an additional diagnostic.

\paragraph{Predictors used in Fig.~\ref{fig:results}(b).}
\label{subsec:validation-fig2-predictors}
For each characterized qubit, both Fig.~\ref{fig:results}(b) predictors use the same noisy initial population and the same backend-reported Pauli-$X$ error parameter $p_X$.
\begin{equation}
    \label{eq:validation-initial-x-map}
    \bm{\pi}_0
    =
    \begin{pmatrix}
        1-\epsilon_{\mathrm{SP}} \\
        \epsilon_{\mathrm{SP}}
    \end{pmatrix},
    \qquad
    X_{p_X}
    =
    (1-p_X)
    \begin{pmatrix}
        0 & 1 \\
        1 & 0
    \end{pmatrix}
    +
    p_X I.
\end{equation}
Here $\epsilon_{\mathrm{SP}}$ is the inferred noisy-$\ket{0}$ preparation error defined by Eq.~\eqref{eq:epsilon-sp-from-rz} under the mixture convention \eqref{eq:spam-epsilon-sp-mixture}, and $I$ is the $2\times2$ identity.
All gauge-dependent quantities used for the plotted point predictors are evaluated with the visualization-only midpoint convention of Eq.~\eqref{eq:q-mid-def}; the comparison therefore isolates the MCM model used after fixing the same representative learned rates.

\emph{Dashed-line full-instrument predictor.}---The dashed curves use the reconstructed MCM matrix pair $(M^{(0)},M^{(1)})$, so readout confusion, population backaction, and their outcome-conditioned correlations remain separate.
For the first $\ell$ validation readouts $w_{1:\ell}=w_1\cdots w_\ell$, define the unnormalized conditional population vector by
\begin{align}
    \label{eq:validation-full-branch-update}
    \bm{u}^{\mathrm{full}}_0
    &\coloneqq
    \bm{\pi}_0, \nonumber \\
    \bm{u}^{\mathrm{full}}_{\ell}(w_{1:\ell})
    &\coloneqq
    M^{(w_\ell)} X_{p_X}\,
    \bm{u}^{\mathrm{full}}_{\ell-1}(w_{1:\ell-1}),
    \qquad
    \ell=1,\dots,L_{\mathrm{val}}, \nonumber \\
    p_{\mathrm{full}}(w)
    &\coloneqq
    \bm{1}^{\mathsf{T}}\bm{u}^{\mathrm{full}}_{L_{\mathrm{val}}}(w).
\end{align}
Equivalently, the per-depth marginal for readout bit $0$ needed to form the Pauli-$Z$ observable in Fig.~\ref{fig:results}(b) can be propagated without enumerating all readout histories as
\begin{align}
    \label{eq:validation-full-marginal-update}
    \bm{v}^{\mathrm{full}}_0
    &\coloneqq
    \bm{\pi}_0, \nonumber \\
    \bm{v}^{\mathrm{full},-}_{\ell}
    &\coloneqq
    X_{p_X}\bm{v}^{\mathrm{full}}_{\ell-1}, \nonumber \\
    p^{\mathrm{full}}_{\ell}(0)
    &\coloneqq
    \bm{1}^{\mathsf{T}}M^{(0)}\bm{v}^{\mathrm{full},-}_{\ell}, \nonumber \\
    \bm{v}^{\mathrm{full}}_{\ell}
    &\coloneqq
    \bigl(M^{(0)}+M^{(1)}\bigr)\bm{v}^{\mathrm{full},-}_{\ell}.
\end{align}

\emph{Solid-line confusion-matrix-based predictor.}---The solid curves use a confusion-matrix-based MCM predictor constructed from the same learned point estimates, but it collapses same-step backaction into effective readout errors and then removes any state-update dynamics.
With the learned rates of Definition~\ref{def:readout-backaction-error-rates}, define
\begin{align}
    \label{eq:validation-spam-collapse}
    \bar{\epsilon}_{1\mid 0}^{\mathrm{cm}}
    &\coloneqq
    (1-\eta_{\uparrow})\,\epsilon_{1\mid 0}
    +
    \eta_{\uparrow}\,(1-\epsilon_{0\mid 1}),
    \nonumber \\
    \bar{\epsilon}_{0\mid 1}^{\mathrm{cm}}
    &\coloneqq
    (1-\eta_{\downarrow})\,\epsilon_{0\mid 1}
    +
    \eta_{\downarrow}\,(1-\epsilon_{1\mid 0}).
\end{align}
These rates are then inserted into diagonal observation matrices
\begin{align}
    \label{eq:validation-solid-matrices}
    M^{(0)}_{\mathrm{solid}}
    &=
    \begin{pmatrix}
        1-\bar{\epsilon}_{1\mid 0}^{\mathrm{cm}} & 0 \\
        0 & \bar{\epsilon}_{0\mid 1}^{\mathrm{cm}}
    \end{pmatrix}, \nonumber \\
    M^{(1)}_{\mathrm{solid}}
    &=
    \begin{pmatrix}
        \bar{\epsilon}_{1\mid 0}^{\mathrm{cm}} & 0 \\
        0 & 1-\bar{\epsilon}_{0\mid 1}^{\mathrm{cm}}
    \end{pmatrix}.
\end{align}
Thus $M^{(0)}_{\mathrm{solid}}+M^{(1)}_{\mathrm{solid}}=I$.
This predictor can misinterpret the readout, but after marginalizing over the readout it does not change the quantum population acted on by the MCM.
Full-string probabilities are nevertheless computed with the same observation-likelihood matrix product,
\begin{align}
    \label{eq:validation-solid-branch-update}
    \bm{u}^{\mathrm{solid}}_0
    &\coloneqq
    \bm{\pi}_0, \nonumber \\
    \bm{u}^{\mathrm{solid}}_{\ell}(w_{1:\ell})
    &\coloneqq
    M^{(w_\ell)}_{\mathrm{solid}} X_{p_X}\,
    \bm{u}^{\mathrm{solid}}_{\ell-1}(w_{1:\ell-1}),
    \qquad
    \ell=1,\dots,L_{\mathrm{val}}, \nonumber \\
    p_{\mathrm{solid}}(w)
    &\coloneqq
    \bm{1}^{\mathsf{T}}\bm{u}^{\mathrm{solid}}_{L_{\mathrm{val}}}(w),
\end{align}
and the corresponding per-depth marginal update is
\begin{align}
    \label{eq:validation-solid-marginal-update}
    \bm{v}^{\mathrm{solid}}_0
    &\coloneqq
    \bm{\pi}_0, \nonumber \\
    \bm{v}^{\mathrm{solid},-}_{\ell}
    &\coloneqq
    X_{p_X}\bm{v}^{\mathrm{solid}}_{\ell-1}, \nonumber \\
    p^{\mathrm{solid}}_{\ell}(0)
    &\coloneqq
    \bm{1}^{\mathsf{T}}M^{(0)}_{\mathrm{solid}}\bm{v}^{\mathrm{solid},-}_{\ell}, \nonumber \\
    \bm{v}^{\mathrm{solid}}_{\ell}
    &\coloneqq
    \bigl(M^{(0)}_{\mathrm{solid}}+M^{(1)}_{\mathrm{solid}}\bigr)
    \bm{v}^{\mathrm{solid},-}_{\ell}
    =
    \bm{v}^{\mathrm{solid},-}_{\ell}.
\end{align}

\paragraph{Metric plotted in Fig.~\ref{fig:results}(b): depth-resolved Pauli-$Z$ absolute deviation.}
To obtain the Pauli-$Z$ observable plotted in Fig.~\ref{fig:results}(b), we convert the length-$10$ validation readout-string distributions into marginals over the first $\ell$ readouts, then convert the binary marginal for readout bit $0$ into $z_\ell=2p_\ell(0)-1$.
For each depth $\ell\in\{1,\dots,L_{\mathrm{val}}\}$, define
\begin{equation}
    \label{eq:validation-marginal-zero}
    \widehat{p}_{\mathrm{val},\ell}(0)
    \;\coloneqq\;
    \sum_{u\in\{0,1\}^{\ell-1}} \widehat{p}_{\mathrm{val}}(u0),
    \qquad
    p_{\mathrm{full},\ell}(0)
    \;\coloneqq\;
    \sum_{u\in\{0,1\}^{\ell-1}} p_{\mathrm{full}}(u0),
    \qquad
    p_{\mathrm{solid},\ell}(0)
    \;\coloneqq\;
    \sum_{u\in\{0,1\}^{\ell-1}} p_{\mathrm{solid}}(u0),
\end{equation}
where $u0$ denotes a length-$\ell$ initial string ending in readout bit $0$, with the corresponding marginals defined in SM Sec.~\ref{app:prob-dict}.
The corresponding Pauli-$Z$ observables are
\begin{equation}
    \label{eq:validation-z-observable}
    \widehat{z}_{\mathrm{val},\ell}
    \;\coloneqq\;
    2\widehat{p}_{\mathrm{val},\ell}(0)-1,
    \qquad
    z_{\mathrm{full},\ell}
    \;\coloneqq\;
    2p_{\mathrm{full},\ell}(0)-1,
    \qquad
    z_{\mathrm{solid},\ell}
    \;\coloneqq\;
    2p_{\mathrm{solid},\ell}(0)-1.
\end{equation}

For a device with selected qubit set $Q_{\mathrm{dev}}$ (of size $20$ in the experiments reported here), the plotted dashed-line quantity is the average absolute deviation
\begin{equation}
    \label{eq:validation-aad-full}
    \Delta_{\mathrm{full}}(\ell)
    \;\coloneqq\;
    \frac{1}{|Q_{\mathrm{dev}}|}
    \sum_{q\in Q_{\mathrm{dev}}}
    \Bigl|
    \widehat{z}^{(q)}_{\mathrm{val},\ell}
    -
    z^{(q)}_{\mathrm{full},\ell}
    \Bigr|,
\end{equation}
and the plotted solid-line quantity is
\begin{equation}
    \label{eq:validation-aad-solid}
    \Delta_{\mathrm{solid}}(\ell)
    \;\coloneqq\;
    \frac{1}{|Q_{\mathrm{dev}}|}
    \sum_{q\in Q_{\mathrm{dev}}}
    \Bigl|
    \widehat{z}^{(q)}_{\mathrm{val},\ell}
    -
    z^{(q)}_{\mathrm{solid},\ell}
    \Bigr|.
\end{equation}
Because $z_\ell=2p_\ell(0)-1$ for binary readout, these Pauli-$Z$ deviations are exactly twice the corresponding absolute deviations in the marginals for readout bit $0$.
By executing $N = 10^6$ shots per circuit, the finite-shot floor for these observable deviations is well below the observed $10^{-2}$ scale, so the values plotted in the main text are dominated by residual model mismatch rather than shot noise.

\paragraph{Additional whole-distribution diagnostic.}
As a complementary check, we also compare the empirical and predicted full length-$10$ string distributions using the total variation (TV) distance, or equivalently half the $L_1$ norm.
\begin{equation}
    \label{eq:validation-tv}
    \mathcal{D}_{\mathrm{TV}}\bigl(\widehat{p}_{\mathrm{val}}, p_{\mathrm{full}}\bigr)
    = \frac{1}{2} \sum_{w \in \{0,1\}^{10}} \bigl| \widehat{p}_{\mathrm{val}}(w) - p_{\mathrm{full}}(w) \bigr|.
\end{equation}
The same diagnostic is also computed with $p_{\mathrm{solid}}$ when comparing against the confusion-matrix-based predictor.
By executing $N = 10^6$ shots per circuit, the finite-shot statistical floor for the TV distance is well below $10^{-3}$ in the present datasets, so supplementary TV-distance diagnostics at the $10^{-2}$ scale are dominated by residual physical model mismatch rather than shot noise.

\paragraph{Interpretation of the solid-line confusion-matrix-based baseline.}
\label{subsec:validation-baselines}

The solid-line predictor is a confusion-matrix-based baseline with no MCM state-update dynamics.
In dynamical terms it uses only diagonal readout-confusion matrices and assumes no post-measurement population update after the MCM.
It is not constructed from the IBM backend's terminal-readout confusion matrix; instead, the effective readout errors in \eqref{eq:validation-spam-collapse} are built from the learned MCM readout/backaction rates, while both predictors use the same $\epsilon_{\mathrm{SP}}$ in the initial population.
Consequently, the separation between the solid and dashed curves in Fig.~\ref{fig:results}(b) should be interpreted as the predictive benefit of retaining the learned MCM state-update/backaction structure, not as a comparison against backend-reported readout benchmark data.
The solid model can reproduce a same-step readout asymmetry, but because $M^{(0)}_{\mathrm{solid}}+M^{(1)}_{\mathrm{solid}}=I$, it cannot accumulate the repeated population disturbance that is present in the full learned instrument.

\paragraph{Diagnostics for model violations.}
\label{subsec:validation-diagnostics}

The learning protocol assumes stationarity, Markovianity, and the absence of leakage (Assumption~\ref{ass:stationarity-markovianity}).
When extracting the readout-string distribution from longer length-$L$ chains (where $L > 3$), the exact solution from $L=3$ (Corollary~\ref{cor:tr-det-from-length3}) provides a starting point to empirically detect violations of these assumptions.

\begin{proposition}[Consistency diagnostics via overcomplete relations]
    \label{prop:consistency-diagnostics}
    Let $\widehat{p}$ be an empirical readout-string distribution measured for bare MCM chains of length $L > 3$.
    If the underlying process perfectly satisfies the two-state structural assumptions (Definition~\ref{def:z-twirled-mcm-instrument}), then the following statements hold.
    \begin{enumerate}[label=(\roman*), leftmargin=2.2em]
        \item \textbf{Cayley--Hamilton recurrence tightness.}
        The all-zero/all-one readout-string probabilities must satisfy the overcomplete recurrences of Proposition~\ref{prop:overcomplete-long-string-constraints}; explicitly, for every $n=0,\dots,L-2$,
              \begin{align}
                  \label{eq:diagnostic-ch-recurrence}
                  S_{n+2} - \Tr\!\big(M^{(0)}\big)S_{n+1} + \det\!\big(M^{(0)}\big)S_n &= 0, \nonumber \\
                  T_{n+2} - \Tr\!\big(M^{(1)}\big)T_{n+1} + \det\!\big(M^{(1)}\big)T_n &= 0.
              \end{align}
              Here the trace and determinant invariants are those extracted from the length-$3$ readout-string distribution (or from an equivalent overdetermined fit).
              Empirical residuals that exceed finite-shot uncertainty are evidence for drift, leakage, non-Markovianity, or another violation of the base model.
        \item \textbf{Hankel matrix rank.}
              Any finite Hankel matrix $H_{u,v}=p(uv)$ formed from the bare-MCM string probabilities must have rank at most $2$.
              A statistically stable third singular value beyond bootstrap uncertainty is evidence that the observed process requires an effective classical state dimension greater than two, as would occur when population leaves the computational subspace or when other unmodeled degrees of freedom affect the measurement record.
    \end{enumerate}
\end{proposition}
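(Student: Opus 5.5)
Part (i) is a direct consequence of earlier results. Under the two-state model, Lemma~\ref{lem:matrix-product-string-probabilities} gives $S_n=\bm{1}^{\mathsf{T}}(M^{(0)})^n\bm{\pi}_{\mathrm{mix}}$ and $T_n=\bm{1}^{\mathsf{T}}(M^{(1)})^n\bm{\pi}_{\mathrm{mix}}$ for every $n\le L$. Here $S_n$ is obtained from the length-$L$ distribution by summing over later readouts via Eq.~\eqref{eq:prefix-marginal}. I would check that this marginalization is consistent, which holds because $\bm{1}^{\mathsf{T}}(M^{(0)}+M^{(1)})=\bm{1}^{\mathsf{T}}$: summing a later bit acts as the identity on the functional $\bm{1}^{\mathsf{T}}$. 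Proposition~\ref{prop:ch-second-order-recurrence}, equivalently Proposition~\ref{prop:overcomplete-long-string-constraints}, then gives the vanishing of Eq.~\eqref{eq:diagnostic-ch-recurrence} for all $n\le L-2$. The trace and determinant values are the true ones whenever they come from Corollary~\ref{cor:tr-det-from-length3} applied to exact length-$3$ data, under its nondegeneracy hypotheses. The interpretive sentence about residuals is just the contrapositive, so nothing further needs proving.

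For part (ii), I would use the standard factorization argument for Hankel matrices. For strings $u,v$, the concatenation $uv$ places $u$ first chronologically, so $p(uv)=\bm{1}^{\mathsf{T}}M^{(v)}M^{(u)}\bm{\pi}_{\mathrm{mix}}$, with $M^{(w)}$ defined by Eq.~\eqref{eq:mw-def}. I would define a row vector $\bm{r}_v^{\mathsf{T}}\coloneqq\bm{1}^{\mathsf{T}}M^{(v)}\in\mathbb{R}^{1\times2}$ and a column vector $\bm{c}_u\coloneqq M^{(u)}\bm{\pi}_{\mathrm{mix}}\in\mathbb{R}^{2}$, so that $p(uv)=\bm{r}_v^{\mathsf{T}}\bm{c}_u$. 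Stacking these gives $H=CR$ for finite index sets of prefixes $u$ and suffixes $v$ (rows indexed by $u$, columns by $v$). Here $C$ has rows $\bm{c}_u^{\mathsf{T}}$ and is $|U|\times2$, and $R$ has columns $\bm{r}_v$ and is $2\times|V|$. Hence $\operatorname{rank}H\le2$.

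The one point needing care is that prefixes and suffixes of different lengths correspond to different circuit lengths. Marginal consistency, which again follows from trace preservation as in part (i), makes $p(uv)$ well defined regardless of which length-$L$ experiment it is read from, provided $|uv|\le L$. The remark about a third singular value is again interpretive. A process with a larger effective hidden state admits only a factorization through a space of that larger dimension, so rank greater than $2$ is incompatible with the base model. No real obstacle is expected; the only subtle step is this consistency of marginals across lengths.
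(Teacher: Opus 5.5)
Your proof is correct and follows the same route the paper relies on. The paper gives no separate proof: it treats part (i) as a restatement of Proposition~\ref{prop:overcomplete-long-string-constraints}, and part (ii) as the rank-two factorization $p(uv)=\bm{1}^{\mathsf{T}}M^{(v)}M^{(u)}\bm{\pi}_{\mathrm{mix}}$ through the same forward vectors and backward functionals used in Theorem~\ref{thm:one-dof-nonidentifiability}, while your explicit check that trace preservation makes prefix marginals consistent across lengths (so $p(uv)$ is well defined for $|uv|\le L$) is a detail the paper leaves implicit.
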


\noindent In practice, these diagnostic relations are never exactly zero due to shot noise.
We leverage the bootstrap resampling pipeline (Protocol~\ref{prot:stats-bootstrap-uncertainty-bands}) to establish finite-shot confidence intervals for recurrence residuals and for third singular values of the chosen Hankel matrices.
If the empirically observed deviations strictly exceed the $99\%$ bootstrap confidence bands, we flag the qubit as exhibiting statistically resolvable model mismatch, consistent with non-Markovianity, leakage, or drift.
These diagnostics require no additional experimental data beyond the optional longer bare-MCM chains used for self-consistency checks.

\end{document}